\newtheorem{assumption}{Assumption}
\newtheorem{lemma}{Lemma}
\newtheorem{theorem}{Theorem}
\theoremstyle{definition}
\theoremstyle{remark}
\newtheorem*{remark}{Remark}
\newcommand{\etal}{\textit{et al}.}
\newcommand{\ie}{\textit{i}.\textit{e}.}
\newcommand{\eg}{\textit{e}.\textit{g}.}
\newcommand{\mli}[1]{\textit{#1}}
\begin{document}

\title{TODG: Distributed Task Offloading with Delay Guarantees for Edge Computing}

\author{Sheng~Yue,~Ju~Ren, Member, IEEE,~Nan~Qiao,~Yongmin~Zhang, Member, IEEE,~Hongbo~Jiang, Senior Member, IEEE,~Yaoxue~Zhang, Senior Member, IEEE,~and~Yuanyuan Yang, Fellow, IEEE
	\IEEEcompsocitemizethanks{\IEEEcompsocthanksitem Sheng Yue, Nan Qiao and Yongmin Zhang are with the School of Computer and Engineering, Central South University, Changsha,
		Hunan, 410083 China. E-mails: \{sheng.yue, csu\_qiaonan, zhangyongmin\}@csu.edu.cn.
		\IEEEcompsocthanksitem Ju Ren and Yaoxue Zhang are  with the Department of Computer Science and Technology, BNRist, Tsinghua University, Beijing, 100084 China. E-mails: \{renju, zhangyx\}@tsinghua.edu.cn.
		\IEEEcompsocthanksitem Hongbo Jiang is with the School of Information Science and Engineering, Hunan University, Changsha, Hunan, 410006, China. E-mail: hongbojiang@hnu.edu.cn.
		\IEEEcompsocthanksitem Yuanyuan Yang is with the Department of Electrical and Computer Engineering, Stony Brook University, Stony Brook, NY 11794-2350, USA. E-mail: yuanyuan.yang@stonybrook.edu.
		\IEEEcompsocthanksitem Corresponding Author: Ju Ren.
	}
}

\IEEEtitleabstractindextext{%
	\begin{abstract}
		Edge computing has been an efficient way to provide prompt and near-data computing services for resource-and-delay sensitive IoT applications via computation offloading. Effective computation offloading strategies need to comprehensively cope with several major issues, including 1) the allocation of dynamic communication and computational resources, 2) delay constraints of heterogeneous tasks, and 3) requirements for computationally inexpensive and distributed algorithms. However, most of the existing works mainly focus on part of these issues, which would not suffice to achieve expected performance in complex and practical scenarios. To tackle this challenge, in this paper, we systematically study a distributed computation offloading problem with delay constraints, where heterogeneous computational tasks require continually offloading to a set of edge servers via a limiting number of stochastic communication channels. The task offloading problem is formulated as a delay-constrained long-term stochastic optimization problem under unknown prior statistical knowledge. To solve this problem, we first provide a technical path to transform and decompose it into several slot-level sub-problems. Then, we devise a distributed online algorithm, namely TODG, to efficiently allocate resources and schedule offloading tasks. Further, we present a comprehensive analysis for TODG in terms of the optimality gap, the worst-case delay, and the impact of system parameters. Extensive simulation results demonstrate the effectiveness and efficiency of TODG.
	\end{abstract}
	
	\begin{IEEEkeywords}
		distributed task offloading, edge computing, delay guarantee, channel allocation, stochastic optimization
\end{IEEEkeywords}}

\maketitle

\IEEEdisplaynontitleabstractindextext

\IEEEpeerreviewmaketitle

\section{Introduction}

\IEEEPARstart{D}{ue} to the rapid development of wireless communications, mobile devices have become the information hub and accessing point to connect physical and cyber worlds. A large number of modern applications, such as activity recognition, interactive gaming, natural language processing, have been developed for mobile devices to provide intelligent and convenient services. However, these applications are usually computation-and-energy intensive and delay-sensitive. They are hardly executed on resource-constrained mobile devices and pose significant challenges in offloading them to the cloud with delay guarantees. To tackle these challenges, edge computing has been proposed as a promising solution to alleviate the computing burdens of mobile devices and reduce service delay~\cite{ren2019survey}. It can leverage the computing capabilities of devices/infrastructures in the proximity of data sources to provide pervasive, prompt, and agile services via computation offloading at anytime and anywhere \cite{abbas2017mobile}. 

Nevertheless, the design of efficient computation offloading strategies in edge computing is a non-trivial task. Different from conventional cloud computing, where a device only needs to decide whether to offload its tasks to a cloud center, in the case of edge computing, the increase of user devices may complicate the offloading decisions, caused by the contention for the insufficient computational resources of each edge server \cite{jovsilo2018decentralized}. If the computational resources of edge servers are not well coordinated for user devices, the performance would seriously degrade due to the overwhelmed offloading tasks. On the other hand, in edge computing, computation offloading must involve wireless communications between user devices and edge servers. The inherently limited and stochastic radio resources call for an effective radio resource allocation strategy; otherwise, the wireless network capacity may be quickly strained, causing low transmission efficiency and dissatisfaction with edge computing services \cite{yi2019multi}. Moreover, the delay guarantees of offloading tasks are essential for many applications, such as interactive gaming, object recognition, and rendering in smart driving \cite{hekmati2019optimal}. However, the stochasticity of communication channels and computing power of edge servers make the delay control extremely difficult, especially for heterogeneous tasks with different delay requirements. In addition, in contrast to the centralized and powerful cloud server, edge servers are deployed in a distributed manner, each of which is often resource-limited and heterogeneous. Thus, it is of critical importance to develop distributed and computationally efficient algorithms for task offloading in the context of edge computing.

Recently we have witnessed significant progress in developing novel approaches to address the challenges in task offloading. In particular, there have been several works on various aspects, including designing energy-efficient offloading strategies \cite{chen2016joint,kamoun2015joint,labidi2015joint,eshraghi2019joint}, jointly allocating communication and computation resources for performance improvement \cite{zhang2019near,xu2018offloading,zhu2019blot,wang2020multi,liu2020resource}, lowering the response latency \cite{mao2016dynamic,hekmati2019optimal,wang2020online}, and developing decentralized offloading methods \cite{chen2018computation, jovsilo2020computation, tang2020deep,chen2019collaborative,jovsilo2018decentralized,liu2020latency,you2016energy}. However, most of these existing works mainly aim at tackling part of the aforementioned issues by weakening other restrictions. Therefore, we argue that the strategy, comprehensively taking the above issues into account, is a requisite for achieving effectual computation offloading in edge computing.

To bridge the gap, this paper systematically study a distributed task offloading problem with delay constraints in edge computing, where heterogeneous computational tasks (with different sizes, required resources, and response times) require continually offloading to a set of edge servers with different computing capabilities via a limiting number of random channels. Accordingly, we formulate the offloading problem as a delay-constrained long-term stochastic optimization problem under unknown prior statistical distributions. Clearly, it is quite tough to solve this stochastic optimization problem because of the inherent complexity of continually scheduling a large number of heterogeneous tasks and jointly allocating the communication and computational resources. To address this challenge, we first provide an approach to transform and decompose the original problem into three sub-problems. Then we develop an online algorithm, called TODG, solving these sub-problems in a distributed manner. In particular, by the ``$\delta$-periodic strategy'', TODG only needs to allocate channels every $\delta$ time slots, which can alleviate the computational cost during the system operation. We also provide a comprehensive performance analysis of TODG. It is demonstrated that TODG can achieve a trade-off between the near-optimal solutions and the computational cost. Besides, we rigorously show that TODG can well satisfy the delay constraints and quantify the impact of the delay requirements and the task buffer sizes on the system utility.

Our main contributions can be summarized as follows.
\begin{itemize}
	\item To the best of our knowledge, we are the first to systematically consider a distributed task offloading and resource allocation strategy for heterogeneous computational tasks with delay guarantees. We formulate the offloading problem as a delay-constrained long-term stochastic optimization problem under unknown prior statistical knowledge about the random task arrivals and the channel states as well as the computing power on edge servers.
	\item We devise an online algorithm to solve the long-term stochastic optimization problem, namely TODG, which can be implemented in parallel among user devices and edge servers, and provide worst-case delay guarantees for all offloading tasks. In particular, we develop a \emph{$\delta$-periodic strategy}, enabling to carry out channel assignment every $\delta$ slots, which largely mitigates the computational cost and communication overhead induced by the complex computation in resource allocation.
	\item We present a comprehensive analysis of the proposed algorithm. We characterize the optimality gap and the response latency, and quantify the impact of system parameters on the performance in terms of the buffer sizes, delay requirements, and the period of the $\delta$-periodic strategy. 
	Further, we provide extensive simulation results to showcase the efficacy of TODG.
\end{itemize}

The remainder of this paper is organized as follows. Section~\ref{sec:related_work} briefly reviews the related work, and Section~\ref{sec:related_work}
introduces the system model and formulates the distributed task offload problem. We present the details of the proposed TODG algorithm in Section~\ref{sec:todg} and analyze the theoretical performance of TODG in Section~\ref{sec:analysis}. Finally, Section~\ref{sec:simulations} shows the performance evaluation results, followed by a conclusion drawn in  Section~\ref{sec:conclusion}.

\section{Related Work}
\label{sec:related_work}

\begin{table*}[ht]
	\caption{Comparison with related works.}
	\label{table:comparison}
	\vspace{-0.5em}
	\centering
	\renewcommand\arraystretch{1.2}
	\resizebox{\textwidth}{!}{
		\begin{tabular}{cccccccc}
			\hline
			\textbf{Paper} & \multicolumn{1}{c}{\textbf{\begin{tabular}[c]{@{}c@{}}Radio\\ management\end{tabular}}} & \multicolumn{1}{c}{\textbf{\begin{tabular}[c]{@{}c@{}}Load\\ balancing\end{tabular}}} & \multicolumn{1}{c}{\textbf{\begin{tabular}[c]{@{}c@{}}Delay\\ constraints\end{tabular}}} & \multicolumn{1}{c}{\textbf{\begin{tabular}[c]{@{}c@{}}Multiple\\ users\end{tabular}}} & \multicolumn{1}{c}{\textbf{\begin{tabular}[c]{@{}c@{}}Multiple\\ servers\end{tabular}}} & \multicolumn{1}{c}{\textbf{\begin{tabular}[c]{@{}c@{}}Distributed\\ algorithm\end{tabular}}} & \multicolumn{1}{c}{\textbf{\begin{tabular}[c]{@{}c@{}}Online\\ offloading process\end{tabular}}} \\ \hline
			Kao and Krishnamachar \cite{kao2014optimizing}          & No                                            & No                                          & Yes                                            & No                                          & No                                            & No                                                 & No                                                     \\
			Mao \etal~\cite{mao2016dynamic}          & No                                            & No                                          & Yes                                            & No                                          & No                                            & No                                                 & Yes                                                    \\
			Liu \etal~\cite{liu2016delay}          & No                                            & No                                          & Yes                                            & No                                          & No                                            & No                                                 & Yes                                                    \\
			Lyu \etal~\cite{lyu2017optimal}          & Yes                                           & No                                          & No                                             & Yes                                         & No                                            & No                                                 & Yes                                                    \\
			Mao \etal~\cite{mao2017stochastic}          & No                                            & No                                          & No                                             & Yes                                         & No                                            & No                                                 & Yes                                                    \\
			Mao \etal~\cite{mao2017joint}          & No                                            & No                                          & Yes                                            & No                                          & No                                            & No                                                 & No                                                     \\
			Zhang \etal~\cite{zhang2017optimal}          & No                                            & Yes                                         & Yes                                            & Yes                                         & Yes                                           & Yes                                                & No                                                     \\
			Chen \etal~\cite{chen2017joint}              & No                                                                  & Yes                                                               & Yes                                                                  & Yes                                                               & Yes                                                                 & No                                                                       & No                                                                           \\
			You \etal~\cite{you2016energy}             & Yes                                                                 & No                                                                & Yes                                                                  & Yes                                                               & No                                                                  & No                                                                       & No                                                                           \\
			Ren \etal~\cite{ren2018latency}             & Yes                                                                 & No                                                                & Yes                                                                  & Yes                                                               & No                                                                  & No                                                                       & No                                                                           \\
			Zhang \etal~\cite{zhang2017energy}          & Yes                                           & No                                          & Yes                                            & Yes                                         & No                                            & No                                                 & No                                                     \\
			Zhou \etal~\cite{zhou2018computation}          & No                                            & No                                          & Yes                                            & Yes                                         & No                                            & No                                                 & No                                                     \\
			Alameddine \etal~\cite{alameddine2019dynamic}          & No                                            & Yes                                         & Yes                                            & Yes                                         & Yes                                           & No                                                 & No                                                     \\
			Chen \etal~\cite{chen2019delay}          & No                                            & No                                          & Yes                                            & Yes                                         & No                                            & No                                                 & No                                                     \\
			Maswood \etal~\cite{maswood2020novel}          & No                                            & Yes                                         & No                                             & Yes                                         & Yes                                           & No                                                 & No                                                     \\
			Jo{\v{s}}ilo and D{\'a}n \etal~\cite{jovsilo2018decentralized}          & No                                            & Yes                                         & Yes                                            & Yes                                         & No                                            & Yes                                                & Yes                                                    \\
			Liang \etal~\cite{liang2019multiuser}          & No                                            & No                                          & Yes                                            & Yes                                         & No                                            & No                                                 & No                                                     \\
			Liu \etal~\cite{liu2020latency}             & No                                                                  & No                                                                & Yes                                                                  & Yes                                                               & No                                                                  & Yes                                                                      & Yes                                                                          \\
			Li \etal~\cite{li2020qos}             & Yes                                                                 & Yes                                                               & Yes                                                                  & Yes                                                               & Yes                                                                 & Yes                                                                      & No                                                                           \\
			Nath \etal~\cite{nath2020multi}             & Yes                                                                 & No                                                                & Yes                                                                  & Yes                                                               & No                                                                  & No                                                                       & No                                                                           \\
			Hekmati \etal~\cite{hekmati2019optimal}             & No                                                                  & No                                                                & Yes                                                                  & No                                                                & No                                                                  & No                                                                       & No                                                                           \\
			Li \etal~\cite{li2021task}          & No                                            & No                                          & Yes                                            & Yes                                         & No                                            & No                                                 & No                                                     \\
			\rowcolor[HTML]{96FFFB} 
			This paper     & Yes                                           & Yes                                         & Yes                                            & Yes                                         & Yes                                           & Yes                                                & Yes                                                    \\ \hline
		\end{tabular}
	}\vspace{-1.0em}
\end{table*}

As a key enabling technology, task offloading has attracted increasing research attention in edge computing recently~\cite{yi2019multi}. Some early studies focus on making offloading decisions whether a mobile device should offload the task to an edge server or not~\cite{chen2014decentralized,dinh2017offloading}. For example, Chen \etal~\cite{chen2014decentralized} design a decentralized offloading game to make the offloading decision for minimizing the energy overhead. Dinh \etal~\cite{dinh2017offloading} propose a computation offloading approach to determine the offloaded tasks and CPU frequency of a mobile device to minimize task execution time and energy consumption.

Some recent works study joint communication and computation resource allocation to improve the performance of task offloading from a system perspective~\cite{zhang2019near,xu2018offloading,zhu2019blot,wang2020multi,liu2020resource,ren2020joint,wang2017computation,fang2016stochastic,hekmati2019optimal,wang2020online,li2020qos}. Specifically, Ren \etal~\cite{ren2020joint} propose a channel allocation and resource management approach to optimize offloading decisions and maximize the long-term network utility. Wang \etal~\cite{wang2017computation} present a system-level resource management approach, including offloading decisions, channel allocation, and caching strategy, to maximize the network utility. However, all these works ignore the latency constraint in task offloading, which is significantly important for delay-sensitive applications and attracts increasing research attention \cite{fang2016stochastic}. To control the offloading latency, Kao \etal~\cite{kao2014optimizing} propose a task partitioning method for one-task offloading, giving a near-optimal solution. Mao \etal~\cite{mao2017joint} and Chen \etal~\cite{chen2019delay} study the multitask offloading problems to optimize the execution delay and energy overhead. You \etal~\cite{you2016energy} design a threshold-based policy to manage offloading data volumes and channel access opportunities in a TDMA-based edge computing system. Alameddine \etal~\cite{alameddine2019dynamic} design a joint computing resource allocation and task offloading approach, considering the heterogeneity in the requirements of the offloaded tasks. Leveraging SDN, Maswood \etal~\cite{maswood2020novel} present a cooperative three-layer fog-cloud computing framework to optimize the bandwidth cost and load balancing. Hekmati \etal~\cite{hekmati2019optimal} develops an energy-optimal task offloading algorithm, called OnOpt, which considers the stochastic wireless channels and exploits the Markov chain to obtain the optimal offloading decisions with hard deadline constraints. Wang \etal~\cite{wang2020online} focus on the task offloading problem in non-orthogonal multiple access (NOMA) based edge computing systems and propose an online-learning algorithm to determine the optimal task and subcarrier allocation decisions for minimizing the task execution delay. 

Decentralized task offloading is another research trend in the study field of edge computing~\cite{chen2018computation, jovsilo2020computation, tang2020deep,chen2019collaborative,jovsilo2018decentralized,liu2020latency,you2016energy}. Jošilo \etal~\cite{jovsilo2018decentralized} propose an efficient decentralized algorithm for computing an equilibrium of task offloading game based on the variational inequality theory. Liu \etal~\cite{liu2020latency} design a decentralized offloading algorithm based on Lyapunov optimization and primal-dual theory, which decomposes the original complex problem into a set of sub-problems that can be solved on a mobile device or an edge server separately. However, they only investigate the one-server system and aim at controlling the average delay instead of strict delay constraints. Based on the Stackelberg game theoretic approach, Zhang \etal~\cite{zhang2017optimal} develop an iterative distributed offloading strategy for the hierarchical Vehicular Edge Computing system.  Nevertheless, most of the existing distributed solutions are based on a strong assumption of fixed task volume or sufficient communication resources. In addition, by introducing a statistical computation and transmission model, Li \etal~\cite{li2020qos} propose a distributed task offloading algorithm to provide statistical delay guarantees with full consideration of stochastic communication resources. Different from one-round scheduling and statistical delay guarantees in \cite{li2020qos}, this paper focus on online offloading process with hard delay constraints.

We summarize the difference between this paper and the existing works in Table \ref{table:comparison}. Notably, this paper focuses on a more complex and practical scenario, where heterogeneous computational tasks with random arrivals require scheduling to different edge servers with varying delay constraints, and designs a distributed algorithm to optimize the long-term utility of the whole system.

\section{System model}
\label{sec:system_model}

\begin{figure}[ht]
	\centering
	\includegraphics[width=0.95\columnwidth]{./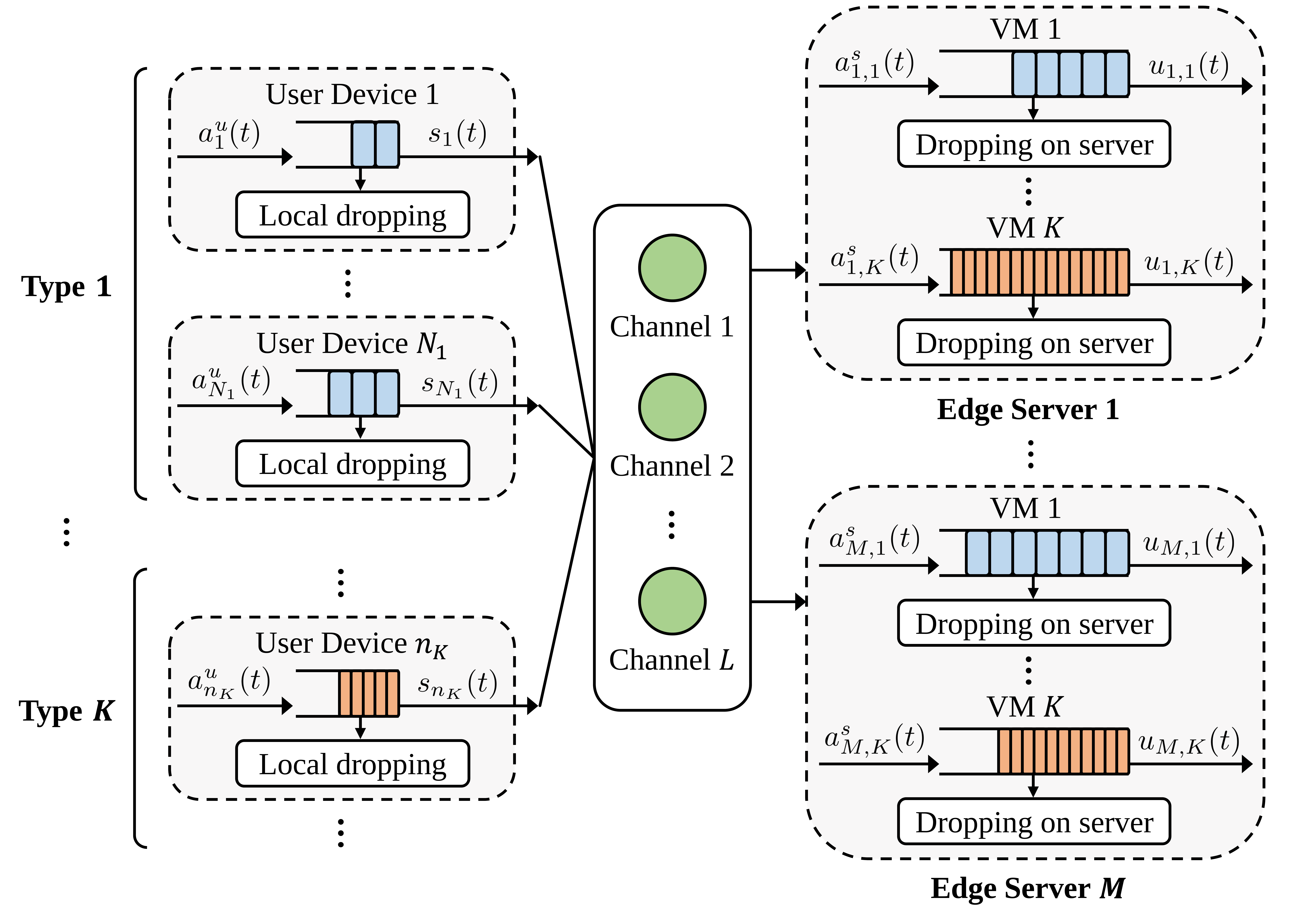} 
	\caption{Illustration of delay-constrained computation offloading for heterogeneous tasks with multiple channels and servers in edge computing.}
	\label{figure:system_model}
\end{figure} 	

As illustrated in Fig. \ref{figure:system_model}, we consider an edge computing system that operates in slotted time indexed by $t\in\{0,1,\dots\}$, where a set of heterogeneous user devices (denoted by $\mathcal{N}$) can offload their computational tasks to the edge servers (denoted by $\mathcal{M}$) via a limited number of communication channels (denoted by $\mathcal{L}$) for faster response or lower energy consumption. In each time slot, the system allocates a certain number of channels and edge servers to a part of user devices, enabling them to offload the computational tasks from their local task buffers. Generally, the user devices can be classified into different types (denoted by $\mathcal{K}$). The devices belonging to the same type will generate the same kind of tasks, such as image processing, video processing, etc. Accordingly, each edge server $m\in\mathcal{M}$ creates $K$ virtual machines (VMs) corresponding to different types of tasks. Such that, the $k$-type tasks would be processed by the corresponding VM in the server. It is worth noting that in this paper, we do not assume any prior knowledge on the statistical distributions of all the stochastic variables (\ie, the task arrivals, the channel states, and the service rates on servers). In the following, we use superscript `u' and `s' to specify the variables on the user and server sides, respectively, and denote $N=\vert\mathcal{N}\vert$, $K=\vert \mathcal{K}\vert$, and $L=\vert \mathcal{L}\vert$ for simplicity. The key notations are summarized in Table \ref{table:notations}.

\begin{table}[t]
	\caption{Key Notations}
	\label{table:notations}
	\vspace{-0.5em}
	\centering
	\renewcommand\arraystretch{1.25}
	\resizebox{\columnwidth}{!}{
		\begin{tabular}{l|l}
			\hline
			\textbf{Notation}                          & \textbf{Definition}                                                                                                     \\ \hline
			$k$, $l$, $m$                              & indexes of task types, channels, and edge servers                                                                       \\
			$t$, $n_k$                                 & indexes of time slots and user devices of $k$-th type                                                                   \\
			$v_{m,k}$                                  & index of VMs in server $m$ for executing $k$-th type tasks                                                               \\
			$z_{n_k,l,m}(t)$                           & \begin{tabular}[c]{@{}l@{}}binary variable indicating if device $n_k$ offloads tasks to \\ server $m$ via channel $l$ in $t$ or not (1 or 0)\end{tabular} \\
			$s_{n_k}(t)$                               & amount of tasks offloaded by device $n_k$ in $t$                                                                                  \\
			$\xi_{n_k}(t)$                             & upper bound of $s_{n_k}(t)$ in $t$                                                                                      \\
			$c_{n_k,l,m}(t)$                           & channel capacity of $l$ between device $n_k$ and server $m$ in $t$          \\
			$\tau^\mli{u}_{n_k,t}$, $\tau^\mli{s}_{n_k,t}$ & queuing delays of task $a^\mli{u}_{n_k}(t)$ in local device and server                                                           \\
			$a^\mli{u}_{n_k}(t)$, $d^\mli{u}_{n_k}(t)$ & amount of tasks arriving and dropped in device $n_k$ in $t$                                                                       \\
			$a^\mli{s}_{m,k}(t)$, $d^\mli{s}_{m,k}(t)$ & amount of tasks arriving and dropped in VM $v_{m,k}$ in $t$                                                                       \\
			$Q^\mli{u}_{n_k}(t)$, $Q^\mli{s}_{m,k}(t)$    & task queue backlogs in device $n_k$ and VM $v_{m,k}$                                                                            \\
			$Z^\mli{u}_{n_k}(t)$, $Z^\mli{s}_{m,k}(t)$    & virtual queue backlogs in device $n_k$ and VM $v_{m,k}$                                                                         \\
			$\zeta^\mli{u}_{n_k}(t)$, $\zeta^\mli{s}_{m,k}(t)$    & parameters for controlling delays in $Q^\mli{u}_{n_k}(t)$ and $Q^\mli{s}_{m,k}(t)$ \\
			$g_{n_k}(\cdot)$                           & utility function                                                                                                        \\ \hline
		\end{tabular}
	}	
\vspace{-1.0em}
\end{table}

\subsection{Transmission Model}
\label{subsec:transmission}

We let $n_k\in\mathcal{N}_k$ denote the $n$-th user device of the $k$-th type, and $\mathcal{L}$ be a set of orthogonal channels that the user devices can dynamically access for task offloading in each time slot. Let a stochastic variable $c_{n_k,l,m}(t)$ be the channel capacity of $l$ between device $n_k$ and server $m$ in $t$ (\ie, the maximum amount of tasks that can be transmitted from $n_k$ to $m$ by $l$), satisfying $0\le c_{n_k,l,m}(t)\le c^\mli{max}$ for a positive constant $c^\mli{max}$. 
A user device can access a channel at each slot to offload its tasks to a server or do nothing. Let a binary variable $z_{n_k,l,m}(t)$ denote $n_k$'s offloading strategy in $t$, \ie,
\begin{align}
\label{eq:binary_variable}
z_{n_k,l,m}(t)=0~\text{or}~1,
\end{align}
for each $k \in \mathcal{K}$, $n_k \in \mathcal{N}_k$, $l \in \mathcal{L}$ and $m\in\mathcal{M}$. $z_{n_k,l,m}(t)=1$ indicates that device $n_k$ can offload its tasks to server $m$ via channel $l$; otherwise, $z_{n_k,l,m}(t)=0$. Note that, a user device can only access one channel in a slot, so it follows that
\begin{align}
\label{eq:allocation_1}
\sum_{l \in \mathcal{L}}\sum_{m \in \mathcal{M}}z_{n_k,l,m}(t) \le 1,
\end{align}
for each $k \in \mathcal{K}$ and $n_k \in \mathcal{N}_k$. Further, to circumvent the interference, each channel can only be accessed by one device in the meanwhile, which implies that
\begin{align}
\label{eq:allocation_2}
\sum_{k \in \mathcal{K}}\sum_{n_k \in \mathcal{N}_k}\sum_{m \in \mathcal{M}}z_{n_k,l,m}(t) \le 1.
\end{align}


Define $s_{n_k}(t)$ as the transmission rate for offloading tasks by device $n_k$ in slot $t$. Due to the power limitation, it is reasonable to assume that $s_{n_k}(t)$ is bounded, \ie,
\begin{align}
\label{eq:offloading_1}
0\le s_{n_k}(t)\le \xi_{n_k}(t),
\end{align}
where $0\le\xi_{n_k}(t)\le\xi^\mli{max}_{n_k}$ is a bounded random variable over time slots representing the maximal amount of tasks that can be offloaded to the edge servers. 
Besides, $s_{n_k}(t)$ is also impacted by the channel state, captured by 
\begin{align}
\label{eq:offloading_2}
s_{n_k}(t)\le\sum_{m\in\mathcal{M}}\sum_{l\in\mathcal{L}}z_{n_k,l,m}(t)c_{n_k,l,m}(t),
\end{align}
which implies that $s_{n_k}(t)$ cannot exceed the channel capacity accessed by device $n_k$ in the current slot.

\subsection{Task Buffer Model}
Define stochastic variable $a^\mli{u}_{n_k}(t)$ as the amount of task arrivals on device $n_k$ in $t$, satisfying $0\le a^\mli{u}_{n_k}(t)\le a^\mli{u,max}_{n_k}$ with $a^\mli{u,max}_{n_k}>0$. Due to the delay sensitivity and system stability, user devices can drop some out-of-date tasks, denoted by $d^\mli{u}_{n_k}(t)$, from its local task buffer, and we have
\begin{align}
\label{eq:drop_local}
0\le d^\mli{u}_{n_k}(t)\le d^\mli{u,max}_{n_k},
\end{align}
with a positive constant $d^\mli{u,max}_{n_k}$. Let $Q^\mli{u}_{n_k}(t)$ be the local task buffer of device $n_k$ with the following dynamics
\begin{align}
\label{eq:task_queue_local}
Q^\mli{u}_{n_k}(t+1)=\max\left\{Q^\mli{u}_{n_k}(t)-s_{n_k}(t)-d^\mli{u}_{n_k}(t),0\right\}+a^\mli{u}_{n_k}(t),
\end{align}
where $Q^\mli{u}_{n_k}(0)=0$.

We denote $v_{m,k}$ as the $k$-th VM of server $m$, and $a^\mli{s}_{m,k}(t)$ and $u_{m,k}(t)$ as the amount of arriving and processed tasks on $v_{m,k}$ in slot $t$ respectively. In particular, 
$a^\mli{s}_{m,k}(t)$ can be expressed by
\begin{align}
a^\mli{s}_{m,k}(t)\triangleq \sum_{n_k\in\mathcal{N}_k}\sum_{l \in \mathcal{L}}z_{n_k,l,m}(t)\hat{s}_{n_k}(t),
\end{align}
where $\hat{s}_{n_k}(t)\triangleq \min\{s_{n_k}(t),Q^\mli{u}_{n_k}(t\}$ represents the actual departures of device $n_k$ in $t$. Due to the unpredictable states of edge servers, we assume that the processing rates of VMs are random. Let the bounded stochastic variable $u_{m,k}(t)$ denote the amount of processed tasks on $v_{m,k}$ in $t$, and $0\le u_{m,k}(t)\le u^\mli{max}_{m,k}$ holds. Similarly, let $d^\mli{s}_{m,k}(t)$ denote the amount of dropped tasks on $v_{m,k}$ in slot $t$, satisfying 	
\begin{align}
\label{eq:drop_server}
0\le d^\mli{s}_{m,k}(t)\le d^\mli{s,max}_{m,k},
\end{align}
where $d^\mli{s,max}_{m,k}>0$. Each $v_{m,k}$ also maintains a task buffer, denoted by $Q^\mli{s}_{m,k}(t)$, with the following queuing dynamics\footnote{All the queues use \emph{first-in-first-out} (FIFO) service in this paper.}
\begin{align}
\label{eq:task_queue_server}
Q^\mli{s}_{m,k}(t+1)=&\max\big\{Q^\mli{s}_{m,k}(t)-u_{m,k}(t)-d^\mli{s}_{m,k}(t),0\big\}\nonumber\\
&+a^\mli{s}_{m,k}(t).
\end{align}
Further, since the task buffers in both devices and servers are finite, we impose the following constraints on $Q^\mli{u}_{n_k}(t)$ and $Q^\mli{s}_{m,k}(t)$ for each $m\in\mathcal{M}$, $k\in\mathcal{K}$, and $n_k\in\mathcal{N}_k$
\begin{align}
\label{eq:buffer_size}
Q^\mli{u}_{n_k}(t)\le Q^\mli{u,max}_{n_k},\quad Q^\mli{s}_{m,k}(t)\le Q^\mli{s,max}_{m,k},
\end{align}
where $Q^\mli{u,max}_{n_k}\ge a^\mli{u,max}_{n_k}$ and $Q^\mli{s,max}_{m,k}\ge c^\mli{max}L$ are the buffer sizes for device $n_k$ and VM $v_{m,k}$, respectively.

\subsection{Delay-Constrained Model}

\begin{figure*}[t]
	\begin{align}
		\label{eq:h_u}
		h^\mli{u}_{n_k,t}&\triangleq \min_{T'} \Bigg\{T'~\Bigg |~\sum^{T'}_{t'=t}  \underbrace{\min\big\{s_{n_k}(t') + d^\mli{u}_{n_k}(t'), Q^\mli{u}_{n_k}(t')\big\}}_\text{output of $Q^\mli{u}_{n_k}$ in $t'$}\ge Q^\mli{u}_{n_k}(t) + a^\mli{u}_{n_k}(t),~T'\ge t \Bigg\}\\
		\label{eq:h_s}
		h^\mli{s}_{n_k,t}&\triangleq \min_{T'} \Bigg\{T'~\Bigg |~\sum^{T'}_{t'=h^\mli{u}_{n_k,t}}  \underbrace{\min\big\{u_{m',k}(t') + d^\mli{s}_{m',k}(t'), Q^\mli{s}_{m',k}(t')\big\}}_\text{output of $Q^\mli{s}_{m',k}$ in $t'$}  \ge Q^\mli{s}_{m',k}(t) + a^\mli{s}_{m',k}(t),~T'\ge h^\mli{u}_{n_k,t} \Bigg\}
	\end{align}
\noindent\makebox[\linewidth]{\rule{\textwidth}{0.5pt}}
\end{figure*}

For device $n_k$, define $h^\mli{u}_{n_k,t}$ in \eqref{eq:h_u} as the time slot when task $a^\mli{u}_{n_k}(t)$ leaves local task buffer $Q^\mli{u}_{n_k}$. It indicates that in $h^\mli{u}_{n_k,t}$, all the tasks before $a^\mli{u}_{n_k}(t)$ along with $a^\mli{u}_{n_k}(t)$ have just been offloaded or dropped. Thus, the queuing delay of $a^\mli{u}_{n_k}(t)$ in local device can be denoted as
\begin{align}
	\label{eq:tau_u}
	\tau^\mli{u}_{n_k,t} \triangleq h^\mli{u}_{n_k,t} - t + 1.
\end{align}
When $a^\mli{u}_{n_k}(t)$ is not dropped, we denote $m'$ as the server\footnote{For simplicity, we omit the subscripts, $n_k$ and $t$, for $m'$.}, which $a^\mli{u}_{n_k}(t)$ is offloaded to, such that $z_{n_k,l,m'}(h^\mli{u}_{n_k,t})=1$\footnote{For convenience, we assume task $a^\mli{u}_{n_k}(t)$ is not be split here. It is easy to extend the expression to the splitting case.}. Similar to $h^\mli{u}_{n_k,t}$, define $h^\mli{s}_{n_k,t}$ in \eqref{eq:h_s} as the time slot in which task $a^\mli{u}_{n_k}(t)$ leaves the VM's buffer. Accordingly, the processing delay of $a^\mli{u}_{n_k}(t)$ on the server-side can be expressed as
\begin{align*}
	\tau^\mli{s}_{n_k,t}\triangleq
	\begin{cases}
		0&,~\text{if}~a^\mli{u}_{n_k}(t)~\text{is dropped locally},\\
		h^\mli{s}_{n_k,t}-h^\mli{u}_{n_k,t}&,~\text{otherwise}.
	\end{cases}
\end{align*}
There exists an acceptable worst-case delay for each device's tasks, denoted by $\tau^\mli{max}_{n_k}$, such that 	
\begin{align}
\label{eq:delay_constr}
\tau^\mli{u}_{n_k,t}+\tau^\mli{s}_{n_k,t}\le\tau^\mli{max}_{n_k},~\forall t\in\{0,1,\dots\}.
\end{align}
It implies that each task should be executed before its deadline. 	Since the transmission delay can be seen as a constant, we neglect it for brevity.

\subsection{Task Offloading Problem}
The system's goal is to maximize the long-term utility among user devices while satisfying the worst-case delay constraints. To do so, we first define the average amount of arriving and dropped tasks on each user device below
\begin{align}
\label{eq:average_arrival}
\bar{a}^\mli{u}_{n_k}&\triangleq \lim_{T\rightarrow\infty}\frac{1}{T}\sum^{T-1}_{t=0}\mathbb{E}\left\{a^\mli{u}_{n_k}(t) \right\},\\
\label{eq:average_drop_local}
\bar{d}^\mli{u}_{n_k}&\triangleq \lim_{T\rightarrow\infty}\frac{1}{T}\sum^{T-1}_{t=0}\mathbb{E}\left\{d^\mli{u}_{n_k}(t) \right\}.
\end{align}
In particular, we slightly abuse the notation and define $d^\mli{s}_{n_k}(t)$ (distinct from $d^\mli{s}_{m,k}(t)$) as the total amount of tasks of $n_k$ that have been offloaded to servers but are dropped in $t$. Similarly, we define the expected average of $d^\mli{s}_{n_k}(t)$ below
\begin{align}
\label{eq:average_drop_server}
\bar{d}^\mli{s}_{n_k}&\triangleq \lim_{T\rightarrow\infty}\frac{1}{T}\sum^{T-1}_{t=0}\mathbb{E}\left\{d^\mli{s}_{n_k}(t)\right\}.
\end{align}
Based on that, we cast the task offloading problem as a long-term stochastic optimization problem, \ie, 
\begin{align}
\label{problem_primal}
\begin{split}
\textbf{(P)}~\mathop{\max}_{\left\{\bm{d}(t),\bm{z}(t),\bm{s}(t)\right\}}&~\sum_{k\in\mathcal{K}}\sum_{n_k\in\mathcal{N}_k}g_{n_k}\left(\bar{a}^\mli{u}_{n_k}-\bar{d}^\mli{u}_{n_k}-\bar{d}^\mli{s}_{n_k}\right),\\
\text{s.t.}&~~\eqref{eq:binary_variable}-\eqref{eq:drop_local},\eqref{eq:drop_server},\eqref{eq:buffer_size}-\eqref{eq:delay_constr},
\end{split}
\end{align}
where $g_{n_k}(\cdot)$ is a differentiable, concave, and non-decreasing utility function with a finite maximum first derivative denoted by $\beta_{n_k}$. We denote $\bm{d}(t)\triangleq\{d^\mli{u}_{n_k}(t),d^\mli{s}_{m,k}(t)\mid m\in\mathcal{M},k\in\mathcal{K},n_k\in\mathcal{N}_k\}$, and the same applies to $\bm{z}(t)$ and $\bm{s}(t)$. This formulation aims to find good offloading and resource allocation policies that enable executing as many tasks as possible while satisfying the delay and stability constraints.

\section{TODG: Distributed Task Offloading Scheme with Delay Guarantees}
\label{sec:todg}
In this section, we develop an online algorithm for addressing the problem mentioned above in a distributed manner. We first transform original problem \eqref{problem_primal} into an easy-to-handle form, then decompose it into three slot-level sub-problems. After that, we provide approaches to solve the sub-problems in each time slot.

\subsection{Problem Transformation and Decomposition}

It is not easy to control $d^\mli{s}_{n_k}(t)$ directly since tasks of different devices are mixed in the task buffers on servers. Thus, we derive the next lemma to decouple $\bar{d}^\mli{s}_{n_k}$ from $\bar{d}^\mli{u}_{n_k}$ and show \eqref{problem_primal} can be transformed equivalently to the following handy form with $\beta=\max_{n_k}\beta_{n_k}$.
\begin{align}
\label{problem_handy}
\begin{split}
\mathop{\max}_{\left\{\bm{d}(t),\bm{z}(t),\bm{s}(t)\right\}}&~\sum_{k\in\mathcal{K}}\sum_{n_k\in\mathcal{N}_k}g_{n_k}\left(\bar{a}^\mli{u}_{n_k}-\bar{d}^\mli{u}_{n_k}\right)\\
&~-\beta\sum_{m\in\mathcal{M}}\sum_{k\in\mathcal{K}}\bar{d}^\mli{s}_{m,k},\\
\text{s.t.}&~~\eqref{eq:binary_variable}-\eqref{eq:drop_local},\eqref{eq:drop_server},\eqref{eq:buffer_size}-\eqref{eq:delay_constr},
\end{split}
\end{align}

\begin{lemma}
	\label{lemma:transform}
	If there exists an optimal solution $\{\bm{d}^*(t),\bm{z}^*(t),$ $\bm{s}^*(t)\}$ of problem \eqref{problem_handy}, it is also an optimal solution to problem \eqref{problem_primal}.
\end{lemma}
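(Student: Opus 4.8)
The plan is to use that problems \eqref{problem_primal} and \eqref{problem_handy} have \emph{identical} feasible regions (they share constraints \eqref{eq:binary_variable}--\eqref{eq:drop_local}, \eqref{eq:drop_server}, \eqref{eq:buffer_size}--\eqref{eq:delay_constr}) and differ only in the objective, and then to relate the two objectives through a conservation identity for server-side drops together with the concavity of each $g_{n_k}$. The first step is the bookkeeping identity
\begin{align*}
\sum_{n_k\in\mathcal{N}_k}\bar{d}^\mli{s}_{n_k}=\sum_{m\in\mathcal{M}}\bar{d}^\mli{s}_{m,k},\quad\forall k\in\mathcal{K}.
\end{align*}
This holds because every type-$k$ task dropped on the server side belongs to exactly one device $n_k$ and resides on exactly one VM $v_{m,k}$ at the instant it is dropped; hence counting these dropped tasks by their owning device and by their hosting VM agrees slot by slot, and therefore in the long-run averages of \eqref{eq:average_drop_server}. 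This identity lets me rewrite the penalty $\beta\sum_{m}\sum_{k}\bar{d}^\mli{s}_{m,k}$ in \eqref{problem_handy} as the per-device penalty $\sum_{k}\sum_{n_k}\beta\,\bar{d}^\mli{s}_{n_k}$, which is what makes the two objectives comparable term by term.

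Denote by $O_P$ and $O_H$ the objective values of \eqref{problem_primal} and \eqref{problem_handy} at a common feasible point, and by $V_P,V_H$ their optimal values. Since each $g_{n_k}$ is concave, non-decreasing, and has first derivative bounded by $\beta_{n_k}\le\beta$, for any $y\ge0$ we have $g_{n_k}(x-y)\ge g_{n_k}(x)-\beta_{n_k}y\ge g_{n_k}(x)-\beta y$. Applying this with $x=\bar{a}^\mli{u}_{n_k}-\bar{d}^\mli{u}_{n_k}$ and $y=\bar{d}^\mli{s}_{n_k}\ge0$, summing over all $n_k,k$, and substituting the conservation identity, I obtain $O_P\ge O_H$ at every feasible point, and hence $V_P\ge V_H$.

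To obtain the reverse inequality and lock the two optima together, I would exhibit a solution of \eqref{problem_primal} that attains $V_P$ and has \emph{no} server-side drops, i.e. $\bar{d}^\mli{s}_{n_k}=0$ for all $n_k$. The idea is an exchange argument: starting from any P-optimal solution, reroute each task it would drop on the server into a device-side drop (equivalently, simply refrain from offloading that task), which leaves the effective throughput $\bar{a}^\mli{u}_{n_k}-\bar{d}^\mli{u}_{n_k}-\bar{d}^\mli{s}_{n_k}$, and therefore $O_P$, unchanged while only relieving the server queues and shortening delays. On any zero-server-drop point the two objectives coincide, $O_P=O_H$, so evaluating $O_H$ at this modified optimizer gives $V_H\ge V_P$, whence $V_H=V_P$. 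Finally, for a handy optimizer $\{\bm{d}^*(t),\bm{z}^*(t),\bm{s}^*(t)\}$ I chain $V_P\ge O_P\ge O_H=V_H=V_P$, which forces $O_P=V_P$; that is, the handy optimizer is optimal for \eqref{problem_primal}.

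The main obstacle is the exchange argument of the third step: while moving a drop from a server VM to the originating device is objective-neutral and clearly eases the server buffer \eqref{eq:buffer_size} and the server delay $\tau^\mli{s}_{n_k,t}$, I must verify that the resulting schedule still respects the per-slot device drop cap \eqref{eq:drop_local}, the device buffer cap in \eqref{eq:buffer_size}, and—most delicately—the worst-case delay constraint \eqref{eq:delay_constr} for the rerouted tasks. Establishing that this feasibility-preserving reassignment can always be carried out (so that a zero-server-drop solution attains $V_P$) is the crux of the proof.
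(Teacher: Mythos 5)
Your proposal is correct and follows essentially the same route as the paper's own proof: the same conservation identity $\sum_{k,n_k}\bar{d}^\mli{s}_{n_k}=\sum_{m,k}\bar{d}^\mli{s}_{m,k}$, the same use of the derivative bound $\beta$ to show a server-side drop is never better than a device-side drop, and the same observation that some optimizer of \eqref{problem_primal} has no server-side drops. In fact your write-up is more careful than the paper's, which asserts the zero-server-drop exchange argument without addressing the feasibility issues (per-slot drop caps, buffer bounds, and the delay constraint \eqref{eq:delay_constr}) that you correctly flag as the delicate step.
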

\begin{proof}
	\label{proof:transform}
	The proof is similar to \cite{neely2011opportunistic}. We first show that by decoupling $\bar{d}^\mli{s}_{n_k}$ from $\bar{d}^\mli{u}_{n_k}$ and rewriting the objective function, the following problem is equivalent to original problem \eqref{problem_primal}
	\begin{align}
	\label{eq:objective}
	\begin{split}
	\mathop{\max}_{\left\{\bm{d}(t),\bm{z}(t),\bm{s}(t)\right\}}&~\sum_{k\in\mathcal{K}}\sum_{n_k\in\mathcal{N}_k}\left(g_{n_k}(\bar{a}^\mli{u}_{n_k}-\bar{d}^\mli{u}_{n_k})-\beta\bar{d}^\mli{s}_{n_k}\right),\\
	\text{s.t.}&~~\eqref{eq:binary_variable}-\eqref{eq:drop_local},\eqref{eq:drop_server},\eqref{eq:buffer_size}-\eqref{eq:delay_constr},
	\end{split}
	\end{align}
	where $\beta\triangleq\max_{n_k}\{\beta_{n_k}\}$. It is easy to see that the optimal solution of problem \eqref{problem_primal} can be obtained at $\bar{d}^\mli{s}_{n_k}=0$ since tasks dropped from the edge server could just as easily have been dropped at the user side. On the other hand, since $\beta$ is greater than or equal to the maximum derivative of $g_{n_k}(\cdot)$. Thus, for the objective \eqref{eq:objective}, transmitting an extra unit of task (improving the utility $g_{n_k}(\bar{a}^\mli{u}_{n_k}-\bar{d}^\mli{u}_{n_k})$) then dropping it on the server (leading to the $\beta\bar{d}^\mli{s}_{n_k}$ penalties) is no better than dropping it directly on the user device. Thus, the above equivalent transformation holds. 
	
	Besides, from the definitions of $d^\mli{s}_{n_k}(t)$ and $d^\mli{s}_{m,k}(t)$, we have $\sum_{k\in\mathcal{K}}\sum_{n_k\in\mathcal{N}_k}d^\mli{s}_{n_k}(t)=\sum_{m\in\mathcal{M}}\sum_{k\in\mathcal{K}}d^\mli{s}_{m,k}(t)$, which implies $\sum_{k\in\mathcal{K}}\sum_{n_k\in\mathcal{N}_k}\bar{d}^\mli{s}_{n_k}=\sum_{m\in\mathcal{M}}\sum_{k\in\mathcal{K}}\bar{d}^\mli{s}_{m,k}$. Combining these two conclusions, we complete the proof.
\end{proof}

Moreover, the coupling of $\tau^\mli{u}_{n_k,t}$ and $\tau^\mli{s}_{n_k,t}$ in \eqref{eq:delay_constr} causes troubles in parallel computing between devices and servers. To handle the challenge, we introduce two types of virtual queues to decouple the control of $\tau^\mli{u}_{n_k,t}$ and $\tau^\mli{s}_{n_k,t}$ and ``relax'' the delay constraints. More specifically, for each user $n_k$ and VM $v_{m,k}$, we define \emph{delay state queues} $Z^\mli{u}_{n_k}(t)$ and $Z^\mli{s}_{m,k}(t)$ to measure the delay in current task buffers as follows
\begin{align}
\label{eq:zqueue_local}
Z^\mli{u}_{n_k}(t+1) =& \max \left\{Z^\mli{u}_{n_k}(t)-s_{n_k}(t)
-d^\mli{u}_{n_k}(t)+\zeta^\mli{u}_{n_k},0\right\},\\
\label{eq:zqueue_server}
Z^\mli{s}_{m,k}(t+1) =& \max \big\{Z^\mli{s}_{m,k}(t)-u_{m,k}(t)-d^\mli{s}_{m,k}(t)\nonumber\\
&+\zeta^\mli{s}_{m,k},0\big\},
\end{align}
with parameters $0\le\zeta^\mli{u}_{n_k}\le d^\mli{u,max}_{n_k}$ and $0\le\zeta^\mli{s}_{m,k}\le d^\mli{s,max}_{m,k}$. Intuitively, the arrivals $\zeta^\mli{u}_{n_k}$ and $\zeta^\mli{s}_{m,k}$ of $Z^\mli{u}_{n_k}(t)$ and $Z^\mli{s}_{m,k}(t)$ can be seen as penalties upon the tasks stuck in the task buffers at each slot. For example, if there is no departure of $Q^\mli{u}_{n_k}(t)$ in the current slot, \ie, $s_{n_k}(t)+d^\mli{u}_{n_k}(t)=0$, the delay of all tasks stored in $Q^\mli{u}_{n_k}(t)$ would increase by one slot. Accordingly, $Z^\mli{u}_{n_k}(t)$ will also increase by $\zeta^\mli{u}_{n_k}$ in slot $t$. Thus, a large backlog of $Z^\mli{u}_{n_k}(t)$ indicates that high latency happens in the current task queue $Q^\mli{u}_{n_k}(t)$. We demonstrate that the delay constraints can be well satisfied if appropriately selecting the parameters $\zeta^\mli{u}_{n_k}$ and $\zeta^\mli{s}_{m,k}$ in Lemma \ref{lemma:delay} after imposing the following assumption. 
\begin{assumption}
	\label{assu:dmax}
	For any $k\in\mathcal{K}$, $n_k\in\mathcal{N}_k$, and $m\in\mathcal{M}$, dropping rates $d^\mli{u,max}_{n_k}$ and $d^\mli{s,max}_{m,k}$ are large enough such that
	\begin{align}
		d^\mli{s,max}_{m,k}&\ge\max\left\{c^\mli{max}L,\mathop{\max}_{n_k}\left\{2Q^\mli{s,max}_{m,k}/\tau^\mli{max}_{n_k}\right\}\right\},\\
		d^\mli{u,max}_{n_k}&\ge\max\left\{a^\mli{u,max}_{n_k},\frac{2Q^\mli{u,max}_{n_k}}{\tau^\mli{max}_{n_k}-\mathop{\max}_{m}\left\{2Q^\mli{s,max}_{m,k}/d^\mli{s,max}_{m,k}\right\}}\right\}.
	\end{align}
\end{assumption}
Intuitively, Assumption \ref{assu:dmax} implies that even though the delay-constrained task arrivals outweighs the system's processing capacity, it is capable of dropping some tasks for system stability. It enables the system to stay in the solution space.
\begin{lemma}
	\label{lemma:delay}
	Suppose for each $m\in\mathcal{M}$, $k\in\mathcal{K}$, and $n_k\in\mathcal{N}_k$, $Z^\mli{u}_{n_k}(t)$ and $Z^\mli{s}_{m,k}(t)$ are bounded by
	\begin{align}
	\label{eq:lem_d_1}
	Z^\mli{u}_{n_k}(t)\le Q^\mli{u,max}_{n_k},\quad Z^\mli{s}_{m,k}(t)\le Q^\mli{s,max}_{m,k}.
	\end{align}
	Then, given Assumption \ref{assu:dmax}, the delay constraint \eqref{eq:delay_constr} can be satisfied if the following holds true for each $m\in\mathcal{M}$, $k\in\mathcal{K}$, and $n_k\in\mathcal{N}_k$
	\begin{align}
	\label{eq:lem_d_3}
	&\mathop{\max}_{n_k}\left\{\frac{2Q^\mli{s,max}_{m,k}}{\tau^\mli{max}_{n_k}}\right\}<\zeta^\mli{s}_{m,k},\\
	\label{eq:lem_d_4}
	&\frac{2Q^\mli{u,max}_{n_k}}{\tau^\mli{max}_{n_k}-\mathop{\max}_{m}\left\{\frac{2Q^\mli{s,max}_{m,k}}{\zeta^\mli{s}_{m,k}}\right\}}\le\zeta^\mli{u}_{n_k}.
	\end{align}
\end{lemma}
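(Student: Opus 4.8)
The plan is to bound the two components of the delay separately and then combine them through conditions \eqref{eq:lem_d_3}--\eqref{eq:lem_d_4}. Specifically, I would first prove the two worst-case bounds
\[
\tau^\mli{u}_{n_k,t}\le \frac{2Q^\mli{u,max}_{n_k}}{\zeta^\mli{u}_{n_k}},\qquad \tau^\mli{s}_{n_k,t}\le \frac{2Q^\mli{s,max}_{m',k}}{\zeta^\mli{s}_{m',k}},
\]
where $m'$ is the server to which $a^\mli{u}_{n_k}(t)$ is offloaded. Granting these, \eqref{eq:delay_constr} is immediate: rearranging \eqref{eq:lem_d_4} gives $2Q^\mli{u,max}_{n_k}/\zeta^\mli{u}_{n_k}\le\tau^\mli{max}_{n_k}-\max_{m}\{2Q^\mli{s,max}_{m,k}/\zeta^\mli{s}_{m,k}\}$, and since $2Q^\mli{s,max}_{m',k}/\zeta^\mli{s}_{m',k}\le\max_{m}\{2Q^\mli{s,max}_{m,k}/\zeta^\mli{s}_{m,k}\}$ for the realized server $m'$, summing the two delay bounds cancels the server term and yields $\tau^\mli{u}_{n_k,t}+\tau^\mli{s}_{n_k,t}\le\tau^\mli{max}_{n_k}$. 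Note that the strict inequality \eqref{eq:lem_d_3} is exactly what keeps the denominator in \eqref{eq:lem_d_4} positive, and Assumption \ref{assu:dmax} is what guarantees that parameters $\zeta^\mli{u}_{n_k}\in[0,d^\mli{u,max}_{n_k}]$ and $\zeta^\mli{s}_{m,k}\in[0,d^\mli{s,max}_{m,k}]$ satisfying \eqref{eq:lem_d_3}--\eqref{eq:lem_d_4} actually exist.

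For the user-side bound I would use the delay-state queue $Z^\mli{u}_{n_k}$ as a ``clock''. Discarding the outer $\max$ in \eqref{eq:zqueue_local} yields $Z^\mli{u}_{n_k}(t'+1)\ge Z^\mli{u}_{n_k}(t')-s_{n_k}(t')-d^\mli{u}_{n_k}(t')+\zeta^\mli{u}_{n_k}$; telescoping this over a window of length $W$, together with $Z^\mli{u}_{n_k}\ge 0$ and the hypothesis $Z^\mli{u}_{n_k}(t')\le Q^\mli{u,max}_{n_k}$ from \eqref{eq:lem_d_1}, lower-bounds the cumulative potential service over the window by $W\zeta^\mli{u}_{n_k}-Q^\mli{u,max}_{n_k}$. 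Choosing $W$ so that $W\zeta^\mli{u}_{n_k}\ge 2Q^\mli{u,max}_{n_k}$ forces this cumulative service to be at least $Q^\mli{u,max}_{n_k}$, hence at least the entire backlog present in $Q^\mli{u}_{n_k}$ at the start of the window (capped by $Q^\mli{u,max}_{n_k}$ via \eqref{eq:buffer_size}). Because the buffer is FIFO, every unit present at the start of the window---including the tagged arrival $a^\mli{u}_{n_k}(t)$, which is already enqueued by slot $t+1$---must leave within $W$ slots; matching this with the definition of $h^\mli{u}_{n_k,t}$ in \eqref{eq:h_u} gives $\tau^\mli{u}_{n_k,t}\le W\le 2Q^\mli{u,max}_{n_k}/\zeta^\mli{u}_{n_k}$. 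The factor $2$ is precisely the sum of the two backlog caps: one $Q^\mli{u,max}_{n_k}$ from the real queue and one from the virtual queue, in the spirit of the worst-case-delay argument of \cite{neely2011opportunistic}.

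The server-side bound follows from the identical template, since the buffer \eqref{eq:task_queue_server} and its delay-state queue \eqref{eq:zqueue_server} have exactly the same structure, with service $u_{m',k}+d^\mli{s}_{m',k}$, arrival rate $\zeta^\mli{s}_{m',k}$, and caps $Q^\mli{s,max}_{m',k}$ from \eqref{eq:buffer_size} and \eqref{eq:lem_d_1}; moreover the definition \eqref{eq:h_s} of $h^\mli{s}_{n_k,t}$ measures delay from the instant $h^\mli{u}_{n_k,t}$ at which the task enters the VM, so the same telescoping-plus-FIFO argument delivers $\tau^\mli{s}_{n_k,t}\le 2Q^\mli{s,max}_{m',k}/\zeta^\mli{s}_{m',k}$. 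Combining with the user-side bound as described above completes the proof.

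I expect the main obstacle to be the careful bookkeeping in the per-side bounds rather than the final arithmetic. The subtlety is that the telescoped inequality controls the \emph{full} potential service $s_{n_k}(t')+d^\mli{u}_{n_k}(t')$, whereas the departure process in \eqref{eq:h_u} is the \emph{truncated} quantity $\min\{s_{n_k}(t')+d^\mli{u}_{n_k}(t'),Q^\mli{u}_{n_k}(t')\}$; one must argue that while the tagged task is still waiting the buffer stays non-empty so the two coincide, and one must anchor the service window at the right slot (after $a^\mli{u}_{n_k}(t)$ has been enqueued) so that the relevant backlog is capped by a single $Q^\mli{u,max}_{n_k}$ and the constant comes out to exactly $2$ rather than $3$. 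Handling this $\min/\max$ truncation and the attendant off-by-one in the FIFO clearing is where the real care is needed; everything else is routine.
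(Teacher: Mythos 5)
Your proposal is correct and follows essentially the same route as the paper: the per-queue worst-case delay bounds $2Q^\mli{u,max}_{n_k}/\zeta^\mli{u}_{n_k}$ and $2Q^\mli{s,max}_{m,k}/\zeta^\mli{s}_{m,k}$ are exactly what the paper obtains by invoking Lemma 5.5 of Neely's book, and your combination and rearrangement of the two bounds into \eqref{eq:lem_d_3}--\eqref{eq:lem_d_4} matches the paper's final step. The only difference is that you re-derive the per-queue bound from first principles (telescoping the delay-state queue plus the FIFO/truncation argument) instead of citing it, which actually supplies detail the paper delegates to the reference.
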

\begin{proof}
	For fixed $\zeta^\mli{u}_{n_k}$ and $\zeta^\mli{s}_{m,k}$, we define the worst-case queuing delays on $Q^\mli{u}_{n_k}(t)$ and $Q^\mli{s}_{m,k}(t)$ are $w^\mli{u}_{n_k}$ and $w^\mli{s}_{m,k}$ respectively. Based on \cite[Lemma 5.5]{neely2010stochastic}, for each $m\in\mathcal{M}$, $k\in\mathcal{K}$, and $n_k\in\mathcal{N}_k$, it can be shown that
	\begin{align}
	\label{eq:lem_d_6}
	w^\mli{u}_{n_k}=\frac{2Q^\mli{u,max}_{n_k}}{\zeta^\mli{u}_{n_k}},\quad
	w^\mli{s}_{m,k}=\frac{2Q^\mli{s,max}_{m,k}}{\zeta^\mli{s}_{m,k}}.
	\end{align}
	To satisfy \eqref{eq:delay_constr}, it requires that
	\begin{align}
	\label{eq:lem_d_5}
	&w^\mli{u}_{n_k}+\mathop{\max}_m \{w^\mli{s}_{m,k}\}\le\tau^\mli{max}_{n_k},\nonumber\\
	\Leftrightarrow\quad & \frac{2Q^\mli{u,max}_{n_k}}{\zeta^\mli{u}_{n_k}}+\mathop{\max}_m \left\{\frac{2Q^\mli{s,max}_{m,k}}{\zeta^\mli{s}_{m,k}}\right\}\le\tau^\mli{max}_{n_k}.
	\end{align}
	Due to $w^\mli{u}_{n_k},w^\mli{s}_{m,k}>0$, from \eqref{eq:lem_d_5}, we have
	\begin{align}
	\frac{2Q^\mli{s,max}_{m,k}}{\zeta^\mli{s}_{m,k}}<\tau^\mli{max}_{n_k},~\forall m\in\mathcal{M},n_k\in\mathcal{N}_k,
	\end{align}
	by which we obtain \eqref{eq:lem_d_3}. Plugging \eqref{eq:lem_d_3} into \eqref{eq:lem_d_5} and rearranging the terms, we complete the proof.
\end{proof}

\begin{remark}
	\label{remark:zeta}
	Lemma \ref{lemma:delay} implies that if we can bound $Z^\mli{u}_{n_k}(t)\le Q^\mli{u,max}_{n_k}$ and $Z^\mli{s}_{m,k}(t)\le Q^\mli{s,max}_{m,k}$, constraint \eqref{eq:delay_constr} can be satisfied by setting sufficiently large $\zeta^\mli{s}_{m,k}$ and $\zeta^\mli{u}_{n_k}$ corresponding to \eqref{eq:lem_d_3}-\eqref{eq:lem_d_4}. Besides, since $Z^\mli{u}_{n_k}(t)$ and $Z^\mli{s}_{m,k}(t)$ are bounded, the following holds
	\begin{align}
	\label{eq:zeta_imp_1}
	&\bar{d}^\mli{u}_{n_k}\ge \zeta^\mli{u}_{n_k}-\bar{s}^\mli{u}_{n_k}\ge\zeta^\mli{u}_{n_k}-\min\{\xi_{n_k}(t),c^\mli{max}\},\\
	\label{eq:zeta_imp_2}
	&\sum_{k,n_k}\bar{d}^\mli{s}_{n_k}=\sum_{m,k}\bar{d}^\mli{s}_{m,k}\ge \sum_{m,k}\left(\zeta^\mli{s}_{m,k}-\bar{u}_{m,k}\right).
	\end{align}
	Therefore, although larger $\zeta^\mli{s}_{m,k}$ and $\zeta^\mli{u}_{n_k}$ result in lower delay for offloaded tasks (see \eqref{eq:lem_d_6}), inequalities \eqref{eq:zeta_imp_1}-\eqref{eq:zeta_imp_2} indicate that a further increase in $\zeta^\mli{s}_{m,k}$ and $\zeta^\mli{u}_{n_k}$ may cause an increase of the dropped tasks and lead to additional degradation of the system utility.
\end{remark}

Based on Lemma \ref{lemma:transform} and \ref{lemma:delay}, original problem \eqref{problem_primal} can be transformed into the following one
\begin{align}
\label{problem_tran}
\begin{split}
\textbf{(TP)}~\mathop{\max}_{\left\{\bm{d}(t),\bm{z}(t),\bm{s}(t)\right\}}&~\sum_{k\in\mathcal{K}}\sum_{n_k\in\mathcal{N}_k}g_{n_k}\left(\bar{a}^\mli{u}_{n_k}-\bar{d}^\mli{u}_{n_k}\right)\\
&~-\beta\sum_{m\in\mathcal{M}}\sum_{k\in\mathcal{K}}\bar{d}^\mli{s}_{m,k},\\
\text{s.t.}&~~\eqref{eq:binary_variable}-\eqref{eq:drop_local},\eqref{eq:drop_server},\eqref{eq:buffer_size},\eqref{eq:lem_d_1}.
\end{split}
\end{align}

Next, we employ the dual-based \emph{drift-plus-penalty} technique \cite{neely2006energy} to decompose problem \eqref{problem_tran} into slot-level sub-problems. Rather than optimizing the problem directly, the idea of this technique is to minimize the slot-level drift-plus-penalty function, composed by the one-slot utility function and the successive difference of the queue state measures (namely ``Lyapunov drift''). 
By doing so, it enables maximizing the performance while implicitly controlling the system stability. More specifically, in each slot $t$, we define the drift-plus-penalty function $D_{\epsilon}(t)$ as follows
\begin{align}
\label{eq:drift_penality}
D_{\epsilon}(t)\triangleq & -\epsilon\cdot\Big(\sum_{k\in\mathcal{K}}\sum_{n_k\in\mathcal{N}_k}g_{n_k}\left(a^\mli{u}_{n_k}(t)-d^\mli{u}_{n_k}(t)\right)\nonumber\\
&-\sum_{m\in\mathcal{M}}\sum_{k\in\mathcal{K}}\beta d^\mli{s}_{m,k}(t)\Big)+L(t+1)-L(t),
\end{align}
where $\epsilon\ge0$ is a weight parameter to balance the utility and latency, and $L(t)$ is the \emph{Lyapunov function} defined by
\begin{align}
\label{eq:lyapunov_func}
L(t)\triangleq&
\frac{1}{2}\sum_{k\in\mathcal{K}}\sum_{n_k\in\mathcal{N}_k}\left(Q^\mli{u}_{n_k}(t)^2+Z^\mli{u}_{n_k}(t)^2\right)\nonumber\\
&\frac{1}{2}\sum_{m\in\mathcal{M}}\sum_{k\in\mathcal{K}}\left(Q^\mli{s}_{m,k}(t)^2+Z^\mli{s}_{m,k}(t)^2\right).
\end{align}
However, minimizing $D_{\epsilon}(t)$ directly is often computationally costly. A common alternative method is to minimize its upper bound, which is given by the following lemma.
\begin{lemma}
	\label{lemma:drift_bound}
	Under any $\epsilon\ge0$, the upper bound of the drift-plus-penalty function $D_{\epsilon}(t)$ can be expressed by
	\begin{align}
	\label{eq:drift_bound}
	\hat{D}_{\epsilon}(t)&\triangleq C-\epsilon\cdot\Big( \sum_{k\in\mathcal{K}}\sum_{n_k\in\mathcal{N}_k}g_{n_k}\left(a^\mli{u}_{n_k}(t)-d^\mli{u}_{n_k}(t)\right)\nonumber\\
	&-\sum_{m\in\mathcal{M}}\sum_{k\in\mathcal{K}}\beta d^\mli{s}_{m,k}(t)\Big) \nonumber\\
	\nonumber
	&+\sum_{k\in\mathcal{K}}\sum_{n_k\in\mathcal{N}_k}\Big(Q^\mli{u}_{n_k}(t)\left(a^\mli{u}_{n_k}(t)-s_{n_k}(t)-d^\mli{u}_{n_k}(t)\right)\\
	\nonumber
	&+Z^\mli{u}_{n_k}(t)\left(\zeta^\mli{u}_{n_k}-s_{n_k}(t)-d^\mli{u}_{n_k}(t)\right)\Big)\\
	\nonumber
	&+\sum_{m\in\mathcal{M}}\sum_{k\in\mathcal{K}}\Big(Q^\mli{s}_{m,k}(t)\left(a^\mli{s}_{m,k}(t)-u_{m,k}(t)-d^\mli{s}_{m,k}(t)\right)\\
	&+Z^\mli{s}_{m,k}(t)\left(\zeta^\mli{s}_{m,k}-u_{m,k}(t)-d^\mli{s}_{m,k}(t)\right)\Big),
	\end{align}
	where $C$ is denoted by
	\begin{align}
	\label{eq:C}
	C \triangleq &\sum_{k\in\mathcal{K}}\sum_{i\in\mathcal{N}_k}\left((a^\mli{u,max}_{n_k})^2+(\xi^\mli{max}_{n_k}+d^\mli{u,max}_{n_k})^2\right)\nonumber\\
	&+\sum_{m\in\mathcal{M}}\sum_{k\in\mathcal{K}}\big((c^\mli{max}L)^2+(u^\mli{max}_{m,k}+d^\mli{s,max}_{m,k})^2\big).
	\end{align}
\end{lemma}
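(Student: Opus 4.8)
The plan is to establish \eqref{eq:drift_bound} by bounding the one-slot Lyapunov drift $L(t+1)-L(t)$ queue by queue, using the elementary fact that for any buffer of the form $X(t+1)=\max\{X(t)-b(t),0\}+a(t)$ with $a(t),b(t)\ge 0$,
\begin{align*}
X(t+1)^2\le X(t)^2+a(t)^2+b(t)^2+2X(t)\big(a(t)-b(t)\big),
\end{align*}
which follows from $\big(\max\{X-b,0\}\big)^2\le(X-b)^2$ together with $\max\{X-b,0\}\le X$ and $a\ge 0$. The delay-state queues have the simpler form $Z(t+1)=\max\{Z(t)+\omega(t),0\}$, for which the same argument (with $a\equiv 0$) gives $Z(t+1)^2\le Z(t)^2+\omega(t)^2+2Z(t)\omega(t)$.

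Next I would apply these to the four queue families. For $Q^\mli{u}_{n_k}$, \eqref{eq:task_queue_local} matches the template with $a=a^\mli{u}_{n_k}(t)$ and $b=s_{n_k}(t)+d^\mli{u}_{n_k}(t)$; for $Z^\mli{u}_{n_k}$, \eqref{eq:zqueue_local} gives $\omega=\zeta^\mli{u}_{n_k}-s_{n_k}(t)-d^\mli{u}_{n_k}(t)$; for $Q^\mli{s}_{m,k}$, \eqref{eq:task_queue_server} gives $a=a^\mli{s}_{m,k}(t)$ and $b=u_{m,k}(t)+d^\mli{s}_{m,k}(t)$; and for $Z^\mli{s}_{m,k}$, \eqref{eq:zqueue_server} gives $\omega=\zeta^\mli{s}_{m,k}-u_{m,k}(t)-d^\mli{s}_{m,k}(t)$. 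Multiplying each inequality by $\tfrac12$ and summing over all $k$, $n_k$, $m$ recovers $L(t+1)-L(t)$ on the left, while on the right the cross terms $X(t)(a-b)$ and $Z(t)\omega$ reproduce \emph{verbatim} the four linear groups weighted by $Q^\mli{u}_{n_k}(t)$, $Z^\mli{u}_{n_k}(t)$, $Q^\mli{s}_{m,k}(t)$, and $Z^\mli{s}_{m,k}(t)$ in \eqref{eq:drift_bound}. Adding the penalty $-\epsilon(\cdots)$ to both sides then aligns the utility and drop terms as well.

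It then remains to absorb the leftover squared terms into the constant $C$ in \eqref{eq:C}. Each is bounded by its worst case: $a^\mli{u}_{n_k}(t)^2\le(a^\mli{u,max}_{n_k})^2$; from \eqref{eq:offloading_1} and \eqref{eq:drop_local}, $0\le s_{n_k}(t)+d^\mli{u}_{n_k}(t)\le\xi^\mli{max}_{n_k}+d^\mli{u,max}_{n_k}$, so both $(s_{n_k}(t)+d^\mli{u}_{n_k}(t))^2$ and---using $0\le\zeta^\mli{u}_{n_k}\le d^\mli{u,max}_{n_k}$---the term $(\zeta^\mli{u}_{n_k}-s_{n_k}(t)-d^\mli{u}_{n_k}(t))^2$ are at most $(\xi^\mli{max}_{n_k}+d^\mli{u,max}_{n_k})^2$. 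Symmetrically, on the server side $0\le u_{m,k}(t)+d^\mli{s}_{m,k}(t)\le u^\mli{max}_{m,k}+d^\mli{s,max}_{m,k}$ and $0\le\zeta^\mli{s}_{m,k}\le d^\mli{s,max}_{m,k}$ handle the $u$- and $\zeta^\mli{s}$-squares. Since $C$ is chosen generously, these majorizations close the estimate.

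The step I expect to be the main obstacle is the one remaining server-side square, $a^\mli{s}_{m,k}(t)^2\le(c^\mli{max}L)^2$, because it is not a direct variable cap but must be read off the channel-allocation constraints. Writing $a^\mli{s}_{m,k}(t)=\sum_{n_k}\sum_l z_{n_k,l,m}(t)\hat{s}_{n_k}(t)$ with $\hat{s}_{n_k}(t)\le s_{n_k}(t)\le c^\mli{max}$ by \eqref{eq:offloading_2}, constraint \eqref{eq:allocation_2} forces at most one active device on each channel, so at most $L$ of the indicators $z_{n_k,l,m}(t)$ are nonzero; hence $a^\mli{s}_{m,k}(t)\le c^\mli{max}L$. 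Confirming this counting bound and checking that the telescoping of the cross terms leaves no residual $Q(t)$- or $Z(t)$-dependent remainder is the only delicate bookkeeping; the rest is the routine ``complete-the-square'' drift computation.
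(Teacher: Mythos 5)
Your proposal is correct and follows essentially the same route as the paper's proof: square each queue recursion using the standard inequality $(\max\{Q-b,0\}+a)^2\le Q^2+a^2+b^2+2Q(a-b)$ (and $\max\{x,0\}^2\le x^2$ for the delay-state queues), absorb the squared terms into the constant $C$ via the worst-case caps — including $a^\mli{s}_{m,k}(t)\le c^\mli{max}L$ from the one-device-per-channel constraint \eqref{eq:allocation_2}, which the paper uses implicitly — and sum over all queues. The cross terms then yield exactly the linear groups in \eqref{eq:drift_bound}, matching the paper's argument.
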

\begin{proof}
	\label{proof:drift_bound}
	Note that for any $Q\ge0$, $b\ge0$, $a\ge0$, we have
	\begin{align}
	\label{eq:db_1}
	\left(\max\{Q-b,0\}+a\right)^2\le Q^2+a^2+b^2+2Q(a-b).
	\end{align}
	Based on \eqref{eq:db_1}, squaring the dynamics of $Q^\mli{u}_{n_k}(t)$ in \eqref{eq:task_queue_local} yields
	\begin{align}
	&\left(Q^\mli{u}_{n_k}(t+1)\right)^2-\left(Q^\mli{u}_{n_k}(t)\right)^2\nonumber\\
	\le& \left(a^\mli{u,max}_{n_k}\right)^2 + \left(\xi_{n_k}^\mli{max}+d^\mli{u,max}_{n_k}\right)^2+2Q^\mli{u}_{n_k}(t)\big(a^\mli{u}_{n_k}(t)\nonumber\\
	&-s^\mli{u}_{n_k}(t)-d^\mli{u}_{n_k}(t)\big).
	\end{align}
	Similarly, we have
	\begin{align}
	&\left(Q^\mli{s}_{m,k}(t+1)\right)^2-\left(Q^\mli{s}_{m,k}(t)\right)^2\nonumber\\
	\le& \left(c^\mli{max}L\right)^2 + \big(u^\mli{max}_{m,k}+d^{s,max}_{m,k}\big)^2+2Q^\mli{s}_{m,k}(t)\big(a^\mli{s}_{m,k}(t)\nonumber\\
	&-u_{m,k}(t)-d^\mli{e}_{m,k}(t)\big).
	\end{align}
	Squaring the dynamics of $Z^\mli{u}_{n_k}(t)$ in \eqref{eq:zqueue_local} and using the fact that $\max\{a,0\}^2\le a^2$ and $(\zeta^\mli{u}_{n_k}(t)-s_{n_k}(t)-d^\mli{u}_{n_k}(t))^2\le (\xi_{n_k}^\mli{max}+d^\mli{u,max}_{n_k})^2$, we have
	\begin{align}
	\label{eq:db_2}
	&\left(Z^\mli{u}_{n_k}(t+1)\right)^2-\left(Z^\mli{u}_{n_k}(t)\right)^2\nonumber\\
	\le& \left(\xi_{n_k}^\mli{max}+d^\mli{u,max}_{n_k}\right)^2+2Z^\mli{u}_{n_k}(t)\big(\zeta^\mli{u}_{n_k}(t)-s^\mli{u}_{n_k}(t)-d^\mli{u}_{n_k}(t)\big).
	\end{align}
	Similar to \eqref{eq:db_2}, the following holds
	\begin{align}
	&\big(Z^\mli{s}_{m,k}(t+1)\big)^2-\big(Z^\mli{s}_{m,k}(t)\big)^2\nonumber\\
	\le& \big(u^\mli{max}_{m,k}+d^\mli{s,max}_{m,k}\big)^2+ 2Z^\mli{s}_{m,k}(t)\big(\zeta^\mli{s}_{m,k}-u_{m,k}(t)-d^\mli{s}_{m,k}(t)\big).
	\end{align}
	Summing the squared differences in the queues yields the result.
\end{proof}
Therefore, instead of optimizing long-term problem \eqref{problem_tran}, we attempt to minimize the following dual problem in each slot
\begin{align}
\label{problem_dual}
\begin{split}
&\textbf{(DP)}~\mathop{\min}_{\bm{d}(t),\bm{z}(t),\bm{s}(t)}~\hat{D}_{\epsilon}(t),\\
&\quad\quad\quad\quad~~\text{s.t.}~~\eqref{eq:binary_variable}-\eqref{eq:drop_local},\eqref{eq:drop_server}.
\end{split}
\end{align}
It is worth noting that we do not need to explicitly handle Constraints \eqref{eq:buffer_size} and \eqref{eq:lem_d_1} in solving dual problem \eqref{problem_dual}, simplifying the control process. Later, we will rigorously show that \eqref{eq:buffer_size} and \eqref{eq:lem_d_1} can be satisfied by the proposed algorithm via properly setting weight parameter $\epsilon$ in Theorem \ref{thm:stability}.

Problem \eqref{problem_dual} can be decomposed into the following three sub-problems.
\begin{itemize}
	\item \textbf{Dropping on user device.} In each slot $t$, observing the current task arrivals, the buffer state, and the virtual queue backlog, user device $n_k$ decides the amount of dropped tasks (\ie, $d^\mli{u}_{n_k}(t)$) via solving the following sub-problem
	\begin{align}
	\label{subpro_drop_local}
	\begin{split}
	\max_{d^\mli{u}_{n_k}}&~\epsilon \cdot g_{n_k}\left(a^\mli{u}_{n_k}(t)-d^\mli{u}_{n_k}\right)+\left(Q^\mli{u}_{n_k}(t)+Z^\mli{u}_{n_k}(t)\right)d^\mli{u}_{n_k}, \\
	\text{s.t.}&~0\le d^\mli{u}_{n_k}\le d^\mli{u,max}_{n_k}.
	\end{split}
	\end{align}
	\item \textbf{Dropping on edge server.} During slot $t$, based on the current buffer state and the virtual queue backlog, the amount of dropped tasks on VM $v_{m,k}$ (\ie, $d^\mli{s}_{m,k}(t)$) depends on the solution of the sub-problem below
	\begin{align}
	\label{subpro_drop_server}
	\begin{split}
	\max_{d^\mli{s}_{m,k}}&~\left(Q^\mli{s}_{m,k}(t)+Z^\mli{s}_{m,k}(t)-\beta\epsilon\right)d^\mli{s}_{m,k}, \\
	\text{s.t.}&~~0\le d^\mli{s}_{m,k}\le d^\mli{s,max}_{m,k}.
	\end{split}
	\end{align}
	\item \textbf{Offloading decision.} We solve the following sub-problem to obtain the offloading decision and the amount of transmitted tasks (\ie, $s_{n_k}$ and $z_{n_k,l,m}(t)$) per slot, \ie,
	\begin{align}
	\label{subpro_offloading}
	\begin{split}
	\max_{\{s_{n_k},z_{n_k,l,m}\}} &~\sum_{k\in\mathcal{K}}\sum_{n_k\in\mathcal{N}_k}\left(Q^\mli{u}_{n_k}(t)+Z^\mli{u}_{n_k}(t)\right)s_{n_k}\\
	&~-\sum_{m\in\mathcal{M}}\sum_{k\in\mathcal{K}}\sum_{n_k\in\mathcal{N}_k}\sum_{l\in\mathcal{L}}Q^\mli{s}_{m,k}(t)s_{n_k}z_{n_k,l,m},\\
	\text{s.t.}&~~\eqref{eq:binary_variable}-\eqref{eq:offloading_2}.
	\end{split}
	\end{align}
\end{itemize}

\subsection{Distributed Task Offloading with Delay Guarantees}
This subsection provides a distributed \emph{Task Offloading with Delay Guarantees} algorithm (TODG) to solve problem \eqref{problem_tran} by deriving the solutions to sub-problems \eqref{subpro_drop_local}-\eqref{subpro_offloading}. Note that \eqref{subpro_drop_local} and \eqref{subpro_drop_server} are both convex optimization problems, of which the closed-form expression can be easily found. However, solving problem \eqref{subpro_offloading} is a non-trivial task since \eqref{subpro_offloading} resembles a 3-dimensional matching problem, \ie, matching among user devices, channels and edge servers (see Fig. \ref{figure:system_model}), which has been proven an NP-hard problem in the literature \cite{kann1991maximum,kushagra2020three}. Besides, the offloading decision variable $z_{n_k,l,m}(t)$ is also coupled with the amount of transmitted tasks $s_{n_k}(t)$, which exacerbates the complexity of the problem. Next, we provide the solutions to each sub-problem separately.

\subsubsection{Dropping on user device}
Let $d^{\text{u}*}_{n_k}(t)$ denote the optimal solution of sub-problem \eqref{subpro_drop_local} for each $k\in\mathcal{K}$ and $n_k\in\mathcal{N}_k$, then the expression of $d^{\text{u}*}_{n_k}(t)$ can be derived from the theorem below.
\begin{lemma}
	\label{lemma:solution_drop_local}
	Suppose that $g_{n_k}(0)=0$. If there exists $d_0\in\mathbb{R}$ such that $-\epsilon\cdot g'\big(a^\mli{u}_{n_k}(t)-d_0\big)+Q^\mli{u}_{n_k}(t)+Z^\mli{u}_{n_k}(t)=0$ holds, $d^{\text{u}*}_{n_k}(t)$ can be expressed by
	\begin{align}
	\label{eq:d_opt1_local}
	d^{\text{u}*}_{n_k}(t)=
	\begin{cases}
	\left[a^\mli{u}_{n_k}(t)-{g'}^{-1}_{n_k}\left(\frac{Q^\mli{u}_{n_k}(t)+Z^\mli{u}_{n_k}(t)}{\epsilon}\right)\right]^{d^\mli{u,max}_{n_k}}_{0}&,~\epsilon>0,\\
	d^\mli{u,max}_{n_k}&,~\epsilon=0,
	\end{cases}
	\end{align}
	where $g'_{n_k}(\cdot)$ denotes the first-order derivative of $g_{n_k}(\cdot)$ and $[x]^a_b\triangleq\min\{\max\{x,b\},a\}$. Otherwise, we have
	\begin{align}
	\label{eq:d_opt2_local}
	d^{\text{u}*}_{n_k}(t)=&\mathop{\arg\max}_{d\in\left\{0,d^\mli{u,max}_{n_k}\right\}}\Big\{\epsilon \cdot g_{n_k}\left(a^\mli{u}_{n_k}(t)-d\right)+\nonumber\\
	&\left(Q^\mli{u}_{n_k}(t)+Z^\mli{u}_{n_k}(t)\right)\cdot d\Big\}.
	\end{align}
\end{lemma}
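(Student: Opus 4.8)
The plan is to treat \eqref{subpro_drop_local} as a one-dimensional concave maximization over the compact interval $[0,d^\mli{u,max}_{n_k}]$ and to characterize its maximizer through first-order optimality conditions. Writing $d\triangleq d^\mli{u}_{n_k}$ and $f(d)\triangleq \epsilon\cdot g_{n_k}(a^\mli{u}_{n_k}(t)-d)+(Q^\mli{u}_{n_k}(t)+Z^\mli{u}_{n_k}(t))\,d$, I would first establish that $f$ is concave in $d$: since $g_{n_k}$ is concave and the map $d\mapsto a^\mli{u}_{n_k}(t)-d$ is affine, the term $\epsilon\cdot g_{n_k}(a^\mli{u}_{n_k}(t)-d)$ is concave whenever $\epsilon\ge 0$, and the remaining term is linear, so $f$ is a sum of concave functions. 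Differentiating gives $f'(d)=-\epsilon\,g'_{n_k}(a^\mli{u}_{n_k}(t)-d)+Q^\mli{u}_{n_k}(t)+Z^\mli{u}_{n_k}(t)$, which is continuous (a differentiable concave function has continuous derivative) and non-increasing in $d$, because $g'_{n_k}$ is non-increasing by concavity while its argument $a^\mli{u}_{n_k}(t)-d$ decreases in $d$.

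Next I split into cases matching the hypotheses. When $\epsilon>0$ and a stationary point $d_0\in\mathbb{R}$ with $f'(d_0)=0$ exists, i.e.\ $(Q^\mli{u}_{n_k}(t)+Z^\mli{u}_{n_k}(t))/\epsilon$ lies in the range of $g'_{n_k}$, I solve the stationarity equation to obtain the unconstrained maximizer $d_0=a^\mli{u}_{n_k}(t)-{g'}^{-1}_{n_k}((Q^\mli{u}_{n_k}(t)+Z^\mli{u}_{n_k}(t))/\epsilon)$. By concavity of $f$, the constrained maximizer over $[0,d^\mli{u,max}_{n_k}]$ is exactly the projection of $d_0$ onto that interval, namely $[d_0]^{d^\mli{u,max}_{n_k}}_0$, which yields the first branch of \eqref{eq:d_opt1_local}. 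When $\epsilon=0$ the objective degenerates to $(Q^\mli{u}_{n_k}(t)+Z^\mli{u}_{n_k}(t))\,d$, a linear function with non-negative slope since queue backlogs are non-negative, so the maximizer is the right endpoint $d^\mli{u,max}_{n_k}$, giving the second branch.

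For the remaining case, when no $d_0\in\mathbb{R}$ satisfies $f'(d_0)=0$, the continuous non-increasing derivative $f'$ never vanishes and hence keeps a constant sign on all of $\mathbb{R}$ by the intermediate value theorem; consequently $f$ is strictly monotone on $[0,d^\mli{u,max}_{n_k}]$ and attains its maximum at an endpoint, and comparing the two endpoint values directly produces \eqref{eq:d_opt2_local}. The only technical care I anticipate lies in the first branch: when $g_{n_k}$ is not strictly concave, $g'_{n_k}$ need not be strictly decreasing, so ${g'}^{-1}_{n_k}$ is a set-valued inverse and the stationary points form a (possibly non-degenerate) interval. I would note that any selection of the preimage still lies in that interval, and that projecting any such selection onto $[0,d^\mli{u,max}_{n_k}]$ yields a valid maximizer by concavity. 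This monotonicity and selection bookkeeping, rather than any substantive estimate, is the main obstacle; everything else follows from the standard first-order conditions for a concave program on a bounded interval.
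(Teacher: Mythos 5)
Your proposal is correct and follows essentially the same route as the paper: treat \eqref{subpro_drop_local} as a one-dimensional concave program, locate the first-order stationary point $a^\mli{u}_{n_k}(t)-{g'}^{-1}_{n_k}\big((Q^\mli{u}_{n_k}(t)+Z^\mli{u}_{n_k}(t))/\epsilon\big)$, and project it onto $[0,d^\mli{u,max}_{n_k}]$, falling back to endpoint comparison when no stationary point exists. In fact your write-up is more complete than the paper's own proof, which breaks off after defining the stationary point; your explicit treatment of the $\epsilon=0$ branch, the no-stationary-point case via the sign of the (non-increasing) derivative, and the set-valued inverse under non-strict concavity fills in exactly what the paper omits.
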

\begin{proof}
	\label{proof:solution_drop_local}
	For convenience, in slot $t$, we denote 
	\begin{align}
	f_{n_k}(d)\triangleq \epsilon \cdot g_{n_k}\left(a^\mli{u}_{n_k}(t)-d\right)+\left(Q^\mli{u}_{n_k}(t)+Z^\mli{u}_{n_k}(t)\right)d.
	\end{align}
	When ${g'}^{-1}_{n_k}(\cdot)$ is defined at $(Q_{n_k}(t)+Z_{n_k}(t))/V$, we discuss the optimal solutions under two cases. If $\epsilon>0$, the first-order stationary point $\tilde{d}_{n_k}(t)$ of $f_{n_k}(\cdot)$ is 
	\begin{align}
	\tilde{d}_{n_k}(t)\triangleq a^\mli{u}_{n_k}(t)-{g'}^{-1}_{n_k}\left(\frac{Q^\mli{u}_{n_k}(t)+Z^\mli{u}_{n_k}(t)}{\epsilon}\right).
	\end{align}
	Let $d^{\text{u}*}_{n_k}(t)$ denote the optimal solution of sub-problem \eqref{subpro_drop_local} for each $k\in\mathcal{K}$ and $n_k\in\mathcal{N}_k$, then the expression of $d^{\text{u}*}_{n_k}(t)$ can be derived from the theorem below.
\end{proof}
Lemma \ref{lemma:solution_drop_local} indicates that the optimal $d^\mli{u}_{n_k}(t)$ depends on the task queue and delay state queue backlogs. A small value of $Q^\mli{u}_{n_k}(t)+Z^\mli{u}_{n_k}(t)$ implies that the available task buffer is sufficient, and the current latency is relatively low on the local device, so it is unnecessary to drop tasks. On the contrary, when $Q^\mli{u}_{n_k}(t)+Z^\mli{u}_{n_k}(t)$ is large, it will be better to drop some outdated tasks for system stability and rapid response.

\subsubsection{Dropping on edge server}
Clearly, sub-problem \eqref{subpro_drop_server} is a linear programming problem. For each $m\in\mathcal{M}$ and $k\in\mathcal{K}$, let $d^{\text{s}*}_{m,k}(t)$ denote the optimal solution of \eqref{subpro_drop_server}, then we have
\begin{align}
\label{eq:d_opt_server}
d^{\text{s}*}_{m,k}(t)=
\begin{cases}
d^\mli{s,max}_{m,k}&,~\text{if}~Q^\mli{s}_{m,k}(t)+Z^\mli{s}_{m,k}(t)>\beta\epsilon,\\
0&,~\text{else}.
\end{cases}
\end{align}
Similar to Lemma \ref{lemma:solution_drop_local}, edge servers will dynamically adapt the dropping rates based on the current queue states. 

\subsubsection{Offloading decision}
Aiming at optimizing sub-problem \eqref{subpro_offloading}, we first attempt to transform it into a bipartite matching problem. Then, we provide a decentralized method to resolve it efficiently. To begin with, we derive the following theorem to find the optimal amount of s from the local buffer.
\begin{lemma}
	\label{lemma:optimal_s}
	For each $k\in\mathcal{K}$, $n_k\in\mathcal{N}_k$, $l\in\mathcal{L}$, and $m\in\mathcal{M}$, let $s^*_{n_k}(t)$ and $z^*_{n_k,l,m}(t)$ denote the optimal solution of problem \eqref{subpro_offloading} in slot $t$. Then, the following holds
	\begin{align}
	\label{eq:s_opt}
	s^*_{n_k}(t)=\sum_{m\in\mathcal{M}}\sum_{l\in\mathcal{L}}z^*_{n_k,l,m}(t)\cdot\min\left\{\xi_{n_k}(t),c_{n_k,l,m}(t)\right\}.
	\end{align}
\end{lemma}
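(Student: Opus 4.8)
The plan is to exploit the fact that, once the offloading decision $\bm{z}(t)$ is fixed, the objective of \eqref{subpro_offloading} is separable and \emph{linear} in each $s_{n_k}$, so its maximizer must lie at an endpoint of the feasible interval. First I would fix an optimal $\bm{z}^*(t)$ and rewrite the objective by collecting the terms depending on a single $s_{n_k}$. By \eqref{eq:allocation_1}, at most one of the variables $\{z_{n_k,l,m}(t)\}_{l,m}$ equals $1$, so the coefficient of $s_{n_k}$ reduces to $Q^\mli{u}_{n_k}(t)+Z^\mli{u}_{n_k}(t)-\sum_{l,m}Q^\mli{s}_{m,k}(t)z^*_{n_k,l,m}(t)$, which for a device assigned to some $(l^*,m^*)$ is simply $\alpha_{n_k}\triangleq Q^\mli{u}_{n_k}(t)+Z^\mli{u}_{n_k}(t)-Q^\mli{s}_{m^*,k}(t)$.

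Next I would read off the feasible range of $s_{n_k}$. Combining \eqref{eq:offloading_1} with \eqref{eq:offloading_2} and $z^*_{n_k,l^*,m^*}(t)=1$, the constraint becomes $0\le s_{n_k}\le\min\{\xi_{n_k}(t),c_{n_k,l^*,m^*}(t)\}$, whereas for an unassigned device \eqref{eq:offloading_2} forces $s_{n_k}=0$. Since the per-device objective is linear, its maximum over this interval is attained at $s^*_{n_k}(t)=\min\{\xi_{n_k}(t),c_{n_k,l^*,m^*}(t)\}$ when $\alpha_{n_k}\ge0$ and at $s^*_{n_k}(t)=0$ when $\alpha_{n_k}<0$. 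For the unassigned case the right-hand side of \eqref{eq:s_opt} is $0$, matching $s^*_{n_k}(t)=0$; for an assigned device with $\alpha_{n_k}\ge0$ it equals $\min\{\xi_{n_k}(t),c_{n_k,l^*,m^*}(t)\}$, again matching the formula.

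The remaining and most delicate step is the case $\alpha_{n_k}<0$, where the pointwise maximizer $s^*_{n_k}(t)=0$ would violate \eqref{eq:s_opt} because its right-hand side is strictly positive. Here I would argue that such a configuration cannot survive in an optimal $\bm{z}^*(t)$: if device $n_k$ were assigned to $(l^*,m^*)$ with $\alpha_{n_k}<0$, then $s^*_{n_k}(t)=0$ makes its contribution to the objective vanish, so setting $z_{n_k,l^*,m^*}(t)=0$ leaves the objective value unchanged while freeing channel $l^*$. Hence there is an optimal solution in which no assigned triple has a negative coefficient, so that every assigned device falls under $\alpha_{n_k}\ge0$ and every remaining device is unassigned. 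Collecting the single nonzero term $z^*_{n_k,l^*,m^*}(t)\min\{\xi_{n_k}(t),c_{n_k,l^*,m^*}(t)\}$ back into $\sum_{m,l}z^*_{n_k,l,m}(t)\min\{\xi_{n_k}(t),c_{n_k,l,m}(t)\}$ then yields \eqref{eq:s_opt}. I expect the main obstacle to be precisely this endpoint and degenerate-coefficient bookkeeping: ensuring the chosen optimal pair $(s^*,z^*)$ is consistent across devices and that the $\alpha_{n_k}=0$ ties are resolved toward the maximal transmission rate, so the closed-form relation holds exactly rather than merely up to an equivalent optimum.
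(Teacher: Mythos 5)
Your proposal follows essentially the same route as the paper's own proof: fix the optimal $\bm{z}^*(t)$, observe that the objective is linear in each $s_{n_k}$ with coefficient $Q^\mli{u}_{n_k}(t)+Z^\mli{u}_{n_k}(t)-Q^\mli{s}_{m',k}(t)$, argue via optimality of $\bm{z}^*(t)$ that an assigned device must have a nonnegative coefficient, and then push $s_{n_k}$ to the upper endpoint $\min\{\xi_{n_k}(t),c_{n_k,l',m'}(t)\}$ permitted by \eqref{eq:offloading_1}--\eqref{eq:offloading_2}, with $s^*_{n_k}(t)=0$ forced for unassigned devices. Your explicit treatment of the degenerate cases (an assigned device with zero or negative coefficient, resolved by passing to an equivalent optimum and breaking ties toward the maximal rate) is in fact somewhat more careful than the paper's one-line assertion that the coefficient is strictly positive because ``otherwise $z^*_{n_k,l',m'}(t)=0$ is a better or equal solution.''
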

\begin{proof}
	\label{proof:optimal_s}
	Note that if user device $n_k$ is not selected to offload tasks in slot $t$, \ie, $z^*_{n_k,l,m}(t)=0$ for all $l\in\mathcal{L}$, then $s^*_{n_k}(t)=0$ must hold. Besides, if there exists $m'\in\mathcal{M}$ and $l'\in\mathcal{L}$ such that $z^*_{n_k,l',m'}(t)=1$, it is easy to see that $Q^\mli{u}_{n_k}(t)+Z^\mli{u}_{n_k}(t)-Q^\mli{s}_{m',k}(t)>0$; otherwise, $z^*_{n_k,l',m'}(t)=0$ is a  better or equal solution. Denote $h(\bm{s},\bm{z})$ in slot $t$ as
	\begin{align}
	\label{eq:h}
	h(\bm{s},\bm{z})\triangleq &\sum_{k\in\mathcal{K}}\sum_{n_k\in\mathcal{N}_k}\left(Q^\mli{u}_{n_k}(t)+Z^\mli{u}_{n_k}(t)\right)s_{n_k}\nonumber\\
	&-\sum_{m\in\mathcal{M}}\sum_{k\in\mathcal{K}}\sum_{n_k\in\mathcal{N}_k}\sum_{l\in\mathcal{L}}Q^\mli{s}_{m,k}(t)s_{n_k}z_{n_k,l,m}.
	\end{align}
	Then, we have $\partial h(\bm{z}^*(t),\bm{s}(t))/\partial s^*_{n_k}(t)>0$. Combined with \eqref{eq:offloading_1} and \eqref{eq:offloading_2}, the proof is completed.
\end{proof}
Lemma \ref{lemma:optimal_s} shows that if a user device is selected to offload tasks during a slot, it will transmit as many tasks as possible to the corresponding edge server from its local task buffer. Based on that, we can transform sub-problem \eqref{subpro_offloading} into the following one to obtain $z^*_{n_k,l,m}(t)$, \ie,
\begin{align}
\label{subpro_z}
\begin{split}
\max_{\{z_{n_k,l,m}\}} &~\sum_{m\in\mathcal{M}}\sum_{k\in\mathcal{K}}\sum_{n_k\in\mathcal{N}_k}\sum_{l\in\mathcal{L}}\Big(Q^\mli{u}_{n_k}(t)+Z^\mli{u}_{n_k}(t)\\
&~-Q^\mli{s}_{m,k}(t)\Big)\min\left\{\xi_{n_k}(t),c_{n_k,l,m}(t)\right\}\cdot z_{n_k,l,m},\\
\text{s.t.}&~~
\begin{cases}
\sum_{l \in \mathcal{L}}\sum_{m \in \mathcal{M}}z_{n_k,l,m} \le 1,\\
\sum_{k \in \mathcal{K}}\sum_{n_k \in \mathcal{N}_k}\sum_{m \in \mathcal{M}}z_{n_k,l,m} \le 1,\\
z_{n_k,l,m}=0~\text{or}~1.
\end{cases}
\end{split}
\end{align}
It is easy to see that the optimal solution to the above problem can be roughly seen as a ``maximum weight matching'' over $\mathcal{N}\times\mathcal{L}\times\mathcal{M}$, with $(Q^\mli{u}_{n_k}(t)+Z^\mli{u}_{n_k}(t)-Q^\mli{s}_{m,k}(t))\cdot\min\{\xi_{n_k}(t),c_{n_k,l,m}(t)\}$ being the weight of tuple $(n_k,l,m)$. Despite \eqref{subpro_z} resembling a 3-dimensional matching problem, we argue that it is equivalent to a bipartite matching in the next lemma. 
\begin{lemma}
	\label{lemma:bipartite}
	In slot $t$, for each $k\in\mathcal{K}$, $n_k\in\mathcal{N}_k$, and $l\in\mathcal{L}$, if there exists $m_{n_k,l}\in\mathcal{M}$ such that $z^*_{n_k,l,m_{n_k,l}}(t)=1$, we have
	\begin{align}
	\label{eq:m_opt}
	m_{n_k,l}=&\mathop{\arg\max}_{m\in\mathcal{M}}\Big\{\left(Q^\mli{u}_{n_k}(t)+Z^\mli{u}_{n_k}(t)-Q^\mli{s}_{m,k}(t)\right)\nonumber\\
	&\cdot\min\left\{\xi_{n_k}(t),c_{n_k,l,m}(t)\right\}\Big\}.
	\end{align}
\end{lemma}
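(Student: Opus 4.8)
The plan is to exploit a structural feature of problem \eqref{subpro_z}: its feasibility constraints couple only devices and channels and leave the server index $m$ completely unconstrained, so the server chosen for any active (device, channel) pair can be optimized independently through a one-line exchange argument. First I would abbreviate the weight of a tuple $(n_k,l,m)$ by
\[
w_{n_k,l,m}(t)\triangleq\big(Q^\mli{u}_{n_k}(t)+Z^\mli{u}_{n_k}(t)-Q^\mli{s}_{m,k}(t)\big)\min\{\xi_{n_k}(t),c_{n_k,l,m}(t)\},
\]
so that the objective of \eqref{subpro_z} is $\sum_{m,k,n_k,l}w_{n_k,l,m}(t)\,z_{n_k,l,m}$. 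The crucial observation is that both constraints in \eqref{subpro_z} sum over $m$: the first bounds $\sum_{l,m}z_{n_k,l,m}\le 1$ per device, and the second bounds $\sum_{k,n_k,m}z_{n_k,l,m}\le 1$ per channel. Hence no constraint limits how many (device, channel) pairs are routed to a given server, and the activity counts appearing in both constraints are invariant under any reassignment of $m$ for a fixed active pair $(n_k,l)$.

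Then I would proceed by an exchange argument. Suppose $\bm{z}^*(t)$ is optimal and $z^*_{n_k,l,m_{n_k,l}}(t)=1$ for some server $m_{n_k,l}$; by the two constraints this is the unique nonzero entry associated with device $n_k$ and with channel $l$. Setting $m^\star\triangleq\arg\max_{m\in\mathcal{M}}w_{n_k,l,m}(t)$, I would form $\tilde{\bm{z}}$ from $\bm{z}^*$ by putting $\tilde{z}_{n_k,l,m_{n_k,l}}=0$ and $\tilde{z}_{n_k,l,m^\star}=1$ while leaving every other entry unchanged. Because only the server index of a single active pair is altered, the per-device and per-channel sums are unchanged, so $\tilde{\bm{z}}$ is still feasible, and the objective changes by $w_{n_k,l,m^\star}(t)-w_{n_k,l,m_{n_k,l}}(t)\ge 0$ by definition of $m^\star$. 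Optimality of $\bm{z}^*$ then forces $w_{n_k,l,m_{n_k,l}}(t)=w_{n_k,l,m^\star}(t)$, i.e. $m_{n_k,l}$ itself attains the maximum, which is exactly \eqref{eq:m_opt}.

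The argument is essentially routine; the only point that needs care is the feasibility check in the exchange step, namely confirming that swapping the server preserves both matching constraints — and this is immediate precisely because neither constraint involves the server index beyond a sum over $m$. With \eqref{eq:m_opt} established, substituting $m_{n_k,l}$ back collapses the three-index weight to a weight indexed only by $(n_k,l)$, thereby reducing the apparent three-dimensional matching \eqref{subpro_z} to an ordinary maximum-weight bipartite matching over $\mathcal{N}\times\mathcal{L}$, as claimed.
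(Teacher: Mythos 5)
Your exchange argument is correct and is essentially the same approach the paper takes: its proof is the one-line remark that the result ``can be easily obtained by contradiction'' from Lemma~\ref{lemma:optimal_s}, and your swap of the server index for a fixed active $(n_k,l)$ pair is precisely that contradiction, spelled out with the key feasibility observation that neither constraint of \eqref{subpro_z} restricts the server index beyond a sum over $m$. The only cosmetic caveat is that, when the maximizer is not unique, your argument shows $m_{n_k,l}$ \emph{attains} the maximum rather than being \emph{the} $\arg\max$, which is the correct reading of \eqref{eq:m_opt} anyway.
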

\begin{proof}
	\label{proof:bipartite}
	Based on Lemma \ref{lemma:optimal_s}, the result can be easily obtained by contradiction.
\end{proof}
Lemma \ref{lemma:bipartite} implies that the matching between user devices and  channels suffices to determine the optimal solution of \eqref{subpro_z}. That is, problem \eqref{subpro_z} can be rewritten as a \emph{maximum weight bipartite matching} problem over the source set $\mathcal{N}$ and destination set $\mathcal{L}$ as follows
\begin{align}
\label{subpro_mwbm}
\begin{split}
\max_{\{z_{n_k,l}\}} &~\sum_{k\in\mathcal{K}}\sum_{n_k\in\mathcal{N}_k}\sum_{l\in\mathcal{L}} w_{n_k,l}(t) z_{n_k,l},\\
\text{s.t.}&~~
\begin{cases}
\sum_{l \in \mathcal{L}}z_{n_k,l} \le 1,\\
\sum_{k \in \mathcal{K}}\sum_{n_k \in \mathcal{N}_k}z_{n_k,l} \le 1,\\
z_{n_k,l}=0~\text{or}~1,\\
k\in\mathcal{K},n_k\in\mathcal{N}_k,l\in\mathcal{L}.
\end{cases}
\end{split}
\end{align}
The weight $w_{n_k,l}(t)$ between node $n_k$ and $l$ is denoted by
\begin{align}
\label{eq:weight}
w_{n_k,l}(t)\triangleq
\begin{cases}
\phi_{n_k,l}(t)&,~\text{if}~\phi_{n_k,l}(t)>0,\\
-\infty&,~\text{else},\\
\end{cases}
\end{align}
where
\begin{align}
	\phi_{n_k,l}(t)\triangleq\;
	&\Big(Q^\mli{u}_{n_k}(t)+Z^\mli{u}_{n_k}(t)-Q^\mli{s}_{m_{n_k,l},k}(t)\Big)\nonumber\\
	& \cdot \min\big\{\xi_{n_k}(t),c_{n_k,l,m_{n_k,l}}(t)\big\}.
\end{align}
Based on that, $z^*_{n_k,l,m}(t)$ can be derived via the optimal solution $z^*_{n_k,l}(t)$ of \eqref{subpro_mwbm}, \ie,
\begin{align}
\label{eq:z_star}
z^*_{n_k,l,m}(t)=
\begin{cases}
1,&~\text{if}~z^*_{n_k,l}(t)=1~\text{and}~m=m_{n_k,l},\\
0,&~\text{else}.
\end{cases}
\end{align}
Next, the key point is how to efficiently solve the maximum bipartite matching problem across time slots. One feasible solution is to employ the well-known \emph{Kuhn-Munkres} method to find the maximum matching within $\mathcal{O}\big(\max\{N,L\}^3\big)$ iterations \cite{bourgeois1971extension}. However, it is a computationally costly and highly centralized way, where all the information (including queue and channel states) requires sending to a central controller. Thus, it may be unworkable in some practical scenarios. Instead, building on the recent advance in multi-robot applications, we argue that the optimal matching can also be achieved in a distributed manner via the \emph{Multi-Robot Assignment} algorithm \cite{chopra2017distributed}. More specifically, each user device can send its corresponding weight (\ie, $w_{n_k.l}(t)$ in \eqref{eq:weight}) to only one of the connectable edge servers. After that, each edge server can carry out the local matching in parallel while exchanging necessary information with the adjacent servers as in \cite{chopra2017distributed}. Nevertheless, although it can alleviate the high computational complexity of the centralized methods via parallel computing, it may lead to relatively large communication cost, \ie, in the worst case requiring $\mathcal{O}(H\cdot\max\{N,L\}^2)$ communication rounds, with $H$ being the maximum hops among servers (see \cite[Corollary 4]{chopra2017distributed} for more details). 



\textbf{\emph{Periodic Strategy.}} To tackle these issues, we provide a \emph{$\delta$-periodic strategy} to mitigate the computational and communication cost incurred by the \emph{Multi-Robot Assignment} approach. That is, we allow only running the matching \eqref{eq:z_star} every $\delta$ slots, \ie, in $t\in\mathcal{T}_\delta\triangleq\{0,\delta,2\delta,\dots\}$; otherwise, we keep the channel allocation the same as that in the last slot. More specifically, in slot $t\notin\mathcal{T}_{\delta}$: 
\begin{itemize}
	\item The devices \emph{selected} in the last slot occupy the same channels and only need to decide the target servers;
	\item The devices \emph{not selected} in the last slot remain idle for the current slot.
\end{itemize}
In a nutshell, the $\delta$-periodic strategy computes the offloading decisions by 
\begin{align}
\label{eq:delta_periodic}
z_{n_k,l,m}(t)=
\begin{cases}
z^*_{n_k,l,m}(t),~&\text{if}~t\in\mathcal{T}_\delta,\\
z^*_{n_k,l}(t-1),~&\text{else if}~m=\hat{m}_{n_k,l}(t)\\
~&\text{and}~\hat{m}_{n_k,l}(t)\neq\mli{idle},\\
0,~&\text{else},
\end{cases}
\end{align}
where $z^*_{n_k,l,m}(t)$ is derived from \eqref{eq:z_star}, and $m_{n_k,l}(t)$ is expressed by
\begin{align}
	\hat{m}_{n_l,l}(t)=
	\begin{cases}
		m_{n_k,l} ,~&\text{if}~\phi_{n_k,l}(t)>0,\\
		\mli{idle},~&\text{else}.
	\end{cases}
\end{align}
Although the devices still need to decide the target servers $m_{n_k,l}(t)$ in each slot, the corresponding computational and communication cost is negligible, because $Q^\mli{s}_{m,k}(t)$ is a scalar, and the complexity of computing $m_{n_k,l}(t)$ is only $O(M)$. The idea behind the $\delta$-periodic strategy is based upon the observation that the weights $\{w_{n_k,l}(t)\}$ in \eqref{eq:weight}, determined by the channel capacities and queue backlogs, generally would not change too sharply in adjacent slots. Thus, intuitively the error posed by the periodic strategy between the optimal solution of \eqref{subpro_offloading} is acceptable. In Section 5, we rigorously quantify the induced error and show the $\delta$-periodic strategy can achieve a trade-off between the near-optimal system utility and the computational cost. It is worth noting that the periodic strategy also leads to a variant of the standard analytical techniques due to violating the optimality of the sub-problems' solutions.
\vspace{-0.5em}

\begin{figure}[ht]
\centering
\includegraphics[width=1.0\columnwidth]{./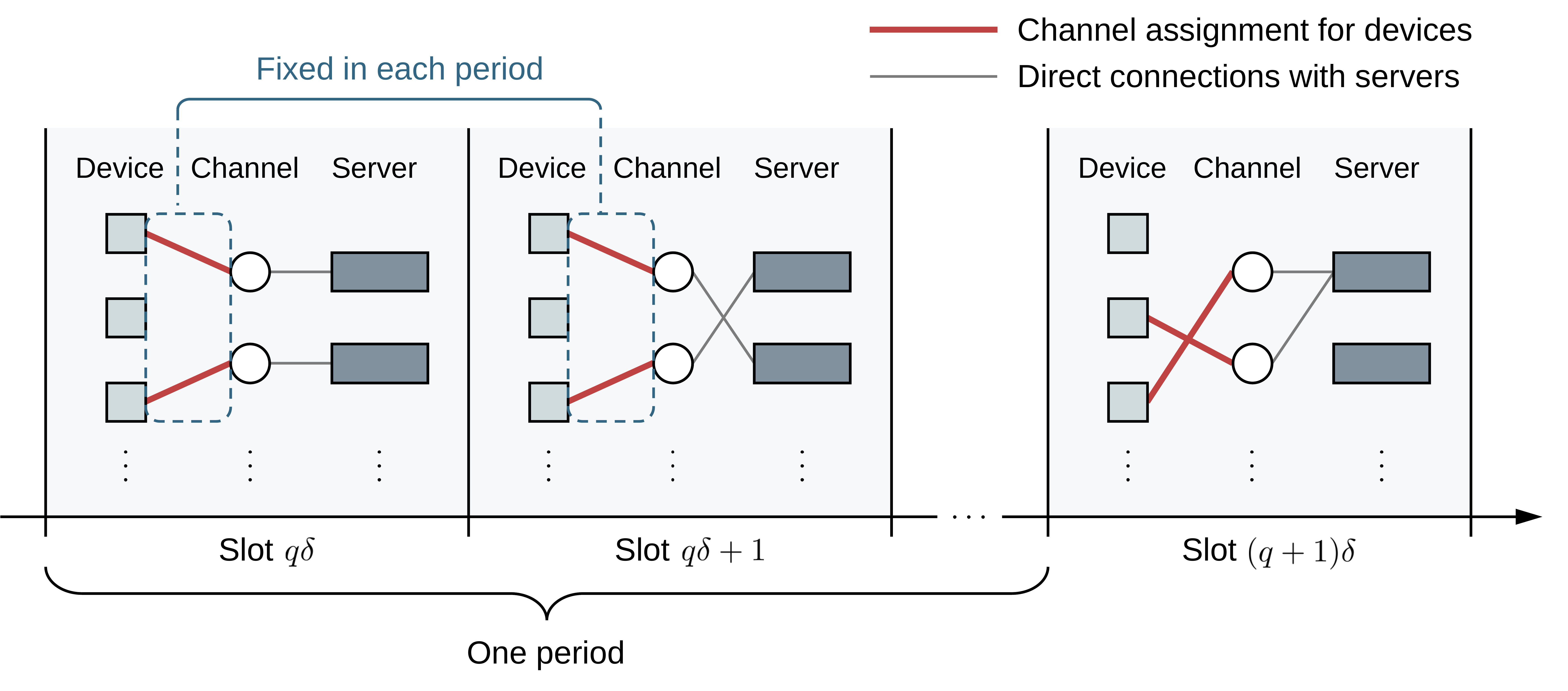} 
\vspace{-2.0em}
\caption{Periodic strategy. Here, $q$ is a non-negative integer. The $\delta$-periodic strategy enables the system to carry out channel allocation every $\delta$ slots. Within a period, each device occupying a channel only needs to decide the transmission rate and target server.}
\label{figure:periodic_strategy}
\end{figure} 

\begin{algorithm}[ht]
	\label{alg}
	\caption{Online Task Offloading with Delay Guarantees Algorithm (TODG)}
	\label{TODG}
	\LinesNumbered
	\KwIn{$\xi^\mli{u}_{n_k}(0)$, $Q^\mli{u}_{n_k}(0)$, $Q^\mli{u}_{m,k}(0)$, $Z^\mli{u}_{n_k}(0)$, $Z^\mli{u}_{m,k}(0)$, $c_{n_k,l,m}(0)$ for all $k\in\mathcal{K}$, $n_k\in\mathcal{N}_k$, $l\in\mathcal{L}$, $m\in\mathcal{M}$}
	\KwOut{$\bm{s}(t)$, $\bm{z}(t)$, $\bm{d}(t)$ for all slot $t$}
	{Select $\{\zeta^\mli{u}_{n_k}\}$, $\{\zeta^\mli{s}_{m,k}\}$ and $\epsilon$ according to \eqref{eq:lem_d_4}, \eqref{eq:lem_d_3}, and \eqref{eq:epsilon_selection} respectively;}\\         
	\For{$t = 0$ \KwTo $T$}{
		\tcp{User side}
		\For{user $n_k\in\mathcal{N}_k$ for all $k\in\mathcal{K}$}{
			\If{$t\in\mathcal{T}_\delta$}
			{
				{Compute $w_{n_k,l}(t)$ between user $n_k$ and channel $l$ for all $l\in\mathcal{L}$ according to \eqref{eq:weight};}\\
				{Send $\{w_{n_k,l}(t)\}$ to one of the connectable servers and receive the channel allocation decision $z^*_{n_k,l}(t)$;}\\
			}
			{Compute the offloading decision $\{z_{n_k,l,m}(t)\}$ according to \eqref{eq:delta_periodic};}\\
			\If{ existing $z_{n_k,l',m'}(t)=1$}
			{Send $s_{n_k}(t)$ tasks to server $m'$ via channel $l'$ according to \eqref{eq:s_opt};}
			{Compute the amount of dropped tasks $d^\mli{u}_{n_k}(t)$ according to \eqref{eq:d_opt1_local}-\eqref{eq:d_opt2_local};}\\
			{Update the task queue $Q^\mli{u}_{n_k}(t+1)$ and  delay state queue $Z^\mli{u}_{n_k}(t+1)$ according to \eqref{eq:task_queue_local} and \eqref{eq:zqueue_local} respectively;}  \\
		}
		\tcp{Server side}
		\If{$t\in\mathcal{T}_\delta$}
		{
			Edge servers compute the optimal channel allocation decisions $\{z^*_{n_k,l}(t)\}$ in a distributed manner with the \emph{Multi-Robot Assignment} algorithm; 
		}
		\For{server $m\in\mathcal{M}$}{
			\For{VM $k\in\mathcal{K}$}{
				{Compute the amount of dropped tasks $d^\mli{s}_{m,k}(t)$ according to \eqref{eq:d_opt_server};}\\
				{Update the task queue $Q^\mli{s}_{m,k}(t+1)$ and delay state queue $Z^\mli{s}_{m,k}(t+1)$ according to \eqref{eq:task_queue_server} and \eqref{eq:zqueue_server} respectively;}\\
			}
		} 
	}	
\end{algorithm}

Combining the proposed solutions of the sub-problems, we summarize the details of TODG in Algorithm \ref{alg}.

\section{Performance Analysis}
\label{sec:analysis}
In this section, we analyze the performance of TODG. First, we establish the delay and system stability guarantees. Then, we characterize the optimality gap and study the impact of the system parameters on the performance. 

We provide the following lemma to show how the parameters affect the system queue lengths. 
\begin{lemma}
	\label{lemma:epsilon}
	Suppose Assumption \ref{assu:dmax} holds. Given $\epsilon\ge0$, TODG can achieve $Q^\mli{u}_{n_k}(t)\le \beta\epsilon + a^\mli{u,max}_{n_k}$ and $Q^\mli{s}_{m,k}(t)\le \beta\epsilon + c^\mli{max}L$ for each $k\in\mathcal{K}$, $n_k\in\mathcal{N}_k$, $m\in\mathcal{M}$, and $t\in\mathbb{N}$.
\end{lemma}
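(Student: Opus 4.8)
The plan is to prove both bounds by induction on the slot index $t$, handling the user queue $Q^\mli{u}_{n_k}(t)$ and the server queue $Q^\mli{s}_{m,k}(t)$ with a common two-case template. The base case is immediate, since $Q^\mli{u}_{n_k}(0)=Q^\mli{s}_{m,k}(0)=0$ respects both bounds for every $\epsilon\ge0$. For the inductive step I would assume the bound holds at slot $t$ and split according to whether the current backlog lies below or above the threshold $\beta\epsilon$.

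In the \emph{low-backlog} case, say $Q^\mli{u}_{n_k}(t)\le\beta\epsilon$, I use the elementary fact that $\max\{Q-s-d,0\}\le Q$ for nonnegative departures, together with the arrival bound $a^\mli{u}_{n_k}(t)\le a^\mli{u,max}_{n_k}$, to get $Q^\mli{u}_{n_k}(t+1)\le Q^\mli{u}_{n_k}(t)+a^\mli{u}_{n_k}(t)\le\beta\epsilon+a^\mli{u,max}_{n_k}$. The server queue is treated identically once its arrivals are bounded by $a^\mli{s}_{m,k}(t)\le c^\mli{max}L$; this follows from constraints \eqref{eq:allocation_1} and \eqref{eq:allocation_2}, since at most $L$ distinct channels feed $v_{m,k}$ and each carries at most $c^\mli{max}$ tasks.

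The crux is the \emph{high-backlog} case, where I must show the algorithm's dropping decision prevents growth. On the user side I invoke Lemma \ref{lemma:solution_drop_local}: the sub-problem objective $f_{n_k}(d)$ in \eqref{subpro_drop_local} is concave because $g_{n_k}$ is concave, and its derivative $-\epsilon\,g'_{n_k}\!\big(a^\mli{u}_{n_k}(t)-d\big)+Q^\mli{u}_{n_k}(t)+Z^\mli{u}_{n_k}(t)$ is strictly positive on the whole feasible interval whenever $Q^\mli{u}_{n_k}(t)>\beta\epsilon$, since $g'_{n_k}\le\beta_{n_k}\le\beta$ and $Z^\mli{u}_{n_k}(t)\ge0$. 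Hence the maximizer is the boundary point $d^\mli{u}_{n_k}(t)=d^\mli{u,max}_{n_k}$, and Assumption \ref{assu:dmax} guarantees $d^\mli{u,max}_{n_k}\ge a^\mli{u,max}_{n_k}\ge a^\mli{u}_{n_k}(t)$. The same conclusion is read directly off the closed form \eqref{eq:d_opt_server} on the server side, with $d^\mli{s,max}_{m,k}\ge c^\mli{max}L\ge a^\mli{s}_{m,k}(t)$. With maximal dropping in force, a short case split on the sign of the queue's residual shows $Q^\mli{u}_{n_k}(t+1)\le\max\{Q^\mli{u}_{n_k}(t),a^\mli{u,max}_{n_k}\}$, which the inductive hypothesis keeps within the claimed bound; the server case is symmetric.

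I expect the monotonicity argument establishing $d^\mli{u}_{n_k}(t)=d^\mli{u,max}_{n_k}$ to be the only nontrivial step, as it is where the structural assumption $g'_{n_k}\le\beta$ and the nonnegativity of the delay-state queue $Z^\mli{u}_{n_k}(t)$ combine to pin the optimizer at the upper boundary; every other piece reduces to the telescoping queue inequalities. I would also remark that the result is independent of the $\delta$-periodic strategy, since the arrival-bound step relies only on the channel-access constraints and not on how the matching was computed in a given slot.
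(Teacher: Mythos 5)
Your proof is correct and follows essentially the same route as the paper's: induction on $t$ with a case split at the threshold $\beta\epsilon$, using the arrival bounds in the low-backlog case and the structure of the dropping sub-problems \eqref{subpro_drop_local} and \eqref{eq:d_opt_server} together with Assumption \ref{assu:dmax} to force maximal dropping in the high-backlog case. You actually supply details the paper leaves implicit (the positivity of $f'_{n_k}$ pinning the optimizer at $d^\mli{u,max}_{n_k}$, and the more careful bound $Q^\mli{u}_{n_k}(t+1)\le\max\{Q^\mli{u}_{n_k}(t),a^\mli{u,max}_{n_k}\}$ in place of the paper's blunter claim that the queue cannot increase), so no changes are needed.
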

\begin{proof}
	\label{proof:epsilon}
	We first prove $Q^\mli{u}_{n_k}(t)\le \beta\epsilon + a^\mli{u,max}_{n_k}$ for all slots. It is easy to see that this holds for $t=0$ since $Q^\mli{u}_{n_k}(0)=0$ for all $ k\in\mathcal{K}$ and $n_k\in\mathcal{N}_k$. Suppose this is true for a particular time slot $t$. We show that it also holds for $t+1$. If $Q^\mli{u}_{n_k}(t)\le \beta\epsilon$, $Q^\mli{u}_{n_k}(t+1)\le \beta\epsilon + a^\mli{u,max}_{n_k}$ holds because it can increase by at most $a^\mli{u,max}_{n_k}$ in any time slot. If $Q^\mli{u}_{n_k}(t)> \beta\epsilon$, based on Lemma \ref{lemma:solution_drop_local} and Assumption \ref{assu:dmax}, $d^\mli{u}_{n_k}(t)\ge a^\mli{u}_{n_k}(t)$. Hence, queue $Q^\mli{u}_{n_k}(t)$ cannot increase in the next time slot, \ie, $Q^\mli{u}_{n_k}(t+1)\le Q^\mli{u}_{n_k}(t)$, thereby yielding the result.
	
	Based on \eqref{eq:d_opt_server}, similarly we have $Q^\mli{s}_{m,k}(t)\le \beta\epsilon + c^\mli{max}L$ for all slots. Thus, the proof is completed.
\end{proof}
Lemma \ref{lemma:epsilon} indicates that the task queue sizes are bounded by $\epsilon$, so the weight parameter can implicitly control task queues. Combined with Lemma \ref{lemma:delay}, we derive the following theorem.

\begin{theorem}
	\label{thm:stability}
	Suppose that $\zeta^\mli{u}_{n_k}$ and $\zeta^\mli{s}_{m,k}$ satisfy \eqref{eq:lem_d_3}-\eqref{eq:lem_d_4}, and the following holds
	\begin{align}
	\label{eq:epsilon_selection}
	\epsilon\le\min\{\epsilon^\mli{u},\epsilon^\mli{s}\}/\beta,
	\end{align}
	where $\epsilon^\mli{u}$ and $\epsilon^\mli{s}$ are denoted as
	\begin{align}
	\label{eq:epsilon_u}
	\epsilon^\mli{u}&\triangleq \min_{k,n_k}\left\{Q^\mli{u,max}_{n_k}-\max\{a^\mli{u,max}_{n_k},\zeta^\mli{u}_{n_k}\}\right\},\\
	\label{eq:epsilon_s}
	\epsilon^\mli{s}&\triangleq\min_{m,k}\left\{ Q^\mli{s,max}_{m,k}-\max\{c^\mli{max}L,\zeta^\mli{s}_{m,k}\}\right\}.
	\end{align}
	Given Assumption \ref{assu:dmax}, if for each $m\in\mathcal{M}$, $k\in\mathcal{K}$, and $n_k\in\mathcal{N}_k$, $\zeta^\mli{s}_{m,k}$ and $\zeta^\mli{u}_{n_k}$ satisfy \eqref{eq:lem_d_3} and \eqref{eq:lem_d_4} respectively, \eqref{eq:buffer_size} and \eqref{eq:delay_constr} hold during all slots.
\end{theorem}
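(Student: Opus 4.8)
The plan is to establish the two claimed properties separately, since Theorem \ref{thm:stability} bundles together a buffer-size guarantee \eqref{eq:buffer_size} and a delay guarantee \eqref{eq:delay_constr}. The delay guarantee will follow immediately from Lemma \ref{lemma:delay} once we verify its hypotheses, so the real work lies in establishing the buffer bounds \eqref{eq:buffer_size} and the virtual-queue bounds \eqref{eq:lem_d_1}, after which Lemma \ref{lemma:delay} can be invoked as a black box.

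First I would handle the task-queue bounds in \eqref{eq:buffer_size}. Lemma \ref{lemma:epsilon} already gives $Q^\mli{u}_{n_k}(t)\le \beta\epsilon + a^\mli{u,max}_{n_k}$ and $Q^\mli{s}_{m,k}(t)\le \beta\epsilon + c^\mli{max}L$. To conclude $Q^\mli{u}_{n_k}(t)\le Q^\mli{u,max}_{n_k}$ it suffices that $\beta\epsilon + a^\mli{u,max}_{n_k}\le Q^\mli{u,max}_{n_k}$, i.e. $\beta\epsilon\le Q^\mli{u,max}_{n_k}-a^\mli{u,max}_{n_k}$; since the condition \eqref{eq:epsilon_selection} enforces $\beta\epsilon\le\epsilon^\mli{u}\le Q^\mli{u,max}_{n_k}-\max\{a^\mli{u,max}_{n_k},\zeta^\mli{u}_{n_k}\}\le Q^\mli{u,max}_{n_k}-a^\mli{u,max}_{n_k}$ for every $k,n_k$ by \eqref{eq:epsilon_u}, this inequality holds. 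The server-side bound $Q^\mli{s}_{m,k}(t)\le Q^\mli{s,max}_{m,k}$ follows symmetrically using \eqref{eq:epsilon_s} and $\beta\epsilon\le\epsilon^\mli{s}$.

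Next I would establish the virtual-queue bounds \eqref{eq:lem_d_1}, namely $Z^\mli{u}_{n_k}(t)\le Q^\mli{u,max}_{n_k}$ and $Z^\mli{s}_{m,k}(t)\le Q^\mli{s,max}_{m,k}$, which are exactly the hypotheses of Lemma \ref{lemma:delay}. I expect this to be the main obstacle, because unlike the task queues the delay-state queues receive a constant arrival $\zeta$ each slot regardless of the workload, so a simple ``it can only grow when already small'' argument is more delicate. The natural approach is an induction on $t$ mirroring the proof of Lemma \ref{lemma:epsilon}: the base case is $Z^\mli{u}_{n_k}(0)=0$, and for the inductive step I would split on whether $Z^\mli{u}_{n_k}(t)$ exceeds a threshold. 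The key is that whenever $Q^\mli{u}_{n_k}(t)+Z^\mli{u}_{n_k}(t)$ is large, the dropping rule of Lemma \ref{lemma:solution_drop_local} forces $d^\mli{u}_{n_k}(t)=d^\mli{u,max}_{n_k}$, and Assumption \ref{assu:dmax} guarantees $d^\mli{u,max}_{n_k}\ge\zeta^\mli{u}_{n_k}$ (since $\zeta^\mli{u}_{n_k}\le d^\mli{u,max}_{n_k}$ by its defining range), so the drift term $\zeta^\mli{u}_{n_k}-s_{n_k}(t)-d^\mli{u}_{n_k}(t)$ in the dynamics \eqref{eq:zqueue_local} is non-positive and $Z^\mli{u}_{n_k}$ cannot increase. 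The threshold must be chosen so that the drop is triggered before $Z$ reaches $Q^\mli{u,max}_{n_k}$, and below the threshold the per-slot increase of at most $\zeta^\mli{u}_{n_k}$ keeps $Z$ within $Q^\mli{u,max}_{n_k}$; this is where the margin $\epsilon^\mli{u}\ge\beta\epsilon$ in \eqref{eq:epsilon_selection}, together with $\zeta^\mli{u}_{n_k}\le\epsilon^\mli{u}$ (from \eqref{eq:epsilon_u}), is consumed. The server side proceeds identically via \eqref{eq:d_opt_server} and \eqref{eq:epsilon_s}.

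Finally, having established \eqref{eq:lem_d_1} and with $\zeta^\mli{u}_{n_k},\zeta^\mli{s}_{m,k}$ satisfying \eqref{eq:lem_d_3}-\eqref{eq:lem_d_4} by hypothesis, I would invoke Lemma \ref{lemma:delay} directly to conclude that the delay constraint \eqref{eq:delay_constr} holds in every slot, and \eqref{eq:buffer_size} has already been shown. This completes the proof. The only subtlety to watch is keeping the coupling consistent: the bound on $Z$ feeds Lemma \ref{lemma:delay}, which in turn needs Assumption \ref{assu:dmax}, so I would state the induction for $Q$ and $Z$ together to ensure the dropping-triggered arguments for both queues use mutually compatible thresholds.
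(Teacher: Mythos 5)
Your proposal matches the paper's own proof: the paper likewise bounds $Z^\mli{u}_{n_k}(t)\le\beta\epsilon+\zeta^\mli{u}_{n_k}$ and $Z^\mli{s}_{m,k}(t)\le\beta\epsilon+\zeta^\mli{s}_{m,k}$ by the same induction used for Lemma \ref{lemma:epsilon}, combines these with Lemma \ref{lemma:epsilon} and \eqref{eq:epsilon_u}--\eqref{eq:epsilon_s} to obtain all four bounds in \eqref{eq:buffer_size} and \eqref{eq:lem_d_1}, and then invokes Lemma \ref{lemma:delay}. One small correction: the inequality you need from \eqref{eq:epsilon_u} is $\beta\epsilon\le Q^\mli{u,max}_{n_k}-\zeta^\mli{u}_{n_k}$ (so that $\beta\epsilon+\zeta^\mli{u}_{n_k}\le Q^\mli{u,max}_{n_k}$), not ``$\zeta^\mli{u}_{n_k}\le\epsilon^\mli{u}$''.
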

\begin{proof}
	\label{proof:stability}
	Similar to the proof in Lemma \ref{lemma:epsilon}, it is easy to show that $Z^\mli{u}_{n_k}(t)\le \beta\epsilon + \zeta^\mli{u}_{n_k}$ and $Z^\mli{s}_{m,k}(t)\le \beta\epsilon + \zeta^\mli{s}_{m,k}$ for each $k\in\mathcal{K}$, $n_k\in\mathcal{N}_k$, $m\in\mathcal{M}$, and $t\in\mathbb{N}$. Based on \eqref{eq:epsilon_u} and \eqref{eq:epsilon_s}, we have
	\begin{align}
	Q^\mli{u}_{n_k}(t)&\le Q^\mli{u,max}_{n_k},\quad Q^\mli{s}_{m,k}(t)\le Q^\mli{s,max}_{m,k},\\
	Z^\mli{u}_{n_k}(t)&\le Q^\mli{u,max}_{n_k},\quad Z^\mli{s}_{m,k}(t)\le Q^\mli{s,max}_{m,k}.
	\end{align}
	Combined with Lemma \ref{lemma:delay}, we finish the proof.
\end{proof}
Theorem \ref{thm:stability} demonstrates that the buffer size constraints of and the task response time (\ie, \eqref{eq:buffer_size} and \eqref{eq:delay_constr}) can be well satisfied by appropriately selecting the values of parameters $\zeta^\mli{s}_{m,k}$, $\zeta^\mli{u}_{n_k}$, and $\epsilon$ for TODG. It is  beneficial to implicitly handle the constraints on the worst-case delay, which is hard to satisfy via only slot-level decisions. 

To analyze the optimality gap for TODG, we provide the following lemma to characterize the error induced by the $\delta$-periodic strategy.


\begin{lemma}
	\label{lemma:periodic_gap}
	For any $p,q\in\mathbb{N}$ and $q<\delta$, under the $\delta$-periodic strategy, the solution found by TODG in slot $t$ satisfies that
	\begin{align}
	\label{eq:lem_pd_1}
	\hat{D}(t)-\hat{D}^*(t)\le qG,~\text{if}~t=p\delta+q,
	\end{align}
	where $\hat{D}^*(t)$ denotes the optimal value of dual problem \eqref{problem_dual}, and $G\triangleq 2c^\mli{max}\cdot\max_{k,n_k}\{Q^\mli{u,max}_{n_k}\}\cdot\min\{N,L\}$.
\end{lemma}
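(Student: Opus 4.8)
The plan is to localize the gap $\hat{D}(t)-\hat{D}^*(t)$ entirely within the offloading sub-problem \eqref{subpro_offloading}, and then to bound the suboptimality of the frozen matching by the \emph{largest conceivable} value of that sub-problem, rather than by any slot-by-slot drift of the edge weights. First I would observe that the upper bound $\hat{D}_\epsilon(t)$ in \eqref{eq:drift_bound} splits additively into a term depending only on $\{d^\mli{u}_{n_k}\}$, a term depending only on $\{d^\mli{s}_{m,k}\}$, and a term depending only on $(\bm{s},\bm{z})$, the remaining summands being constants fixed by the current queue states and the given $u_{m,k}(t)$ and $a^\mli{u}_{n_k}(t)$. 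These three decoupled groups are exactly sub-problems \eqref{subpro_drop_local}, \eqref{subpro_drop_server}, and \eqref{subpro_offloading}. Since TODG solves the two dropping sub-problems \emph{optimally} in every slot (Lemma \ref{lemma:solution_drop_local} and \eqref{eq:d_opt_server}), the only source of the gap is the offloading term. Writing $V^*(t)$ for the optimal value of the maximum-weight bipartite matching \eqref{subpro_mwbm} and $V_{\mathrm{TODG}}(t)$ for the value the $\delta$-periodic assignment \eqref{eq:delta_periodic} achieves against the current-slot weights, this reduction yields the identity $\hat{D}(t)-\hat{D}^*(t)=V^*(t)-V_{\mathrm{TODG}}(t)$.

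Next I would dispose of the case $t\in\mathcal{T}_\delta$, i.e. $q=0$: here \eqref{eq:delta_periodic} runs the full matching \eqref{eq:z_star}, so $V_{\mathrm{TODG}}(t)=V^*(t)$ and the gap is $0=qG$. For $t=p\delta+q$ with $q\ge 1$ I would bound the two terms separately. On one side, $V_{\mathrm{TODG}}(t)\ge 0$: by construction any frozen device whose weight $\phi_{n_k,l}(t)$ is non-positive is set to idle through $\hat{m}_{n_k,l}(t)=\mli{idle}$, so the $\delta$-periodic assignment only ever collects non-negative edge weights. On the other side, I would upper-bound $V^*(t)$ by the maximum possible matching value. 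Each matched edge contributes at most $\phi_{n_k,l}(t)\le\left(Q^\mli{u}_{n_k}(t)+Z^\mli{u}_{n_k}(t)\right)c^\mli{max}$, where I discard the non-positive term $-Q^\mli{s}_{m,k}(t)\min\{\xi_{n_k}(t),c_{n_k,l,m}(t)\}$ and use $\min\{\xi_{n_k}(t),c_{n_k,l,m}(t)\}\le c^\mli{max}$, and a feasible matching over $\mathcal{N}\times\mathcal{L}$ contains at most $\min\{N,L\}$ edges.

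The key enabling ingredient is Theorem \ref{thm:stability}, which supplies the uniform bounds $Q^\mli{u}_{n_k}(t)\le Q^\mli{u,max}_{n_k}$ and $Z^\mli{u}_{n_k}(t)\le Q^\mli{u,max}_{n_k}$; combining them gives $Q^\mli{u}_{n_k}(t)+Z^\mli{u}_{n_k}(t)\le 2\max_{k,n_k}\{Q^\mli{u,max}_{n_k}\}$, so that $V^*(t)\le 2c^\mli{max}\max_{k,n_k}\{Q^\mli{u,max}_{n_k}\}\min\{N,L\}=G$. Assembling the pieces, for $q\ge 1$ we obtain $\hat{D}(t)-\hat{D}^*(t)=V^*(t)-V_{\mathrm{TODG}}(t)\le V^*(t)\le G\le qG$, which together with the $q=0$ case establishes the claim.

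The hard part is not the final estimate but the reduction that precedes it: one must argue carefully that, although the $\delta$-periodic strategy deliberately breaks optimality of the offloading sub-problem (as the paper warns), it still leaves the two dropping sub-problems solved exactly, so the gap is confined to a single additive term. A second, perhaps counter-intuitive, subtlety is recognizing that the slot-wise gap need \emph{not} be controlled by tracking how fast the matching weights drift within a period — an approach that runs into the product structure $\phi_{n_k,l}=(\text{queue difference})\cdot(\text{capacity})$ and, after a telescoping through the optimality of the period-start matching, only yields a bound of order $G$ anyway — but can instead be bounded directly by the worst-case matching value $G$, for which the a priori queue bounds of Theorem \ref{thm:stability} are precisely what is needed.
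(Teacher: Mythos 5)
Your proof is correct, and the reduction you perform first---confining the entire gap to the offloading term because the two dropping sub-problems \eqref{subpro_drop_local} and \eqref{subpro_drop_server} are still solved exactly in every slot---is precisely the identity the paper starts from in its own proof. Where you genuinely diverge is in the final estimate for $q\ge 1$: the paper bounds $\sum_{l}\bigl(\phi_{i^*_l,l}(t)-\phi_{\hat{i}_l,l}(t)\bigr)$ by what is in effect a $q$-step drift accumulation from the optimal matching computed at the start of the period, which is where the factor $q$ in $qG$ comes from; you instead bound the two sums separately---$V_{\mathrm{TODG}}(t)\ge 0$ because the idle rule in \eqref{eq:delta_periodic} never collects a non-positive weight, and $V^*(t)\le G$ because each of the at most $\min\{N,L\}$ matched edges has weight at most $2c^\mli{max}\max_{k,n_k}\{Q^\mli{u,max}_{n_k}\}$ under the queue bounds of Theorem \ref{thm:stability}. (Those bounds follow from Lemma \ref{lemma:epsilon} and the dropping rules alone, so there is no circularity in invoking them here, and the paper's constant $G$ presupposes them as well.) Your route is more elementary and in fact yields the uniform bound $\hat{D}(t)-\hat{D}^*(t)\le G$ for every $t$, which is tighter than $qG$ for $q\ge 2$ and would shrink the $(\delta-1)G/2$ term in Theorem \ref{thm:optimal_gap} to at most $(\delta-1)G/\delta\le G$; the only thing it gives up is the explicit dependence on the staleness $q$ that the paper's form is built to exhibit (and that its surrounding discussion of slowly drifting weights relies on), since a worst-case matching-value bound is by construction insensitive to how recently the matching was recomputed.
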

\begin{proof}
	\label{proof:periodic_gap}
	Because the optimal solutions of sub-problems \eqref{subpro_drop_local} and \eqref{subpro_drop_server} can be found, the following holds
	\begin{align}
	\label{eq:pg_1}
	&\hat{D}(t)-\hat{D}^*(t)\nonumber\\
	=&\sum_{m\in\mathcal{M}}\sum_{k\in\mathcal{K}}\sum_{n_k\in\mathcal{N}_k}\sum_{l\in\mathcal{L}}\left(Q^\mli{u}_{n_k}(t)+Z^\mli{u}_{n_k}(t)-Q^\mli{s}_{m,k}(t)\right)\nonumber\\
	&\cdot\min\left\{\xi_{n_k}(t),c_{n_k,l,m}(t)\right\}\cdot \left(z^*_{n_k,l,m}(t)-z_{n_k,l,m}(t)\right).
	\end{align}
	If $q=0$, it is easy to see that $\hat{D}(t+1)-\hat{D}^*(t+1)=0$. Based on \eqref{eq:delta_periodic}, $\vert z^*_{n_k,l,m}(t)-z_{n_k,l,m}(t)\vert=1$ if and only if $Q^\mli{u}_{n_k}(t)+Z^\mli{u}_{n_k}(t)-Q^\mli{s}_{m,k}(t)>0$. Let $\hat{i}_l\in\mathcal{N}$ and $i^*_l\in\mathcal{N}$ denote the device that is assigned with channel $l$ by the $\delta$-periodic and the optimal strategy in slot $t$ respectively, \ie, $z_{\hat{i}_l,l,m_{i,l}}(t)=1$ and $z^*_{i^*_l,l,m_{i,l}}(t)=1$ ($m_{i,l}$ is defined in Lemma \ref{lemma:bipartite}). Then, for $q>0$, \eqref{eq:pg_1} can be rewritten as 
	\begin{align}
	\label{eq:pg_2}
	&\hat{D}(t)-\hat{D}^*(t)\nonumber\\
	\le&\sum_{l\in\mathcal{L}} \phi_{i^*_l,l}(t)-\phi_{\hat{i}_l,l}(t)\nonumber\\
	\le& 2q c^\mli{max}\cdot\max_{k,n_k}\{Q^\mli{u,max}_{n_k}\}\cdot\min\{N,L\},
	\end{align}
	thereby \eqref{eq:lem_pd_1} holds. 
\end{proof}
Lemma \ref{lemma:periodic_gap} shows that the cumulative error in the decomposed sub-problems would linearly increase with the computation period $\delta$. We note that the gap $G$ in Lemma \ref{lemma:periodic_gap} is derived in a rare worst case, \ie, all the transmission rates of selected user devices suddenly become zero while the other devices enjoy the maximum channel capacity in the current slot. However, the channel and task queue states commonly do not fluctuate so sharply between adjacent slots. Thus, it is reasonable to expect that the error incurred by the $\delta$-periodic strategy is often much smaller than the theoretical gap. 

Based on Lemma \ref{lemma:periodic_gap}, we are ready to establish the optimality gap for TODG.

\begin{theorem}
	\label{thm:optimal_gap}
	Let $\bar{U}^\mli{opt}$ and $\bar{U}$ denote the objective values in \eqref{problem_tran} corresponding to the optimal and our solutions respectively. As in Theorem \ref{thm:stability}, suppose Assumption \ref{assu:dmax}, and $\epsilon\le\min\{\epsilon^\mli{u},\epsilon^\mli{s}\}/\beta$ are satisfied and all the stochastic variables are independent and identically distributed (i.i.d) over time slots, then the following holds
	\begin{align}
	\label{eq:utilitybound}
	\bar{U}^\mli{opt}-\bar{U}\le \frac{C+(\delta-1)G/2}{\epsilon},
	\end{align}
	where $C$ is defined in \eqref{eq:C}.
\end{theorem}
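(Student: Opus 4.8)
The plan is to run the standard drift-plus-penalty argument, comparing TODG's trajectory against that of an optimal stationary randomized (``$\omega$-only'') policy, and then absorbing the per-slot suboptimality of the $\delta$-periodic strategy through Lemma \ref{lemma:periodic_gap}. Write $U(t) \triangleq \sum_{k,n_k} g_{n_k}(a^\mli{u}_{n_k}(t) - d^\mli{u}_{n_k}(t)) - \beta\sum_{m,k} d^\mli{s}_{m,k}(t)$ for the one-slot objective, so that $D_\epsilon(t) = L(t+1) - L(t) - \epsilon\, U(t)$. By Lemma \ref{lemma:drift_bound} any feasible decision satisfies $D_\epsilon(t) \le \hat{D}_\epsilon(t)$, and writing $t = p\delta + q$ with $0 \le q < \delta$, Lemma \ref{lemma:periodic_gap} gives $\hat{D}(t) \le \hat{D}^*(t) + qG$ for TODG's realized bound, where $\hat{D}^*(t)$ is the slot-optimal value of \eqref{problem_dual}. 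Hence $D_\epsilon(t) \le \hat{D}^*(t) + qG$ along TODG's sample path.

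Next I would invoke the central fact of Lyapunov optimization: under the i.i.d.\ hypothesis there exists an $\omega$-only policy $\Pi^*$ --- one whose decisions depend only on the realized random states and not on the queue backlogs --- that attains the optimal long-term value $\bar{U}^\mli{opt}$ while keeping every real and virtual queue stable. Since $\Pi^*$ yields feasible slot-level decisions, $\hat{D}^*(t) \le \hat{D}_\epsilon^{\Pi^*}(t)$; and since $\Pi^*$ ignores the current backlogs, conditioning on them turns each queue-weighted term in $\hat{D}_\epsilon^{\Pi^*}(t)$ into $Q(t)$ (resp.\ $Z(t)$) multiplied by an expected arrival-minus-service rate that stability renders nonpositive. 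Consequently $\mathbb{E}\{\hat{D}_\epsilon^{\Pi^*}(t) \mid \text{backlogs}\} \le C - \epsilon\,\bar{U}^\mli{opt}$, and chaining the inequalities produces $\mathbb{E}\{L(t+1) - L(t)\} - \epsilon\,\mathbb{E}\{U(t)\} \le C - \epsilon\,\bar{U}^\mli{opt} + qG$.

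I would then sum this over $t = 0, \dots, T-1$. The Lyapunov differences telescope, and with $L(0) = 0$ and $L(T) \ge 0$ the boundary term drops in the favourable direction. The periodic penalties accumulate as $\sum_{t=0}^{T-1} q(t) = (T/\delta)\cdot\delta(\delta-1)/2 = T(\delta-1)/2$, because within each block of $\delta$ slots the offset $q$ sweeps $0,1,\dots,\delta-1$. Dividing through by $\epsilon T$ and rearranging gives $\bar{U}^\mli{opt} - \frac{1}{T}\sum_{t} \mathbb{E}\{U(t)\} \le \frac{C + (\delta-1)G/2}{\epsilon}$. Finally, concavity of each $g_{n_k}$ and Jensen's inequality yield $\frac{1}{T}\sum_t \mathbb{E}\{g_{n_k}(a^\mli{u}_{n_k}(t) - d^\mli{u}_{n_k}(t))\} \le g_{n_k}\!\big(\frac{1}{T}\sum_t \mathbb{E}\{a^\mli{u}_{n_k}(t) - d^\mli{u}_{n_k}(t)\}\big)$, so that the time-averaged one-slot objective lower-bounds $\bar{U}$ as $T \to \infty$; passing to the limit then gives \eqref{eq:utilitybound}.

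The main obstacle will be the second step: rigorously establishing the optimal $\omega$-only policy and, specifically, that its expected per-queue drift terms are nonpositive. This is exactly where the i.i.d.\ assumption and the feasibility secured by Assumption \ref{assu:dmax} and Theorem \ref{thm:stability} come into play --- one must argue that the delay-state queues $Z^\mli{u}_{n_k}$ and $Z^\mli{s}_{m,k}$, driven by the forced arrivals $\zeta^\mli{u}_{n_k}$ and $\zeta^\mli{s}_{m,k}$, remain stabilizable by a utility-optimal policy so that these extra terms do not destroy nonpositivity. A secondary point requiring care is ensuring the long-term averages defining $\bar{U}$ exist along TODG's trajectory, so that the Jensen step and the limit $T \to \infty$ are legitimate.
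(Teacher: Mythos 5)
Your proposal follows essentially the same route as the paper's proof: a drift-plus-penalty comparison against an optimal stationary randomized ($\omega$-only) policy (the paper cites \cite[Theorem~5.1]{neely2010stochastic} for its existence, with a slack $\sigma$ and the observation that no auxiliary queues are needed since the feasible region of $d^\mli{u}_{n_k}(t)$ is deterministic), absorption of the per-slot suboptimality via Lemma~\ref{lemma:periodic_gap} with the same accumulation $\sum_q q = \delta(\delta-1)/2$ per period, telescoping with $L(0)=0$, and a final Jensen step. The "main obstacle" you flag---stabilizability of the delay-state queues under the comparison policy---is exactly what the paper's inequalities \eqref{eq:utility_1}--\eqref{eq:utility_2} on $\zeta^\mli{u}_{n_k}$ and $\zeta^\mli{s}_{m,k}$ supply, so your outline is correct and matches the paper's argument.
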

\begin{proof}
	\label{proof:optimal_gap}
	Based on \cite[Theorem 5.1]{neely2010stochastic}, it is easy to show that for any fixed $\sigma>0$, there exists a \emph{stationary and randomized policy} that can choose feasible control actions $\bm{\tilde{z}}(t)$, $\bm{\tilde{s}}(t)$, and $\bm{\tilde{d}}(t)$ independent of current queue backlogs in each slot $t$, and satisfy that
	\begin{align}
	\label{eq:utility_1}
	&\sum_{k\in\mathcal{K}}\sum_{n_k\in\mathcal{N}_k}\epsilon\cdot\mathbb{E}\left\{g_{n_k}\left(a^\mli{u}_{n_k}(t)-\tilde{d}^\mli{u}_{n_k}(t)\right)\right\}\nonumber\\
	&- \sum_{m\in\mathcal{M}}\sum_{k\in\mathcal{K}} \beta\epsilon\cdot\mathbb{E}\big\{\tilde{d}^\mli{s}_{m,k}(t)\big\}\ge \bar{U}^\mli{opt}-\sigma,\\
	&\mathbb{E}\big\{\tilde{a}^\mli{u}_{n_k}(t)\big\}\le \mathbb{E}\big\{\tilde{s}_{n_k}(t)\big\}+\mathbb{E}\big\{\tilde{d}^\mli{u}_{n_k}(t)\big\}+\sigma,\\
	&\mathbb{E}\big\{\zeta^\mli{u}_{n_k}(t)\big\}\le \mathbb{E}\big\{\tilde{s}_{n_k}(t)\big\}+\mathbb{E}\big\{\tilde{d}^\mli{u}_{n_k}(t)\big\}+\sigma,\\
	&\mathbb{E}\big\{\tilde{a}^\mli{s}_{m,k}(t)\big\}\le \mathbb{E}\big\{\tilde{u}_{m,k}(t)\big\} + \mathbb{E}\big\{\tilde{d}^\mli{s}_{m,k}(t)\big\}+\sigma,\\
	\label{eq:utility_2}
	&\mathbb{E}\big\{\zeta^\mli{s}_{m,k}(t)\big\}\le \mathbb{E}\big\{\tilde{u}_{m,k}(t)\big\} + \mathbb{E}\big\{\tilde{d}^\mli{s}_{m,k}(t)\big\}+\sigma,
	\end{align}
	for each $k\in\mathcal{K}$, $n_k\in\mathcal{N}_k$, and $m\in\mathcal{M}$. It should be note that in \cite[Theorem 5.1]{neely2010stochastic}, they define a set of \emph{auxiliary queues} to obtain \eqref{eq:utility_1}. However, because the feasible region of $d^\mli{u}_{n_k}(t)$ is deterministic, we can directly derive \eqref{eq:utility_1} via the stationary and randomized policy without additional virtual queues. Then, based on Lemma \ref{lemma:periodic_gap}, combining \eqref{eq:drift_penality}, \eqref{eq:drift_bound} and \eqref{eq:utility_1}-\eqref{eq:utility_2}, and taking $\delta\rightarrow\infty$, we have
	\begin{align}
	\label{eq:utility_3}
	&\mathbb{E}\Big\{L(t+1)-L(t)-\epsilon\cdot\Big(\sum_{k\in\mathcal{K}}\sum_{n_k\in\mathcal{N}_k}g_{n_k}\left(a^\mli{u}_{n_k}(t)-d^\mli{u}_{n_k}(t)\right)\nonumber\\
	&-\sum_{m\in\mathcal{M}}\sum_{k\in\mathcal{K}}\beta d^\mli{s}_{m,k}(t)\Big)\Big\vert\bm{\Theta}(t)\Big\}\le C+q G-\epsilon\cdot \bar{U}^\mli{opt},
	\end{align}
	for $t=p\delta + q$ as in Lemma \ref{lemma:periodic_gap}, where $\bm{\Theta}(t)$ denotes the current queue states. Taking expectations over $\bm{\Theta}(t)$ on both sides of \eqref{eq:utility_3} and summing over $t\in\{0,\dots,p\delta-1\}$ yield 
	\begin{align}
	\label{eq:utility_4}
	&\mathbb{E}\left\{L(p\delta-1)-L(0)\right\}\nonumber\\
	&-\sum_{t=0}^{p\delta-1}\sum_{k\in\mathcal{K}}\sum_{n_k\in\mathcal{N}_k}\epsilon\cdot \mathbb{E}\left\{g_{n_k}\left(a^\mli{u}_{n_k}(t)-d^\mli{u}_{n_k}(t)\right)\right\}\nonumber\\
	&+\sum^{p\delta-1}_{t=0}\sum_{m\in\mathcal{M}}\sum_{k\in\mathcal{K}}\beta\epsilon\cdot\mathbb{E}\left\{d^\mli{s}_{m,k}(t)\right\}\nonumber\\
	&\le (C-\epsilon\cdot \bar{U}^\mli{opt})p\delta+(\delta-1)\delta pG/2.
	\end{align}
	Using the fact $L(0)=0$ and rearranging the terms, \eqref{eq:utility_4} can be written as
	\begin{align}
	\label{eq:utility_5}
	&\frac{1}{p\delta}\sum_{t=0}^{p\delta-1}\epsilon\bigg(\sum_{k\in\mathcal{K}}\sum_{n_k\in\mathcal{N}_k} \mathbb{E}\left\{g_{n_k}\left(a^\mli{u}_{n_k}(t)-d^\mli{u}_{n_k}(t)\right)\right\}\nonumber\\
	&-\sum_{m\in\mathcal{M}}\sum_{k\in\mathcal{K}}\beta\mathbb{E}\left\{d^\mli{s}_{m,k}(t)\right\}\bigg)\ge \epsilon\cdot \bar{U}^\mli{opt}-\Big(C+\frac{(\delta-1)G}{2}\Big).
	\end{align}
	Taking a limit as $q\rightarrow\infty$, and using Jensen's inequality, we have the following
	\begin{align}
	&\sum_{k\in\mathcal{K}}\sum_{n_k\in\mathcal{N}_k}g_{n_k}\left(\bar{a}^\mli{u}_{n_k}-\bar{d}^\mli{u}_{n_k}\right)\nonumber\\
	&-\beta\sum_{m\in\mathcal{M}}\sum_{k\in\mathcal{K}}\bar{d}^\mli{s}_{m,k}\ge \bar{U}^\mli{opt}-\frac{C+(\delta-1)G/2}{\epsilon},
	\end{align}
	thereby completing the proof.
\end{proof}
Theorem \ref{thm:optimal_gap} shows that the achievable system utility of TODG has a controllable gap regarding the period $\delta$ and the weight parameter $\epsilon$ between the optimal network utility. In particular, although a relatively large period $\delta$ can mitigate the high computational cost caused by the task scheduling and channel allocation, it may also result in performance degradation to the system utility. Further, combined with Lemma \ref{lemma:delay} and \ref{lemma:epsilon}, the weight parameter $\epsilon$ achieves a trade-off between latency and utility, \ie, a larger $\epsilon$ provides a lower optimality gap, but may increase the task queue sizes, thereby leading to a poor response time in average. Note that even though we derive the result of Theorem \ref{thm:optimal_gap} in the i.i.d. case, it can be generalized to the non-ergodic case via \cite[Theorem 4.13]{neely2010stochastic}, which is outside the scope of this paper.

\begin{remark}
	Theorem \ref{thm:optimal_gap} also quantifies the impact of the task buffer sizes and the delay requirements on system utility. It is not difficult to see that
	\begin{align}
	\label{eq:epsilon_u_1}
	\epsilon^\mli{u}&>\underbrace{\min_{k,n_k}\left\{\left(1-\frac{2}{\tau^\mli{max}_{n_k}}\right)Q^\mli{u,max}_{n_k}\right\}}_\mathbf{(a)},\\
	\label{eq:epsilon_s_1}
	\epsilon^\mli{s}&>\underbrace{\min_{m,k}\left\{\left(1-\mathop{\max}_{n_k}\left\{\frac{2}{\tau^\mli{max}_{n_k}}\right\}\right)Q^\mli{s,max}_{m,k}\right\}}_\mathbf{(b)}.
	\end{align}
	Due to the fact that $\tau^\mli{max}_{n_k}\ge 2$ for all $k\in\mathcal{K}$ and $n_k\in\mathcal{N}_k$, the following holds
	\begin{align}
	\label{eq:utility_gap}
	U^\mli{opt}-U^*<\frac{C+(\delta-1)G/2}{\min\{\mathbf{(a)},\mathbf{(b)}\}},
	\end{align}
	which implies that large task buffers may improve system utility, while small delay requirements may cause the opposite. 
\end{remark}

\section{Performance Evaluation}
\label{sec:simulations}

In this section, we provide further insights into the algorithm performance via extensive simulation experiments. 

\textbf{Simulation setup.} We consider an edge computing system consisting of $M=3$ edge servers. Each edge server creates $K=3$ virtual machines (VMs), \ie, each server can serve $K=3$ heterogeneous types of computing requests (\eg, image processing, data compression, and mathematical calculation). The number of total user devices $N$ is an integer from $[3,100]$, with that of each type $N_k$ being the same. We set the number of channels $L$ as 20. We set the length of a time slot as 1 second. The transmission rate $c_{n_k,l,m}(t)$ Mbps of channel $l$ between device $n_k$ and edge server $m$ in slot $t$ is uniformly distributed between $[0,1]$. Task arrival rate $a^\mli{u}_{n_k}(t)$ Mbps of $n_k$ across time slots is drawn from continuous uniform distributions. Different types of devices have varying statistic characteristics, \ie, the arrival rates of the three types of tasks follow $U(0.8,1)$, $U(0.4,0.6)$, and $U(0,0.4)$, respectively. Note that the proposed algorithm can offer an efficient solution without any prior knowledge about the stochastic processes. that is, these random variables can have other different stochastic characteristics without affecting the performance of our algorithm. We let the processing rate $u_{m,k}(t)$ of VM $v_{m,k}$ follow $U(0,3)$. The parameters' details are summarized in Table \ref{table:parameters}.

\begin{table}[htpb]
	\caption{Parameters in Simulation}
	\label{table:parameters}
	\centering
	\renewcommand\arraystretch{1.25}
	\resizebox{\columnwidth}{!}{
	\begin{tabular}{l|l}
		\hline
		\textbf{Parameter}                          & \textbf{Value}                                                                                                                        \\ \hline
		\# edge servers ($M$)                       & an integer varying between $[1,6]$                                                                                                    \\
		\# task types ($K$)                         & 3                                                                                                                                     \\
		\# devices ($N$)                            & an integer varying between $[3,100]$                                                                                                  \\
		transmission rates ($c_{n_k,l,m}(t)$)        & real numbers (Mbit/slot) varying between $[0,1]$                                                                                          \\
		task arrival rates ($a^\textit{u}_{n_k}(t)$) & \begin{tabular}[c]{@{}l@{}}real numbers (Mbit/slot) following $U(0.8,1)$, $U(0.4,0.6)$, \\ and $U(0,0.4)$ for types $k=1,2,3$, respectively\end{tabular} \\
		processing rates ($u_{m,k}(t)$)              & 
		\begin{tabular}[c]{@{}l@{}}real numbers (Mbit/slot) following $U(0.5,1.5)$, $U(1,2)$, \\ and $U(1.5,2.5)$ for types $k=1,2,3$, respectively\end{tabular}                                                                                                     \\ \hline
	\end{tabular}
	}
\end{table}

\textbf{Baselines.} We compare our proposed algorithm (TODG) with a centralized stochastic control algorithm (SCA) proposed in \cite{fang2016stochastic} for heterogeneous task offloading. We also consider a greedy control algorithm (GA), which selects $\min\{N,L\}$ user devices in each slot with the shortest task queues. Each user will send as many tasks as possible to the edge server with a minimal task backlog. Due to the lack of channel allocation mechanisms in SCA and GA, we set that they randomly assign channels for the selected user devices.

\textbf{Implementation.} We implement the code in MATLAB R2019b on a server with two Intel$^\circledR$ Xeon$^\circledR$ Golden 5120 CPUs and one Nvidia$^\circledR$ Tesla-V100 32G GPU.

\textbf{System utility and delay under different $\bm{\delta}$ and $\bm{\epsilon}$.} To validate the achievable system utility as demonstrated in \eqref{eq:utility_gap}, we fix $N=45$ and run the experiments under different periods $\delta$ and weight parameters $\epsilon$. We repeat the experiments ten times and plot the results in Fig. \ref{fig:epsilon_delta_utility}. It can be seen from Fig. \ref{subfig:epsilon_utility} that the achieved system utility by TODG becomes larger as $\epsilon$ increases. More specifically, the system utility increases sharply with the increase of $\epsilon$ at the beginning, and then the increasing speed decreases when $\epsilon$ gets large. The underlying rationale is that the lower bound of the system utility is a concave function with respect to $\epsilon$, as shown in \eqref{eq:utilitybound}. However, Fig. \ref{subfig:epsilon_delay} also illustrates that the response delay increases linearly with $\epsilon$, so the weight parameter $\epsilon$ enables a trade-off between the utility and latency. On the other hand, we study the impact of period $\delta$ on performance. Combining Fig. \ref{subfig:epsilon_utility} and Fig. \ref{subfig:delta_utility}, it is easy to see large $\delta$ would lead to a utility degradation, especially for a small $\epsilon$ value, which indicates the correctness of the analytical results in Theorem \ref{thm:optimal_gap}. Besides, Fig. \ref{subfig:delta_utility} and Fig. \ref{subfig:delta_delay} also imply that the decreasing rate of the utility induced by $\delta$-periodic strategy is relatively low to $\delta$, while the delay can remain stable as $\delta$ increases. For example, even though we make the optimal offloading decisions every 15 slots, the utility only decreases to 9.46, and the average maximal delay is 77.2. By comparison, the maximal utility and delay achieved by SCA are 9.28 and 223 in this setting, as shown in Fig. \ref{fig:comparason}. Therefore, the $\delta$-periodic offloading strategy is expected to reach a high system utility with delay guarantees and inexpensive computational cost.

\begin{figure}
	\centering
	\subfigure[Impact of $\epsilon$ on utility. ]{\label{subfig:epsilon_utility}\includegraphics[width=0.225\textwidth]{./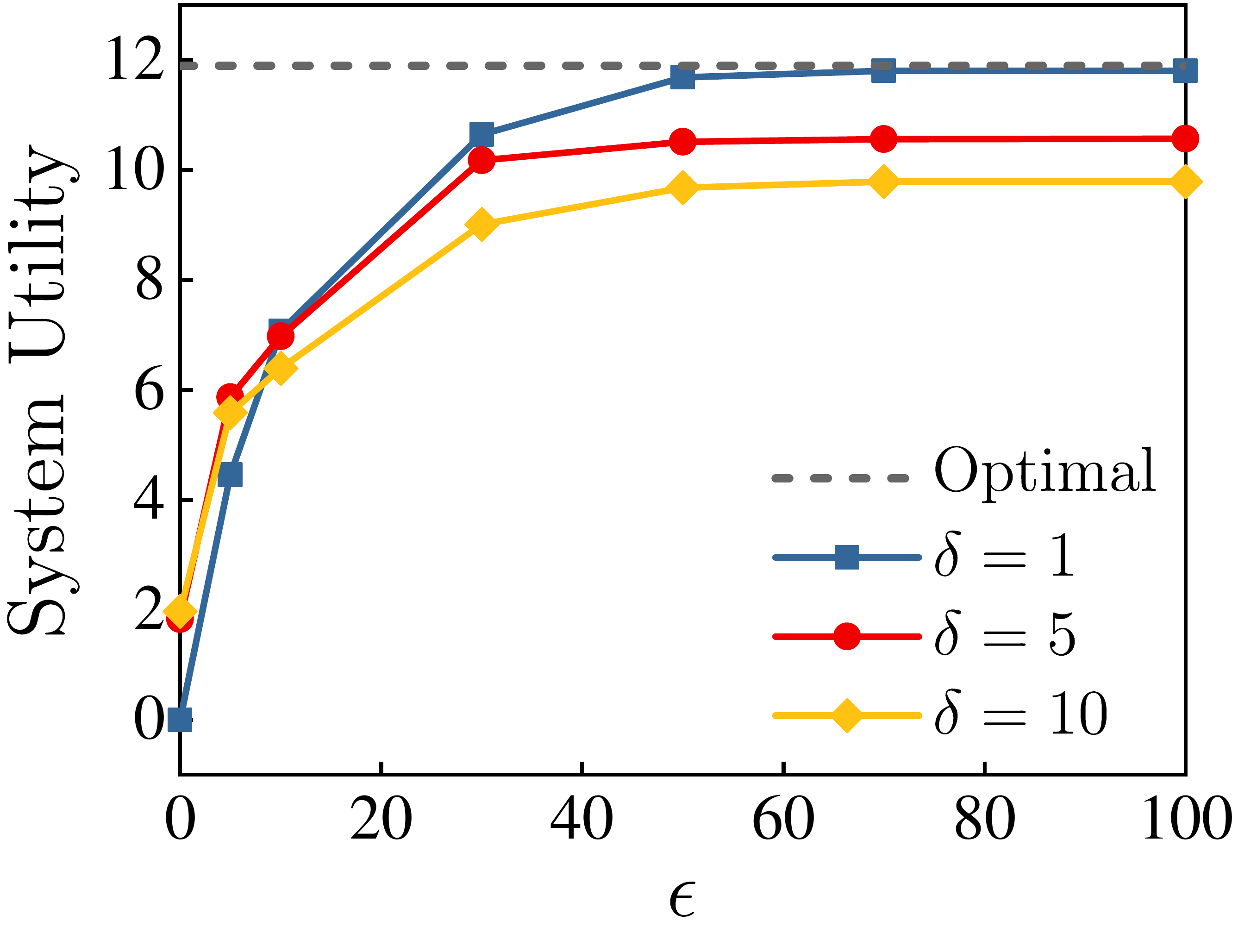}}
	\subfigure[Impact of $\delta$ on utility.]{\label{subfig:delta_utility}\includegraphics[width=0.225\textwidth]{./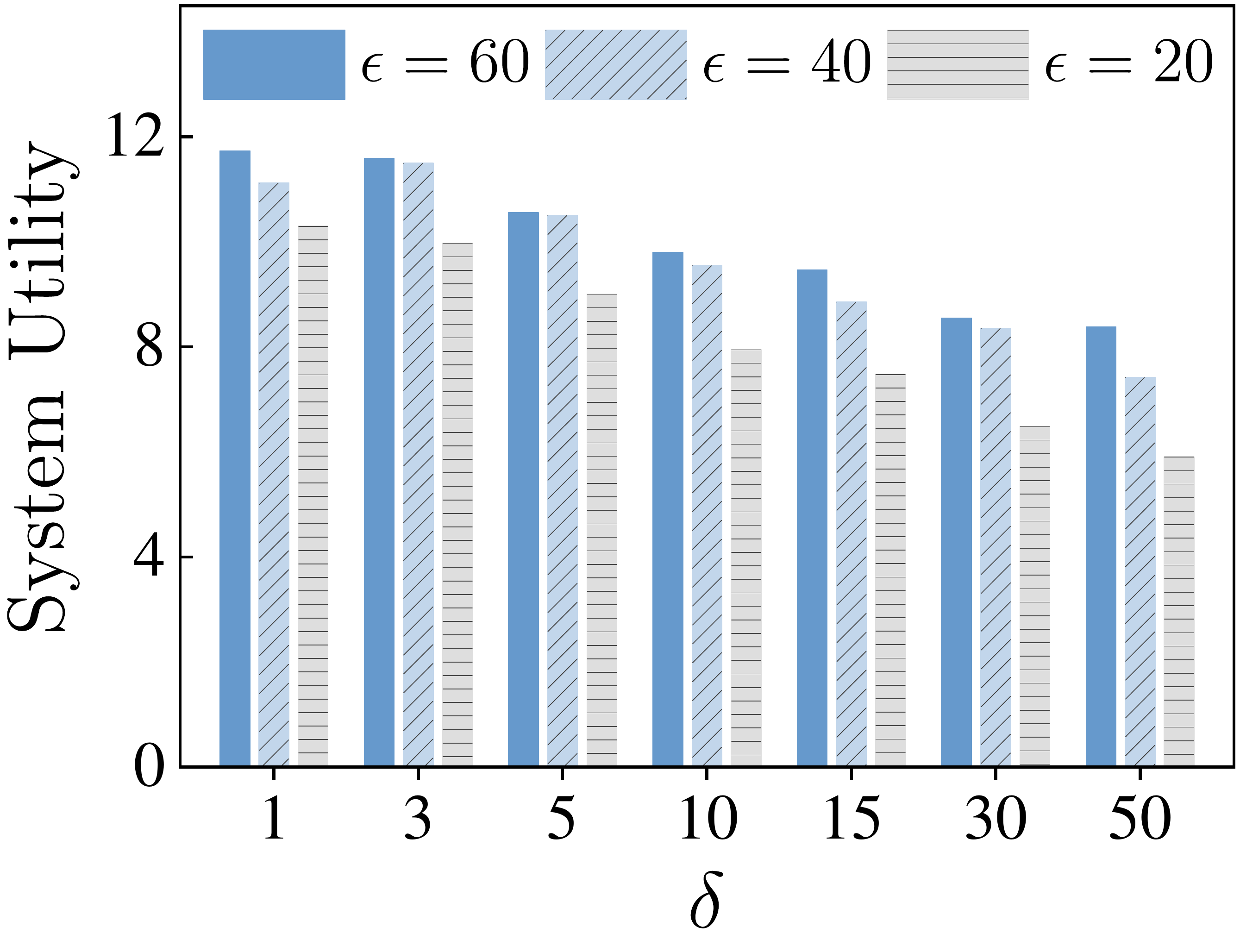}}
	
	\subfigure[Impact of $\epsilon$ on delay. ]{\label{subfig:epsilon_delay}\includegraphics[width=0.225\textwidth]{./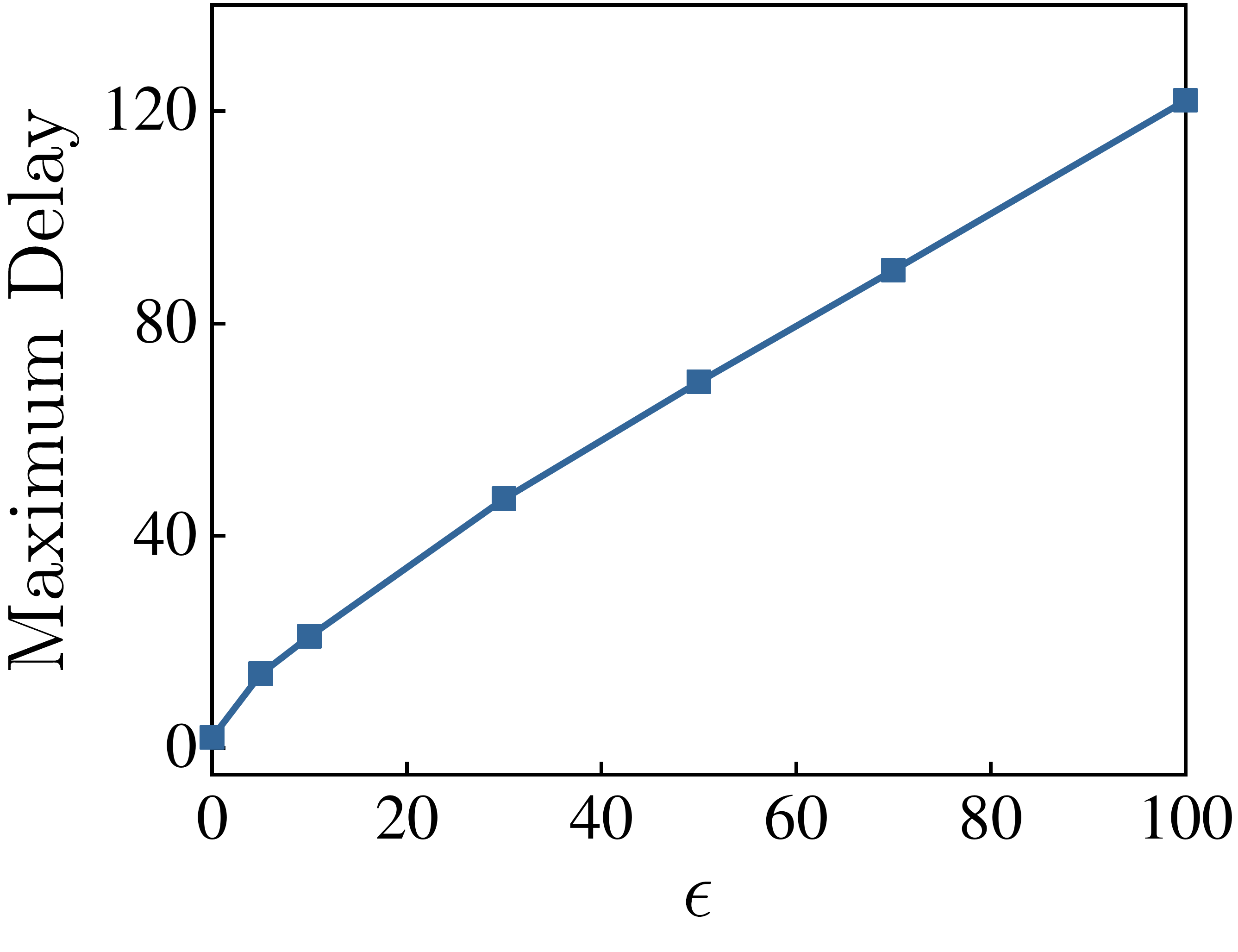}}
	\subfigure[Impact of $\delta$ on delay.]{\label{subfig:delta_delay}\includegraphics[width=0.225\textwidth]{./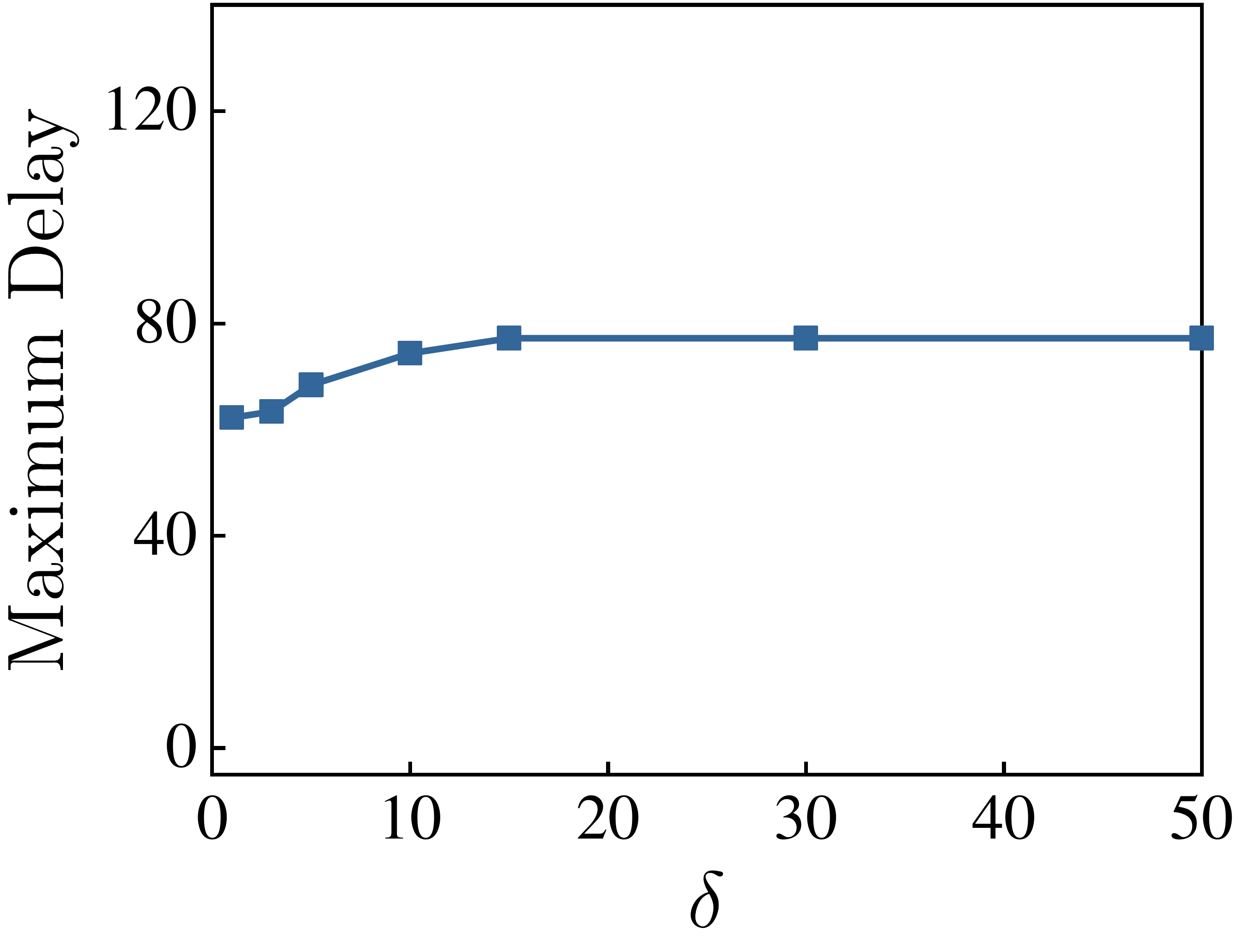}}
	\caption{System utility and delay under different $\delta$ and $\epsilon$.}
	\label{fig:epsilon_delta_utility}
\end{figure}

\begin{figure}
	\centering
	\subfigure[User task queue backlogs. ]{\label{subfig:user_queue}\includegraphics[width=0.225\textwidth]{./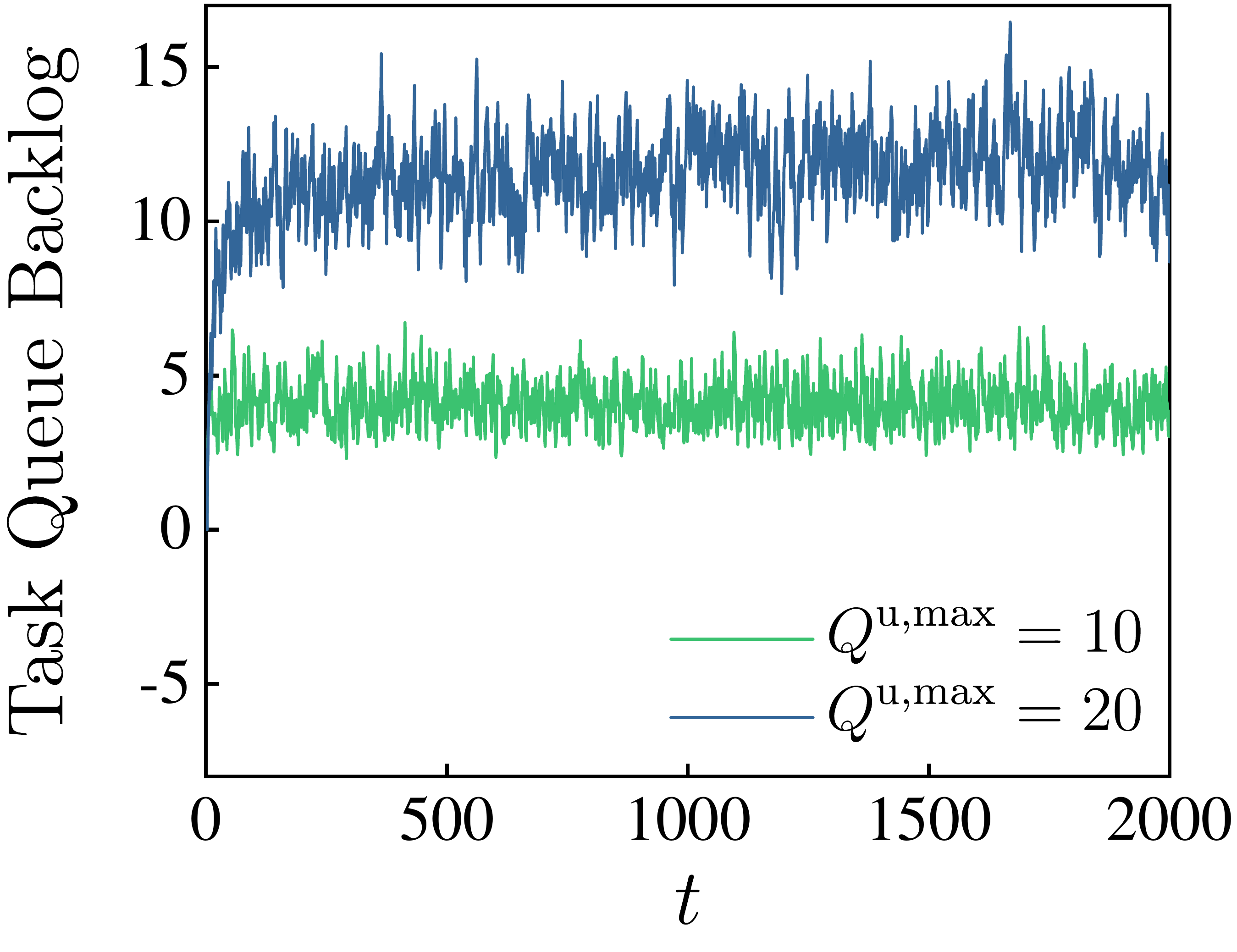}}
	\subfigure[Server task queue backlogs.]{\label{subfig:server_queue}\includegraphics[width=0.225\textwidth]{./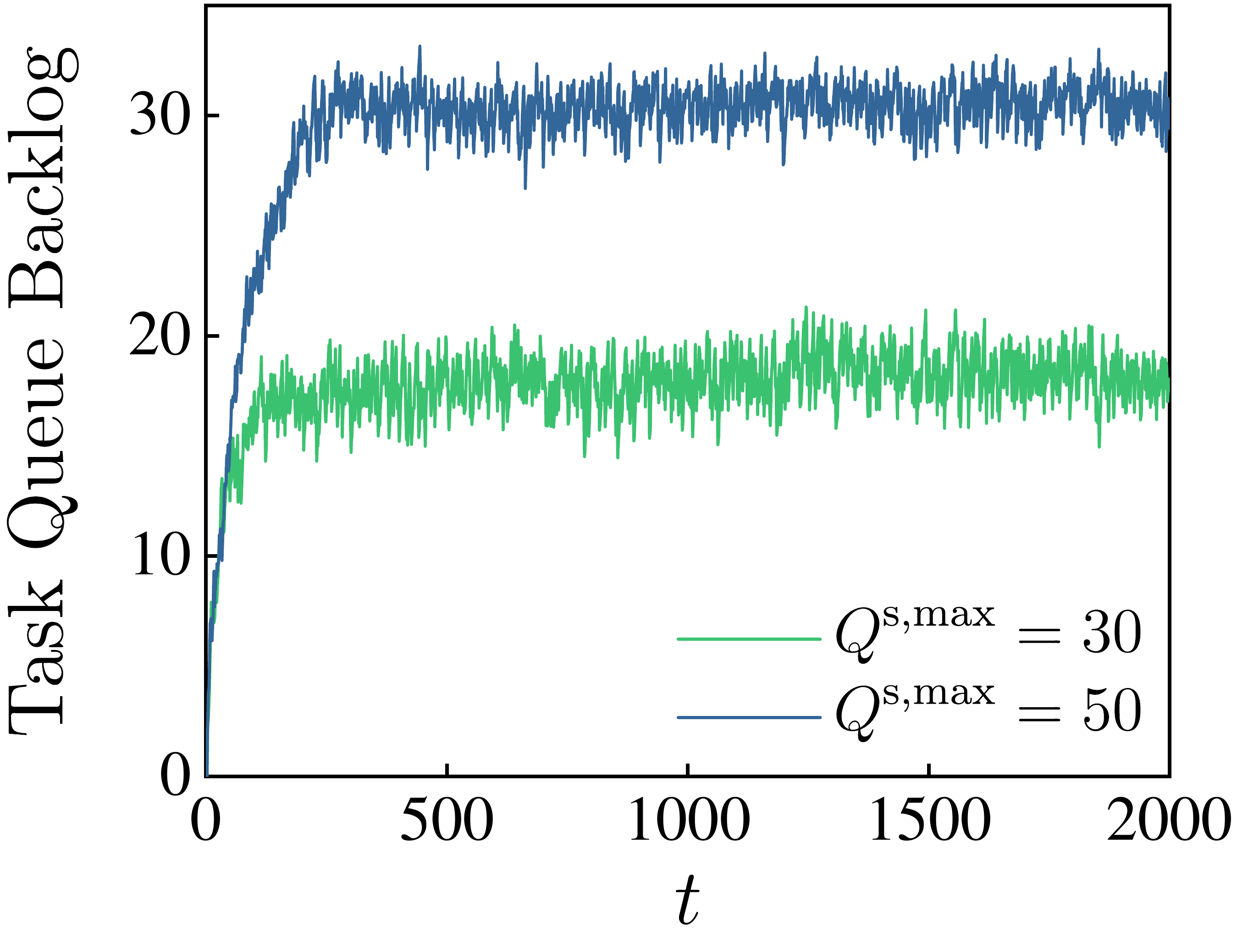}}
	\subfigure[System utility under different delay requirements.]{\label{subfig:tradeoff}\includegraphics[width=0.225\textwidth]{./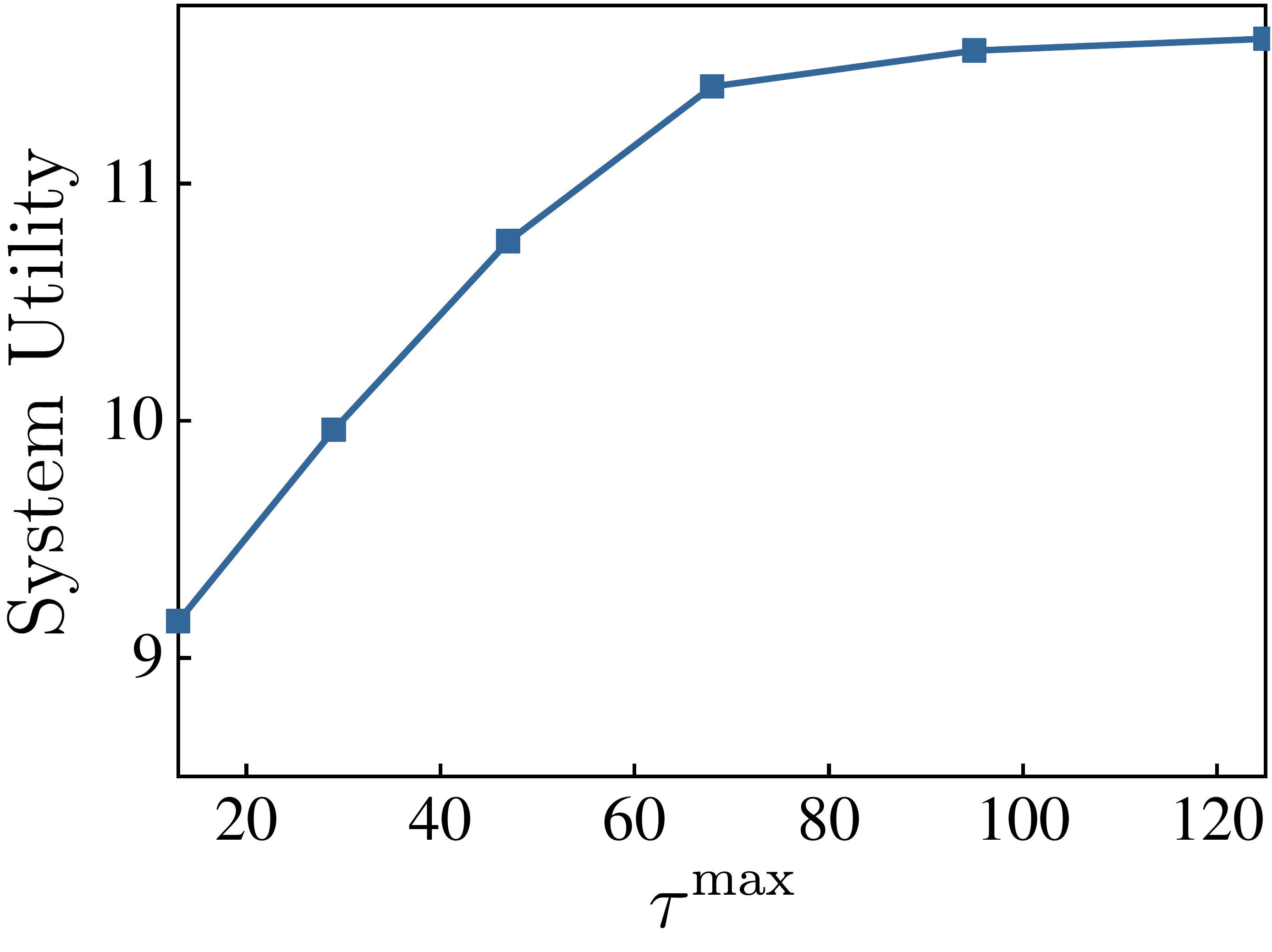}}
	\subfigure[Impact of $\zeta$ on utility and delay.]{\label{subfig:zeta}\includegraphics[width=0.225\textwidth]{./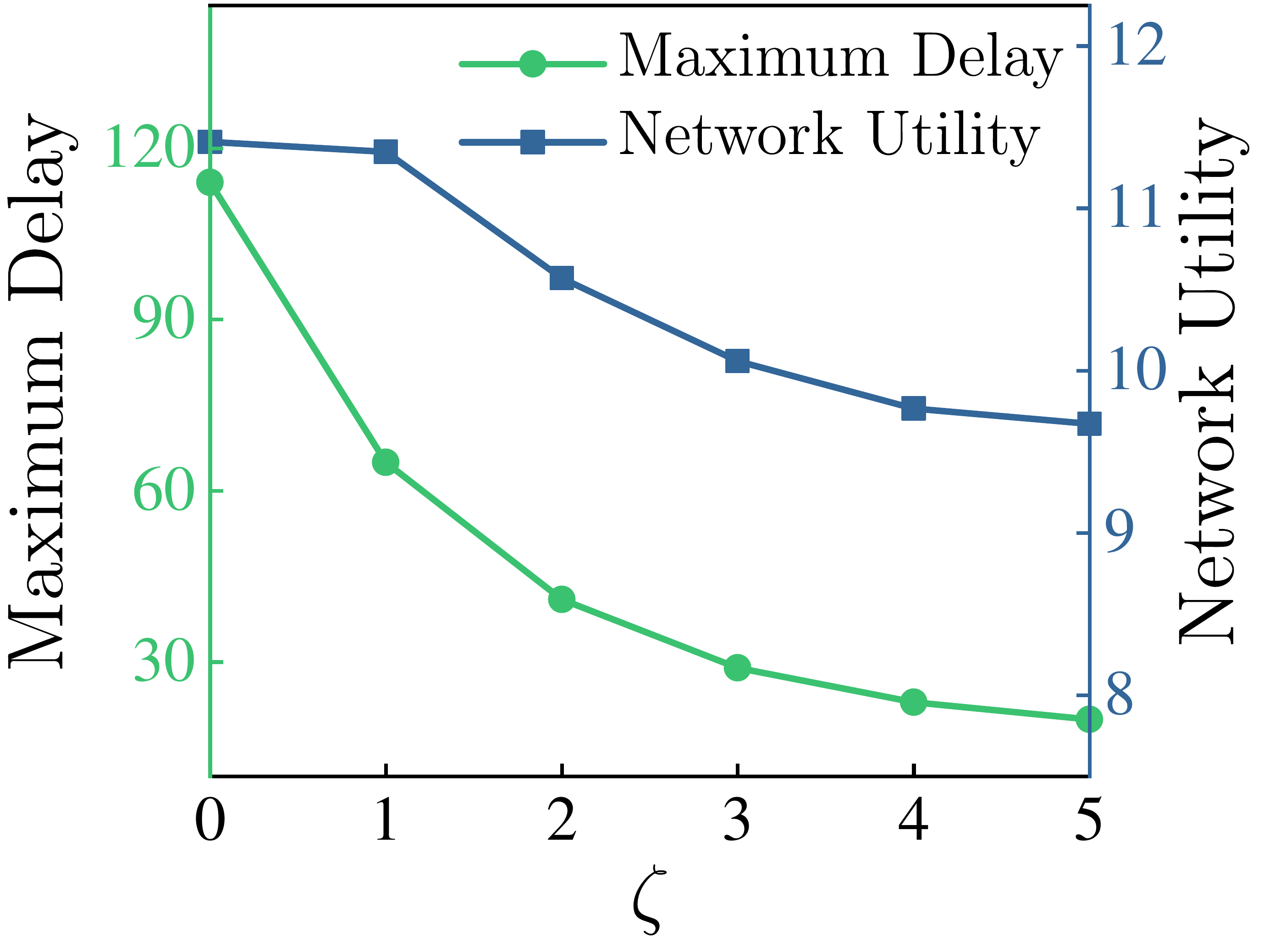}}
	\caption{System stability and delay requirements.}
	\label{fig:stability_delay}
\end{figure}

\textbf{System stability and delay requirements.} Fig. \ref{subfig:user_queue} and Fig. \ref{subfig:server_queue} show the dynamics of the task queues on user devices and edge servers under different task buffer sizes over 2000 slots. As illustrated in \ref{subfig:user_queue} and Fig. \ref{subfig:server_queue}, under fixed buffers, TODG can well guarantee the system stability, and the task backlogs are kept at a low level to provide a short response latency. We also evaluate the impact of the delay constraints on the system utility. For ease of exposition, we set the delay constraints to be the same among user devices, \ie, $\tau^\mli{max}_{n}=\tau^\mli{max}$ for all $n\in\mathcal{N}$. It can be easily seen from Fig. \ref{subfig:tradeoff}, strict delay requirements would lead to a utility degradation. The reason is that if the delay requirements outweigh the scheduling and processing capability of the system, it will drop the outdated tasks, which will exacerbate the degradation when the delay constraints are tight. In addition, we study the impact of the penalty parameter $\zeta$ on performance. Similarly, we set all $\zeta^\mli{u}_n=\zeta^\mli{s}_{m,k}=\zeta$ for convenience. As shown in Fig. \ref{subfig:zeta}, with the increase of $\zeta$, we obtain a lower response delay but also lead to a worse system utility, which validates the results of Lemma \ref{lemma:delay} and Remark \ref{remark:zeta}. Meanwhile, it is not difficult to see that the role of $\zeta$ is opposite to $\epsilon$. Actually, their relationship has been given by Theorem \ref{thm:stability} and \ref{thm:optimal_gap}.

\begin{figure}[ht]
	\centering
	\subfigure[Comparison of system utility.]{\label{subfig:comp_utility}\includegraphics[width=0.225\textwidth]{./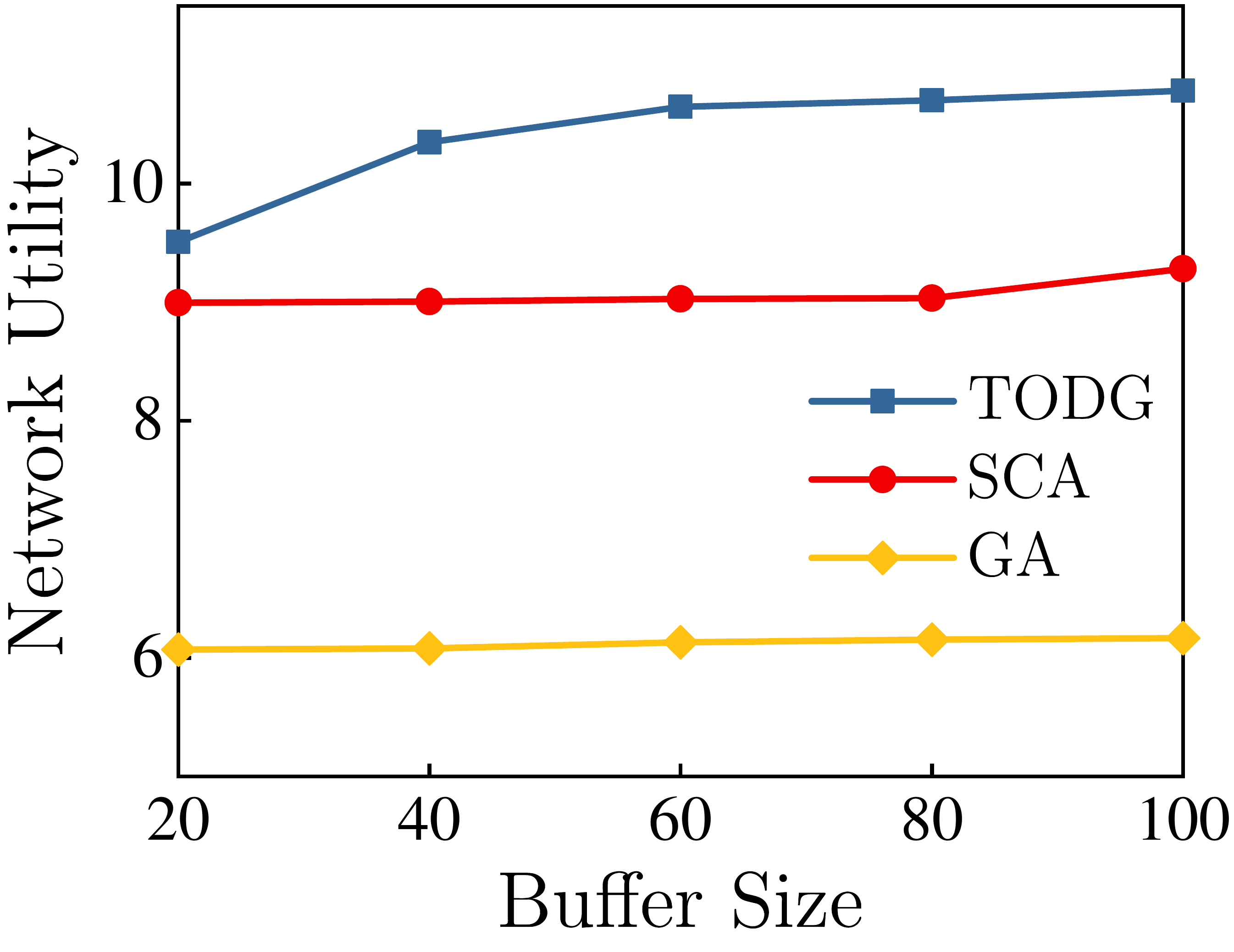}}
	\subfigure[Comparison of delay.]{\label{subfig:comp_delay}\includegraphics[width=0.225\textwidth]{./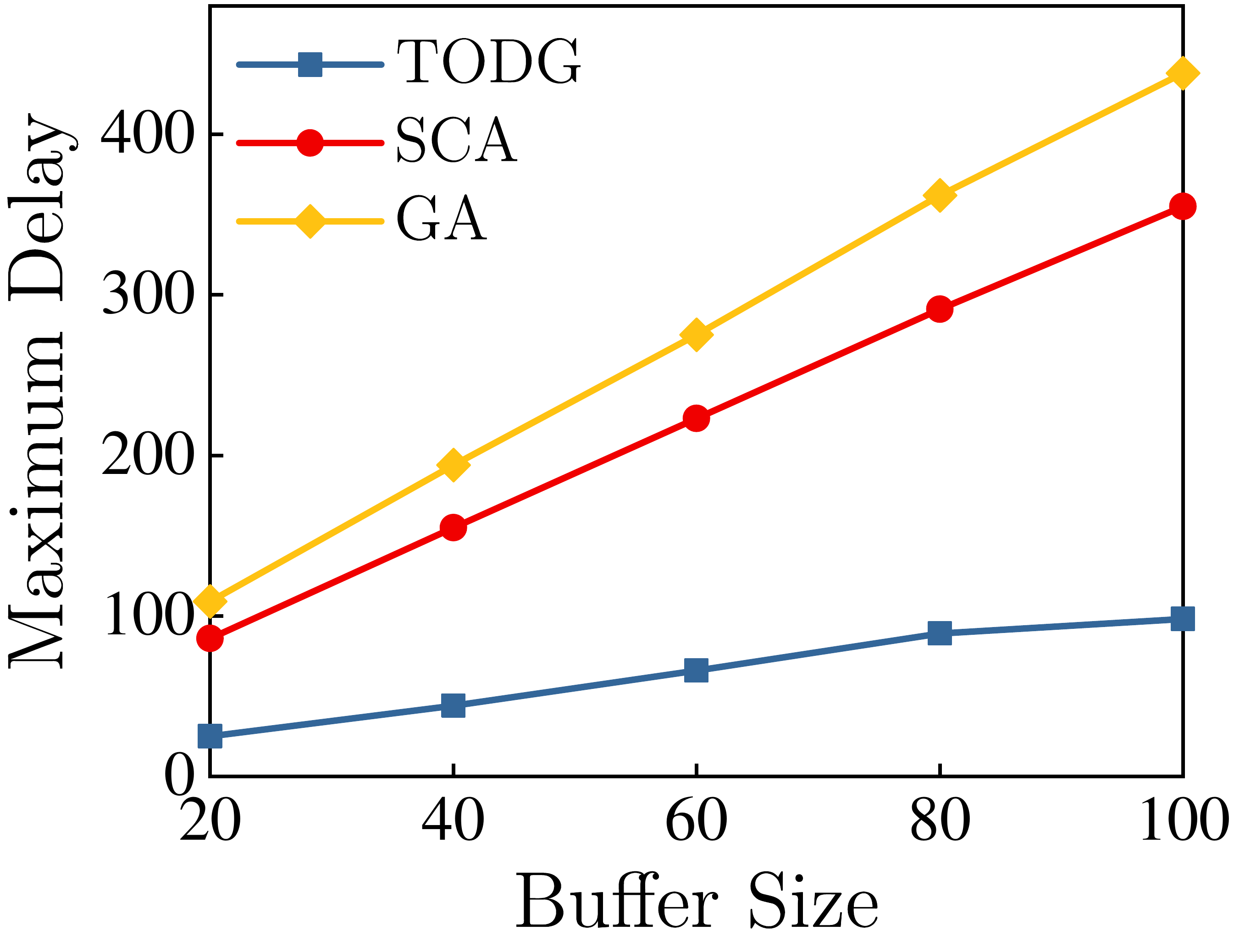}}
	\caption{Comparison of utility and delay among algorithms under different buffer sizes.}
	\label{fig:comparason}
\end{figure}

\textbf{Comparison of performance among different algorithms.} In order to compare the performance among TODG and baseline algorithms, we vary the task buffer sizes on both user devices and edge servers and show the corresponding system utility and task maximal response delay in Fig. \ref{fig:comparason}. As illustrated in Fig. \ref{fig:comparason}, TODG outperforms SCA and GA with fixed buffer sizes while significantly reducing the response delay. The reason is that TODG enables more effectively exploiting the stochastic features of communication resources and computational capabilities on edge servers. Meanwhile, by jointly scheduling different types of tasks, TODG will prioritize the tasks with high delay requirements. Moreover, it can be seen from Fig. \ref{subfig:comp_utility} that large buffer sizes can bring a performance increase of TODG since it offers more flexibility for task scheduling. However, due to the lack of effective scheduling mechanisms to reduce response delay, SCA and GA can only achieve limited system utility with larger task buffers, resulting in worse latency caused by the longer queuing time.

\begin{figure}[ht]
	\centering
	\subfigure[$K=3$.]{\label{subfig:comp_M_K3}\includegraphics[width=0.32\columnwidth]{./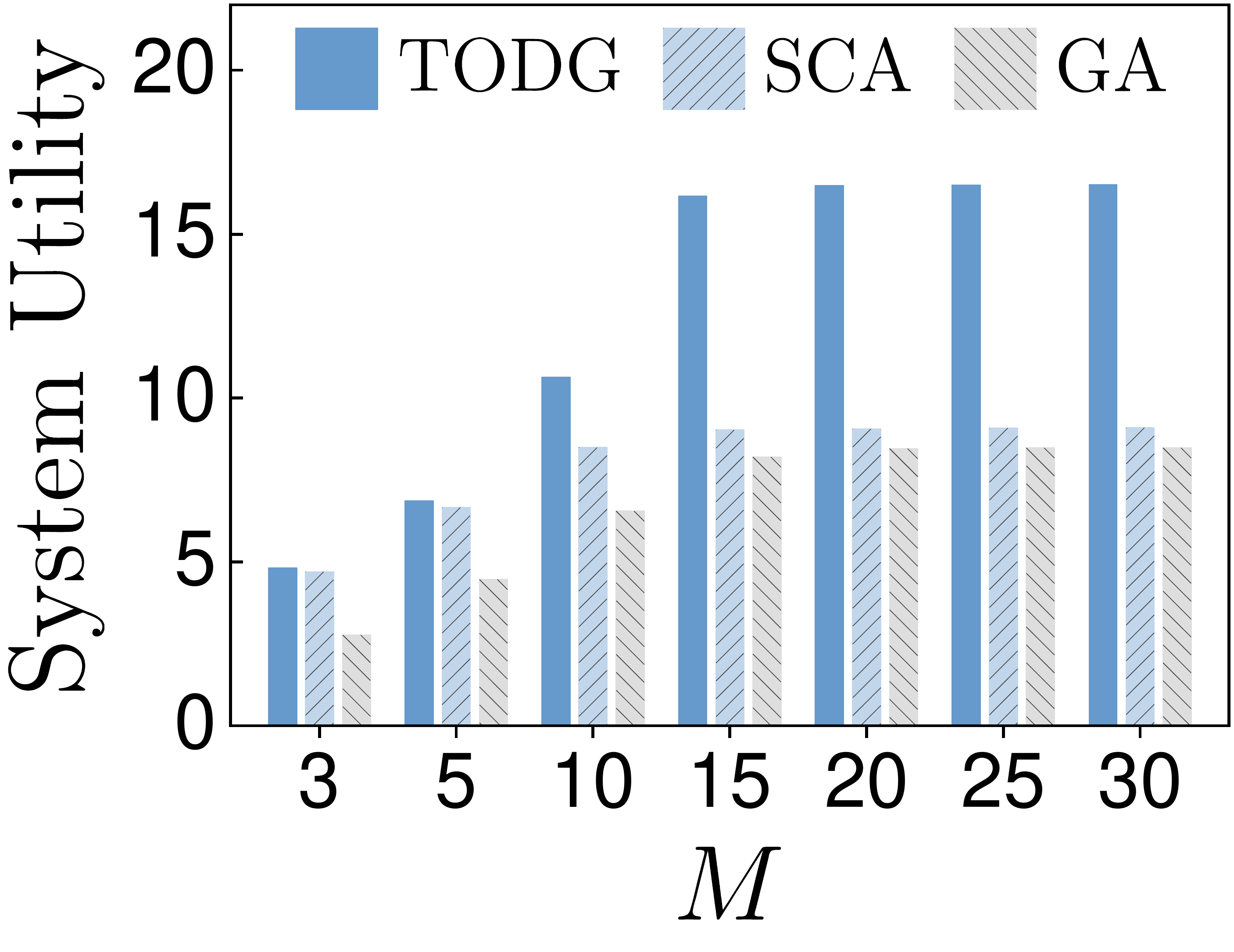}}
	\subfigure[$K=6$.]{\label{subfig:comp_M_K6}\includegraphics[width=0.32\columnwidth]{./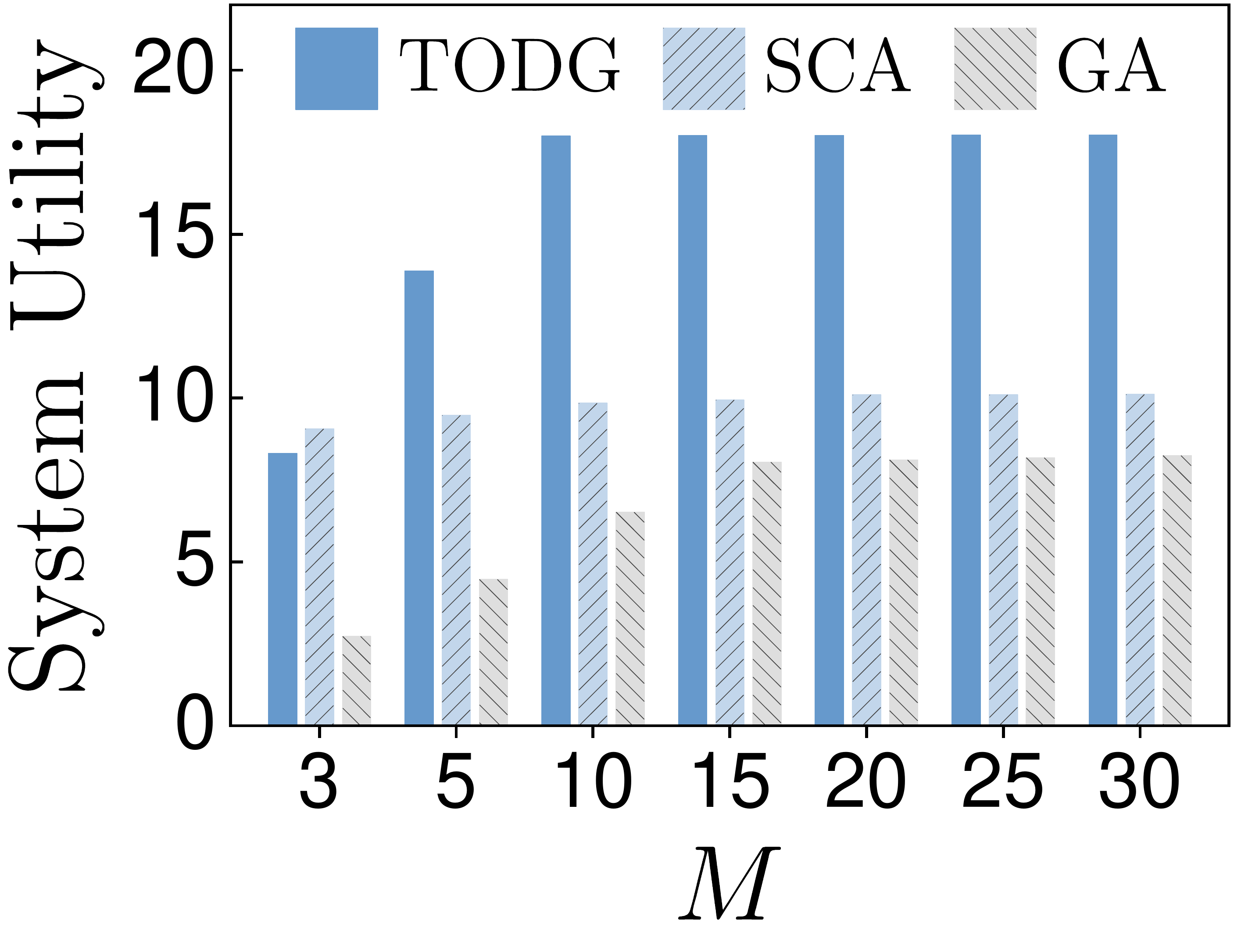}}
	\subfigure[$K=9$.]{\label{subfig:comp_M_K9}\includegraphics[width=0.32\columnwidth]{./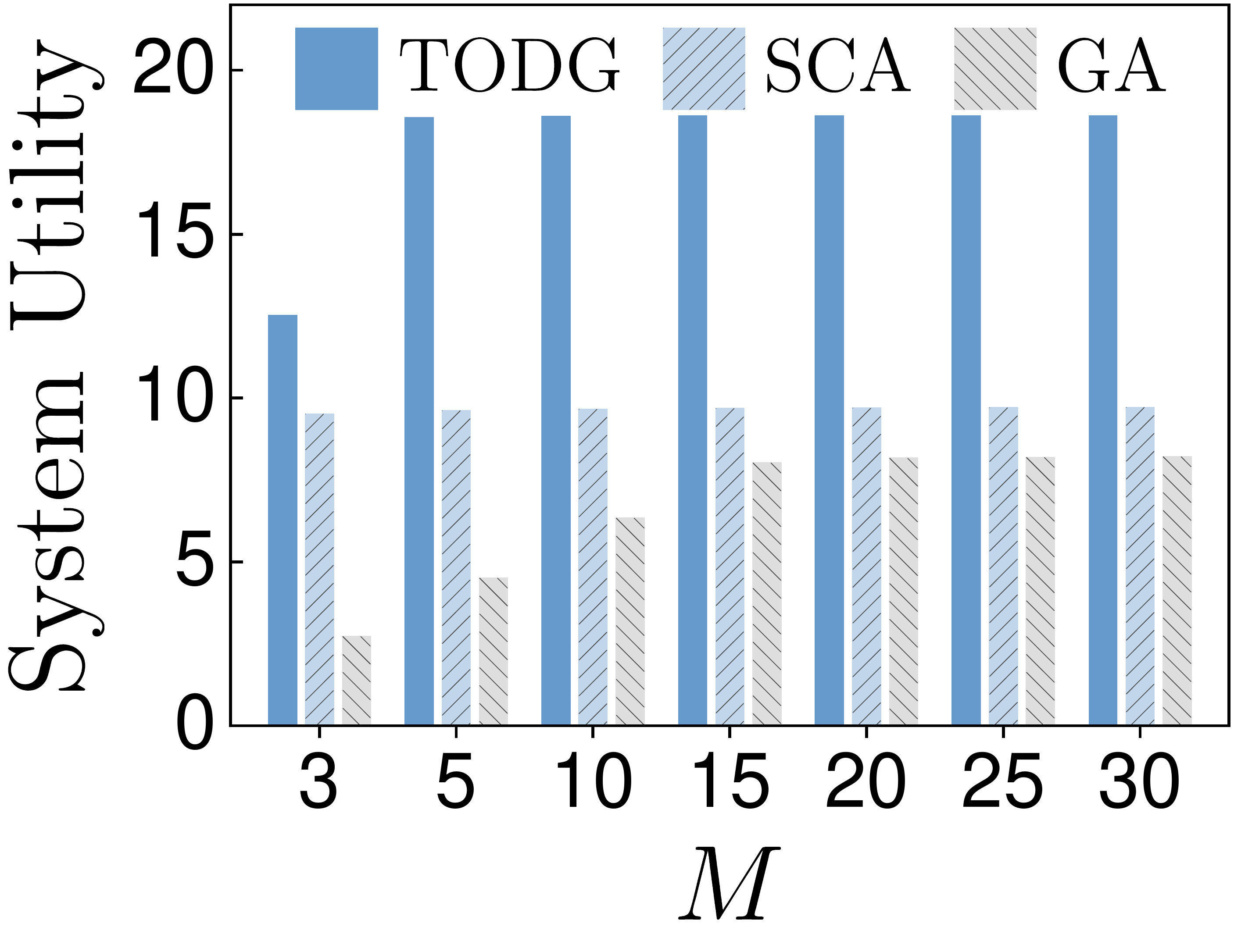}}
	\vspace{-0.5em}
	\caption{Comparison of utility under different network scales.}
	\label{fig:comp_M}
\end{figure}

\textbf{Scalability of TODG.} To evaluate the scalability of TODG, we vary the number of servers $M$ from 3 to 30 and task types $K$ from 3 to 9, while fixing the number of each type's devices as $N_k=10$ and channels as $L=20$. As Fig. \ref{fig:comp_M} shows, with the increase of $M$, TODG outperforms the baseline algorithms, and the gap becomes larger. It indicates that TODG can make full use of the insufficient communication bandwidth and limited computational resources, especially for large-scale networks. Besides, even with sufficient edge servers, GA and SCA hardly improve their performance, because it is bottlenecked by the limited channel capacities (see Fig. \ref{subfig:range_C} for more details). 

\begin{figure}[ht]
	\centering
	\subfigure[Impact on the performance.]{\label{subfig:self_M_K}\includegraphics[width=0.32\columnwidth]{./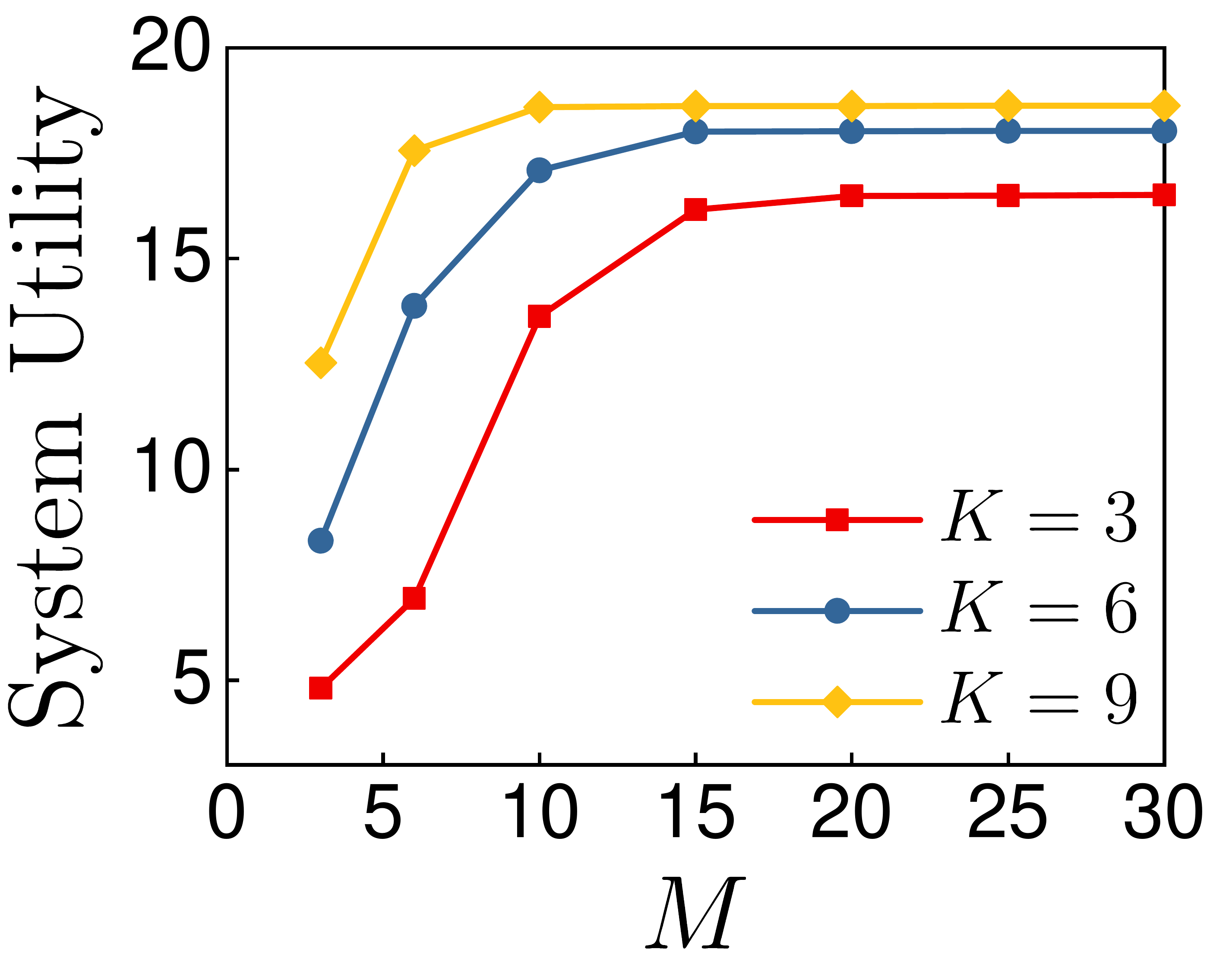}}
	\subfigure[Impact of the number of servers on the running time.]{\label{subfig:run_time_M}\includegraphics[width=0.32\columnwidth]{./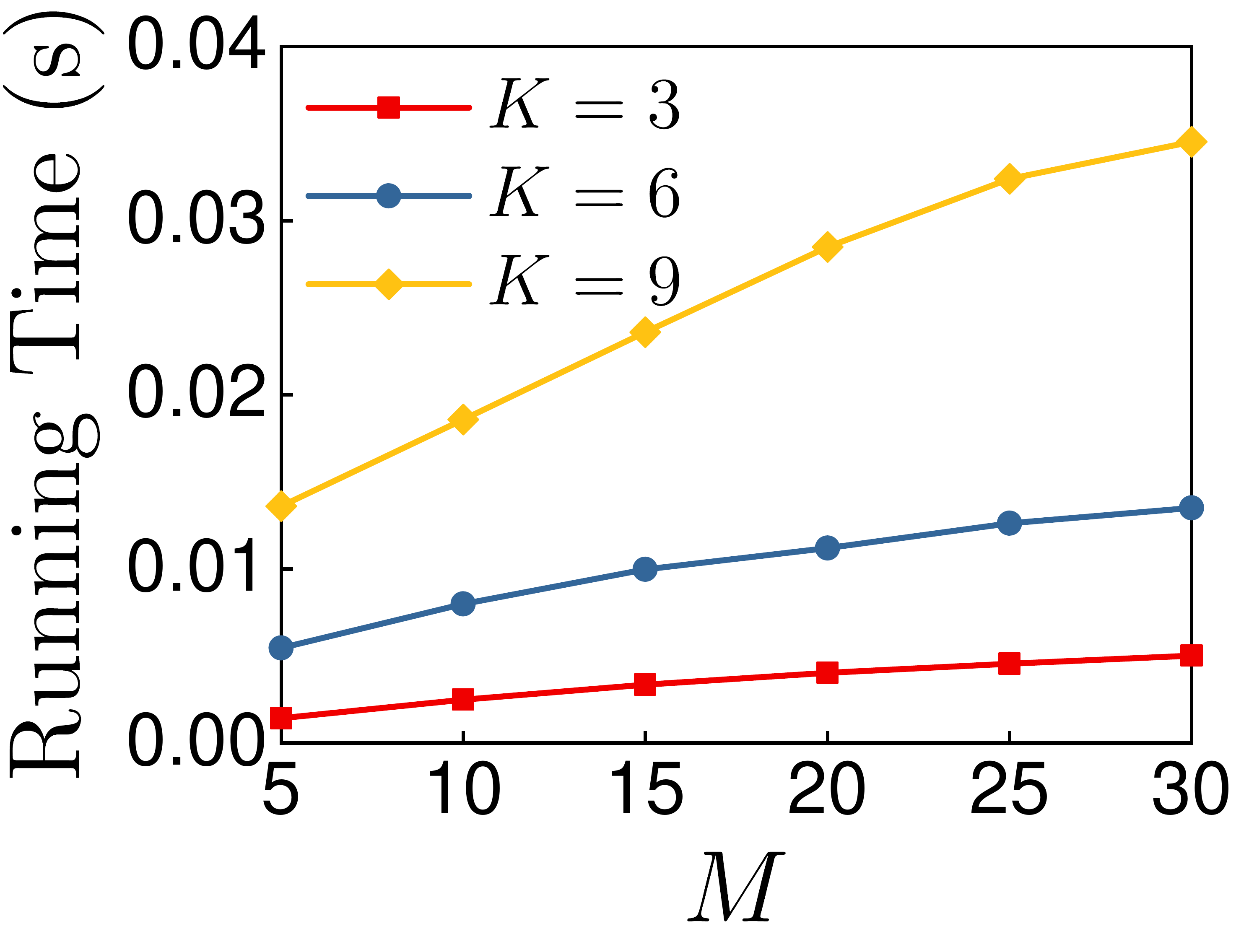}}
	\subfigure[Impact of the number of devices on the running time.]{\label{subfig:run_time_N}\includegraphics[width=0.32\columnwidth]{./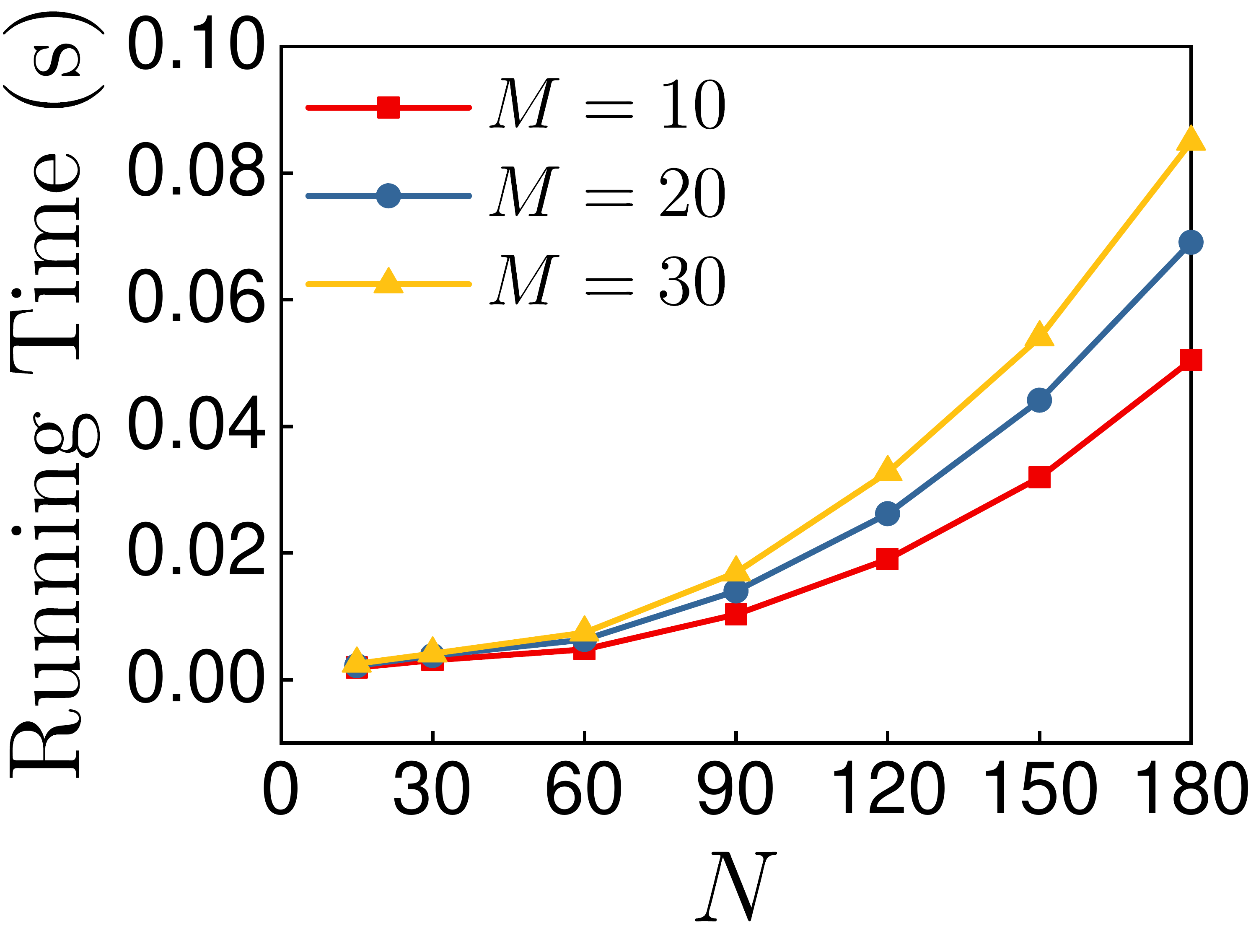}}
	\caption{Scalability of TODG.}
	\label{fig:scala}
\end{figure}

As illustrated in Fig. \ref{fig:scala}, the utility achieved by TODG first increases sharply with larger $M$, then the increasing rate slows down. The reason is that a certain number of edge servers suffice to provide computing resources for the local tasks. In addition, fixing $\delta=1$, we report TODG's average running time in each slot under different $N$, $K$ and $M$ in Fig. \ref{fig:scala}. Recall that $N=10K$ in this series of experiments. The results validate the theoretical computation complexity of TODG, which is provided in Sec. 4.2.3. Notably, due to the periodic strategy, we do not need to carry out channel assignment in each slot, and thus the running time can be significantly reduced.

\begin{figure}[ht]
	\centering
	\subfigure[Impact of the number of user devices.]{\label{subfig:range_N}\includegraphics[width=0.225\textwidth]{./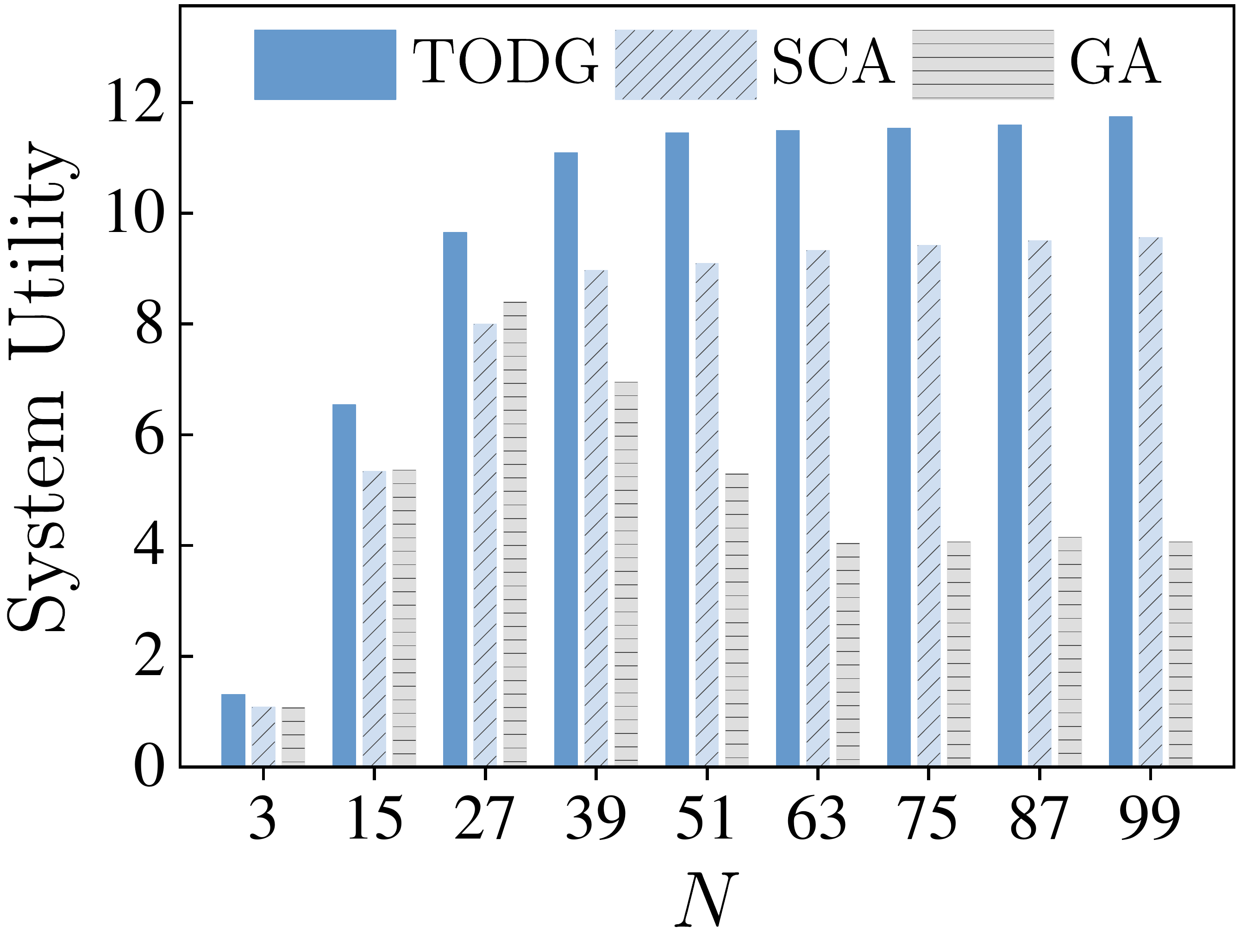}}
	\subfigure[Impact of the number of task types.]{\label{subfig:comp_K_M3}\includegraphics[width=0.225\textwidth]{./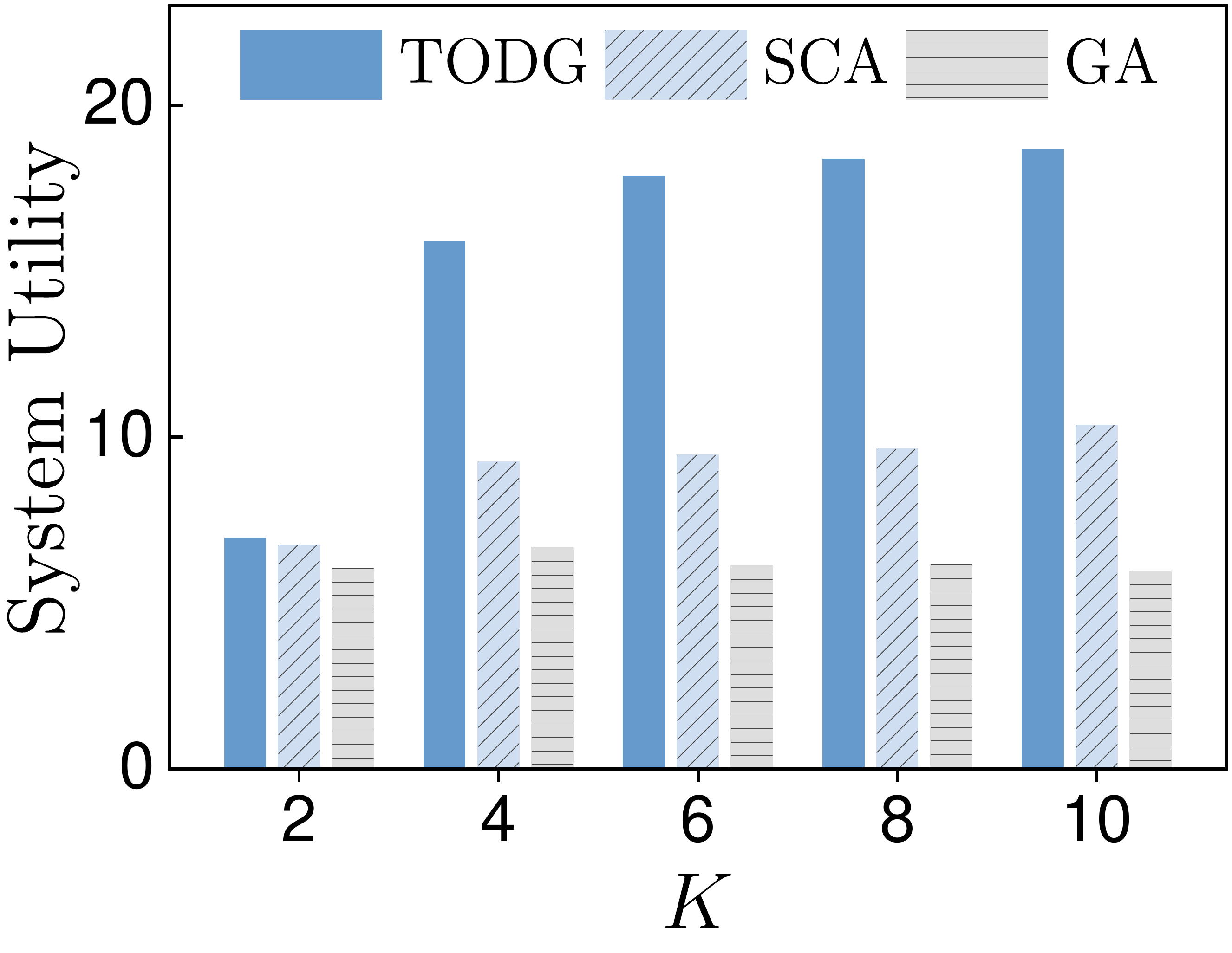}}
	
	\subfigure[Impact of channel capacities.]{\label{subfig:range_C}\includegraphics[width=0.225\textwidth]{./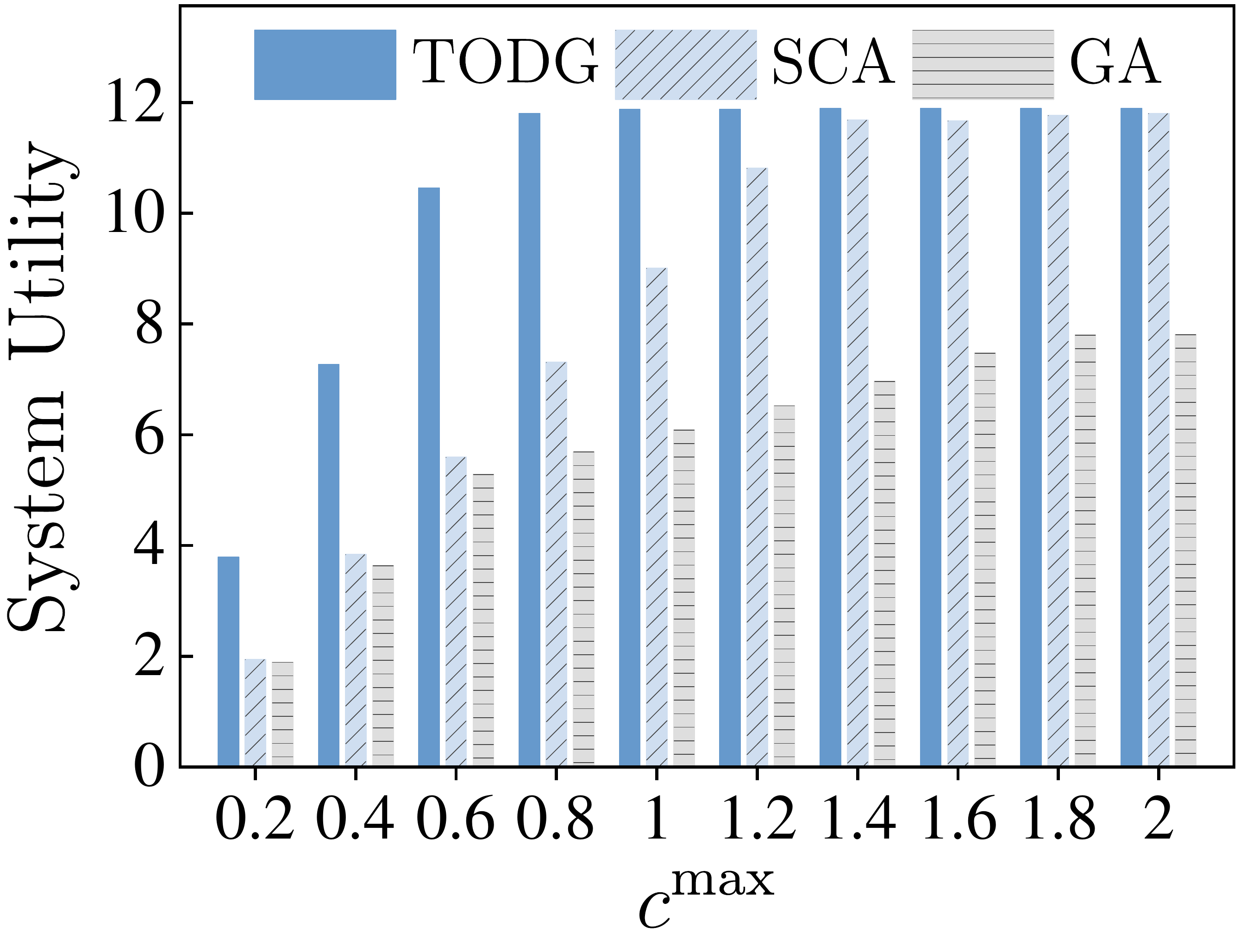}}
	\subfigure[Impact of service rates on edge servers.]{\label{subfig:range_U}\includegraphics[width=0.225\textwidth]{./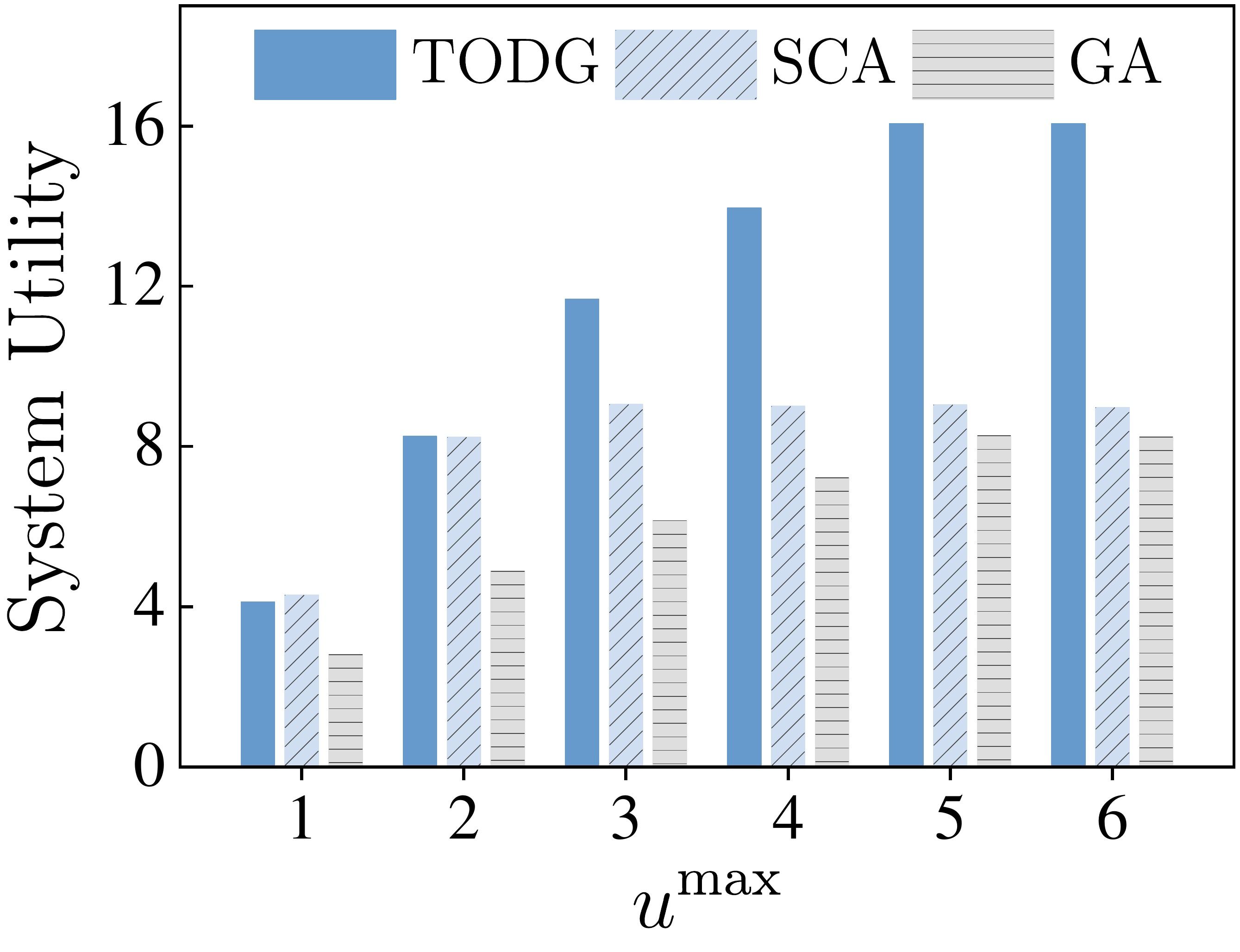}}
	\caption{Comparison of system utility among algorithms under different parameters.}
	\label{fig:comparason_paramter}
\end{figure}

\textbf{Impact of parameters.} We further study the impact of different parameters on the system utility. We first run the simulation experiments under varying numbers of user devices. As illustrated in Fig. \ref{subfig:range_N}, TODG achieves better performance with more user devices. On the contrary, since SCA and GA cannot exploit the limited and stochastic communication and computational resources, the contention among user devices would hinder the further improvement of system utility. Especially for GA, the contention even causes a performance decrease because it neglects the fairness among devices. Then, we vary the number of task types $K$ from 3 to 30, while fixing the number of each type's devices as $N_k=10$ and channels as $L=20$. For convenience, let the task arrival of each type and the processing rate of each VM follow $U(0,1)$ and $U(0,3)$, respectively. As Fig. \ref{subfig:comp_K_M3} shows, with the increase of $K$, TODG outperforms the baseline algorithms, and the gap becomes larger. After that, we evaluate the performance under different transmission rates and plot the results in Fig. \ref{subfig:range_C}. As shown in Fig. \ref{subfig:range_C}, TODG substantially outperforms the baseline algorithms in the cases of poor channel capacities, which indicates that TODG can fully utilize the limited communication resources. Besides, we vary the processing capabilities of edge servers to show the corresponding impact on the performance. It can be seen from Fig. \ref{subfig:range_U} that TODG can vastly improve the performance of SCA and GA, especially with powerful edge servers, which implies the importance of effective task scheduling and channel allocation. Due to the inefficient utilization of communication resources, a large number of tasks cannot be transmitted to edge servers timely. Thus, despite more powerful servers, the system utility of SCA and GA cannot be further improved. In contrast to SCA and GA, since TODG enables exploiting the communication resources, the bottleneck of TODG is the processing capabilities on edge servers. After eliminating this limitation, TODG shows its great advantages of effective task scheduling.  

\section{Conclusion}
\label{sec:conclusion}
In this paper, we have proposed a distributed online task offloading algorithm, called TODG, which jointly allocates resources and schedules the offloading tasks with delay guarantees while also achieving inexpensive computational cost. We further provide comprehensive theoretical insights into TODG and particularly show it can balance the near-optimal system utility and computational complexity. Extensive simulation results validate the effectiveness of TODG and demonstrate that TODG outperforms the baseline algorithms, especially in the cases with poor channel conditions. There are many interesting directions for future work. First, it is of interest to consider the task migration problem in high mobility scenarios into TODG. Secondly, our simulation results indicate that the optimality gap is much smaller than the theoretical bound. It is intriguing to get a more deep understanding of this phenomenon. Moreover, it remains largely open to incorporate the learning methods (\eg, online learning) into the task scheduling for edge computing.


%

%

%
%

\ifCLASSOPTIONcaptionsoff
\newpage
\fi



%
\bibliographystyle{IEEEtran}
\bibliography{reference}

\end{document}